\declaretheorem[name=Lemma]{lem}
\declaretheorem[name=Result]{result}
\definecolor{orange}{rgb}{0.8, 0.3,0}
\definecolor{red}{rgb}{0.7, 0,0}
\newcommand{\comm}[2]{\ifnum\Comments=1\textcolor{#1}{#2}\fi}
\newcommand{\varXiv}[2]{\ifnum\arXiv=1#1\else#2\fi}
\newcommand{\Var}{\sigma^2}
\newcommand{\idxset}[1]{I_#1}
\newcommand{\loss}{\ell}
\newcommand{\E}{\mathbb{E}}
\newcommand{\hcats}{\mathcal{H}}
\newcommand{\mcats}{\mathcal{M}}
\newcommand{\dcats}{\mathcal{D}}
\newcommand{\kept}{\mathcal{R}}
\newcommand{\x}{\mathbf{x}}
\newcommand{\y}{\mathbf{y}}
\newcommand{\z}{\mathbf{z}}
\newcommand{\hcat}{C}
\newcommand{\mcat}{K}
\newcommand{\hcatvec}[1]{\mathbf{#1}_H}
\newcommand{\mcatvec}[1]{\mathbf{#1}_M}
\newcommand{\hcati}[1]{\hcat_{#1}}
\newcommand{\mcati}[1]{\mcat_{#1}}
\newcommand{\hloss}{\loss_H}
\newcommand{\mloss}{\loss_M}
\newcommand{\modvar}{\Delta^2}
\newcommand{\Prob}{P}
\newcommand{\fn}{f}
\newcommand{\fnopt}{\fn^*}
\newcommand{\mfn}{f_{M}}
\newcommand{\mfnobliv}{\mfn^\text{obliv}}
\newcommand{\mfniter}{\mfn^{\textnormal{iter}}}
\newcommand{\mfnopt}{\fnopt_{M}}
\newcommand{\hfn}{f_{H}}
\newcommand{\hfnopt}{\fnopt_{H}}
\newcommand{\fx}{a}
\newcommand{\fy}{b}
\DeclareMathOperator*{\argmin}{arg\,min}
\title[Designing Algorithmic Delegates]{Designing Algorithmic Delegates:\\The Role of Indistinguishability in Human-AI Handoff}
\author{Sophie Greenwood}
\email{sjgreenwood@cs.cornell.edu}
\affiliation{
  \institution{Cornell University}
  \city{Ithaca}
  \state{NY}
  \country{USA}
}
\author{Karen Levy}
\email{karen.levy@cornell.edu}
\affiliation{
  \institution{Cornell University}
  \city{Ithaca}
  \state{NY}
  \country{USA}
}
\author{Solon Barocas}
\email{solon@microsoft.com}
\affiliation{
  \institution{Microsoft Research}
  \city{New York}
  \state{NY}
  \country{USA}
}
\author{Hoda Heidari}
\email{hheidari@cmu.edu}
\affiliation{
  \institution{Carnegie Mellon University}
  \city{Pittsburgh}
  \state{PA}
  \country{USA}
}
\author{Jon Kleinberg}
\email{kleinberg@cornell.edu}
\affiliation{
  \institution{Cornell University}
  \city{Ithaca}
  \state{NY}
  \country{USA}
}
\begin{abstract}
As AI technologies improve, people are increasingly willing to \textit{delegate} tasks to AI agents. In many cases, the human decision-maker chooses whether to delegate to an AI agent based on properties of the specific instance of the decision-making problem they are facing. Since humans typically lack full awareness of all the factors relevant to this choice for a given decision-making instance, they perform a kind of \textit{categorization} by treating indistinguishable instances -- those that have the same observable features -- as the same. In this paper, we define the problem of designing the optimal algorithmic delegate in the presence of categories. This is an important dimension in the design of algorithms to work with humans, since we show that the optimal delegate can be an arbitrarily better teammate than the optimal standalone algorithmic agent. The solution to this optimal delegation problem is not obvious: we discover that this problem is fundamentally combinatorial, and illustrate the complex relationship between the optimal design and the properties of the decision-making task even in simple settings. Indeed, we show that finding the optimal delegate is computationally hard in general. However, we are able to find efficient algorithms for producing the optimal delegate in several broad cases of the problem, including when the optimal action may be decomposed into functions of features observed by the human and the algorithm. Finally, we run computational experiments to simulate a designer updating an algorithmic delegate over time to be optimized for when it is actually adopted by users, and show that while this process does not recover the optimal delegate in general, the resulting delegate often performs quite well.
\end{abstract}
\begin{document}

\maketitle
\renewcommand{\shortauthors}{Greenwood et al.}

\section{Introduction}

Algorithmic agents -- such as robots, AI assistants, and chatbots -- are increasingly effective tools. These agents play an important role in recent optimism about the potential for autonomous entities on the Internet who can perform tasks on behalf of a user \citep{deng2023,yao2022webshop,drouin2024workarena,zhou2023webarena}. In many cases, a human user works with the algorithmic agent to complete a task, and the quality of the algorithmic agent must be measured by the performance of the combined human-algorithm system, or the \textit{team performance} \citep{bansal2019a, bansal2021}. Recent work has demonstrated that the algorithmic agent with the highest standalone accuracy does not necessarily achieve the optimal team performance \citep{bansal2021, hamade2024chesshandoff}.
Thus, it is important and natural to ask: \textit{How can one design the optimal algorithmic teammate?}

The answer to this depends on the structure of the team. There are many ways a human may partner with an algorithmic agent; we focus on teams where the machine
is a \textit{delegate} to which the human may hand off the choice of action entirely \citep{stout2014, lai2022, milewskilewis1997delegatingtosa}. That is, a delegate is an AI agent the human employs to take an action \textit{without reviewing its choice}. In cases where it is impossible or prohibitively time-consuming for the human and algorithm to communicate before selecting each action, the algorithm must be a delegate. Moreover, as AI technology improves, users may prefer to delegate decisions to increase efficiency. 
Examples of algorithmic delegates include AI assistants that may be dispatched to complete tasks on the Internet, such as shopping, answering emails, scheduling meetings, and booking tickets \citep{bookedaitool, wingaitool}. Another set of examples are semi-autonomous vehicles, where a driver must decide whether to drive herself or to delegate to an AI system \citep{gm2024,tesla2024}. In medicine, supervised autonomous surgical robots have been prototyped \citep{shademan2016surgery}.

A key factor in evaluating a delegate's team performance is determining in a given scenario whether a human user will hand off the choice of action to the delegate, and if not, which action she will take. Both of these decisions are subject to a fundamental limitation: the human will not have complete information about the specific instance of the decision-making problem in front of her, and as a result cannot make either decision perfectly \citep{bansal2019a}. For example, 
the user of an AI shopping agent on the Internet may not have details about past trends in the online market, 
and a driver may not know about rough road conditions ahead or a nearby hidden driveway that tends to cause accidents. The human must then make the best decision possible given what she can observe about the decision-making instance. We refer to these sets of human-indistinguishable instances as  \textit{categories}, evoking the body of work in cognitive science that studies humans treating distinct objects or circumstances in the same manner \cite{mervis1981, ashby2005,smith2014}.
The algorithmic agent -- hereafter, the \textit{machine} -- similarly has incomplete information \citep{alur2024}.
The AI shopping agent may not have full knowledge of the user's preferences, 
and the autonomous vehicle may not know about a large sports event in the neighborhood and other social context that might affect the vehicle's capacity to drive safely. Thus the machine also only observes its own \textit{categories} of states. We illustrate human and machine categories in Figure \ref{fig:grid:cats:nn}. 

\begin{figure}[ht]
    \centering
    \includegraphics[width=0.8\linewidth]{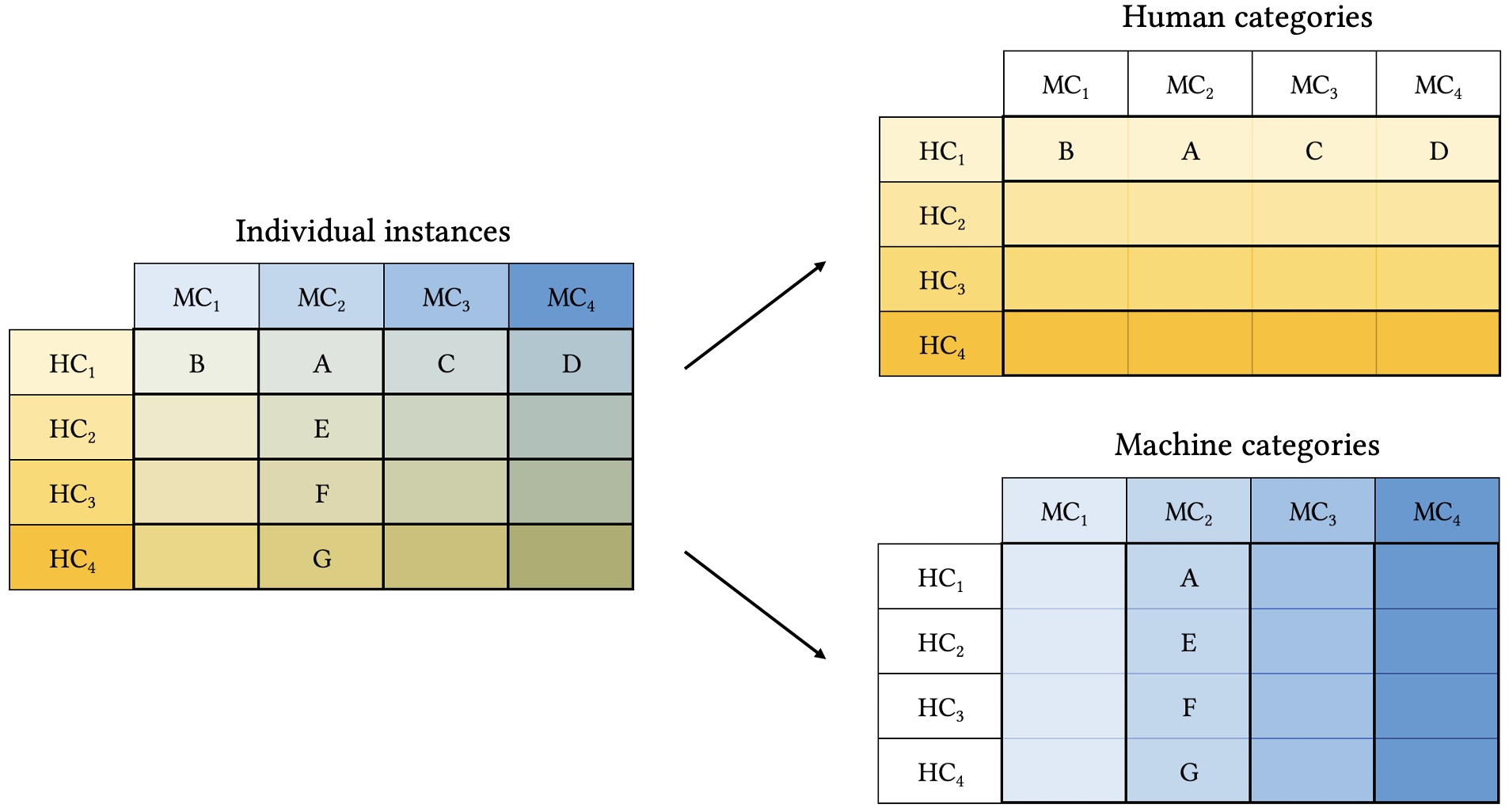}
    \caption{Visualization of human and machine categories. A decision-making instance corresponds to a cell in the grid. The human observes some information about the instance, and based on that information can identify a set $\textnormal{HC}_i$ ($i \in \{1,2,3,4\}$ in this example) of possible instances, or \textit{human category}, corresponding to a row in the grid. When encountering a scenario in $\textnormal{HC}_i$, the human must make decisions -- whether to delegate, and if not, which action to take -- to achieve the best \textit{expected} performance across the possible instances in $\textnormal{HC}_i$. The machine likewise observes some other information about the decision-making instance, which enables it to identify a set $\textnormal{MC}_j$ ($j \in \{1,2,3,4\}$ in this example) of possible instances it might be in, or \textit{machine category}, corresponding to a column in the grid. The machine selects actions that perform best across this set. For example, consider instance A laying at the intersection of $\textnormal{HC}_1$ and $\textnormal{MC}_2$. The human cannot distinguish this instance from B, C, and D, and the machine cannot distinguish it from E, F, and G. Note that this figure is a simplification: in general, there might be zero or many instances in the intersection of $\textnormal{HC}_i$ and $\textnormal{MC}_j$ for each $i, j$.}
    \Description{On the left, two four by four grids with rows $C_1$ to $C_4$ and columns $K_1$ to $K_4$. The top left grid has rows colored in different shades of yellow; the bottom left grid has columns colored in different shades of blue. On the right, the same four by four grid but with each entry having a unique color.}
    \label{fig:grid:cats:nn}
\end{figure}

We are interested in the optimal design of a machine in human-AI collaboration settings of the following nature, which we formalize in Section \ref{sec:model}. Given a decision-making instance, a human observes the human category the decision-making instance falls in, and compares the machine's expected performance across all instances falling in that category to her own performance. If the human has better expected performance, she will act herself. Otherwise, she will hand off\footnote{We use the verbs ``delegate'', ``adopt'', ``use'', and ``hand off'' interchangeably.} the decision to the machine. The machine will then observe the machine category the instance falls in, and will take some action. The machine's designer -- a third party -- must specify a choice of action for the machine to take in each machine category. We take the perspective of this designer, and ask what this choice of action should be: how should we design the machine so that the team performance is as good as possible?

To establish intuition about this optimal design problem, we consider the perspective of the designer of an AI agent for booking flights. An initial design choice would be an agent which maximizes team performance across all settings. However, the designer might discover that users never use this agent when booking business travel (for example, because the optimal decision depends heavily on the user's business travel guidelines and itinerary and is relatively independent of market data). It is then best for users if the designer removes business travel use cases from the training data, and re-optimizes the machine to perform well in human categories where the human actually adopts the AI agent (here, the category of ``personal travel''), potentially sacrificing the performance of the agent in booking business travel. This process need not stop here: this new machine may be used in a different set of categories, so that further re-optimizing the machine leads to further improvements to the team's performance. 
This suggests an interplay between the {\em delegation choices} made by the human user and the {\em design choices} made by the creator of the machine; we show in Section \ref{sec:simulations} that these dynamics converge to a machine with {\em locally optimal} team performance. 
Our interest is in the optimization problem where the designer of the machine anticipates this interplay between delegation choices and design choices without actually running the dynamics, and instead tries to directly design a machine whose use in a delegation scenario results in {\em globally optimal} team performance. 

\subsection{Overview of results} In this work, we investigate how to design the optimal algorithmic delegate in the presence of non-identical human vs. machine categories. 
We have four main contributions.
\begin{itemize}
    \item We develop a  model of human-AI delegation which captures categories, and find a surprisingly clean \textit{characterization} of the optimal design problem as a search problem for the optimal subset of categories to perform well on. Thich reveals a fundamental combinatorial structure of this problem, and leverages the principle of designing AI tools to perform well in categories where they are actually used \citep{bansal2021, vafa2024}.
    \item 
    To build intuition, we consider special cases where the optimal design problem can be fully and tractably characterized; we do this for the case where the human and machine each observe a \textit{single feature}. We show that the optimization landscape is surprisingly rich even in this basic case.
    \item We use our characterization result to examine the \textit{computational tractability} of designing the optimal delegate. We discover an efficient algorithm for finding the optimal delegate when the optimal actions and category distributions are each decomposable in a sense we make precise. This includes settings where the human and machine features of the decision-making instance are realized independently and the optimal action is a linear function of these features. We also find an efficient algorithm for delegation settings in which the human or machine only observes a small number of features. However, we show that it is NP-hard to find an optimal delegate in general.
    \item Finally, we run \textit{computational experiments} to simulate the process of iteratively developing a machine based on observing where it is used, and show that this does not find an optimal delegate in general, but may perform quite well.
\end{itemize}

Our model of human delegation to an algorithmic agent is intentionally parsimonious: we only need to specify which features of the decision-making instance the human and machine observe, and for each instance, the probability of observing that instance, and the optimal action to take. We show that this simple model captures a range of insights about optimal delegation and the role of categories. Moreover, this model is easy to extend in several directions: in Section \ref{sec:extensions}, we show how to 
adapt our model
to incorporate other human behavioral biases that affect delegation \citep{bobadillasuarez2017,candrianscherer2022}, such as algorithm aversion \citep{dietvorst2014} and control premiums \citep{owens2014}.

The remainder of the paper follows the organization above: in Section \ref{sec:model} we introduce our model, set up the problem of finding the optimal algorithmic delegate, and show a characterization theorem for the optimal delegate. We then examine optimal delegation in the simple case in which the human and machine each access a single binary feature in Section \ref{sec:twofeature}.
We consider the tractability of finding an optimal delegate in Section \ref{sec:results}, and compare the optimal delegate to delegates learned over time 
in Section \ref{sec:simulations}. Finally, in Section \ref{sec:discussion} we discuss extensions of our model, and general relationships in performance between different teams.

\subsection{Related work} 
This paper extends a long line of work addressing how to design algorithmic agents to improve performance in human-AI  teams \citep{bansal2019b,bansal2021,milewskilewis1997delegatingtosa, detoni2024, grgichlaca2022}. In particular, we focus on cases in which the human must choose to either delegate to the machine \citep{lai2022,lubarstan2019} or take action by herself; to our knowledge, we are the first to study the design of optimal delegates that account for categories. \citet{bansal2021} propose that a machine's performance should be optimized for the categories where it is used, but do not study delegation. Other work studies machine design in other team structures, such as settings where the machine delegates to the human \citep{raghu2018, lykouris2024}, settings where the human may observe the machine's output before deciding \citep{bansal2021, bansal2019b}, and settings where a human chooses between accepting a machine's prediction and delegating to another human \citep{xu2024}.

The existence of categories is a known aspect of human decision-making \citep{anderson1991,mullainathan2002, townsend2000, zheng2023}. This act of agents grouping decision-making scenarios has been modeled both as a behavioral phenomenon \citep{mullainathan2002,townsend2000} and as a consequence of limited information \citep{bansal2019a, alur2024, vafa2024}.
We focus on categories arising from limited information, but in Appendix \ref{app:arbitrarycategories} we show that our results apply to any partitions of the set of decision-making instances -- including those arising as a behavioral convenience. Our focus on information-driven categories is consistent with contemporary human-algorithm collaboration work: ``mental models'' \citep{bansal2019a} and ``indistinguishable'' inputs \citep{alur2024} are analogous to our human and machine categories respectively. Moreover, our human categories are captured by ``generalization functions'' in \citet{vafa2024} when the human maintains equivalent beliefs about all states with the same human-accessible features. \citet{iakovlev2024} also study a model where a human and machine have access to different information in the form of binary features, but focus on the problem of a third party selecting an evaluator. Some of these prior works study delegation in the presence of categories \citep{lai2022,vafa2024}, but do not investigate the question of optimal delegate design. 

This work also interfaces with a variety of other disciplines. Our model can represent a type of \textit{interpretability} \citep{carvalho2019} by the number of shared features between the human and machine: in our model, more shared features corresponds to more similar categories and actions taken. Categories can cause \textit{over-reliance} \citep{passi2022}: the need to make the same delegation decision across a category can result in over- or under-delegation within categories. In human factors analysis, function allocation prescribes a qualitative process by which a designer determines which tasks in a system should be automated \citep{marsden2005,price1985}; we take a quantitative approach which is more suited to modern AI design. There is also an extensive literature in management on human to human delegation \citep{leana1986, akinola2017}; while human delegates can be hired to perform specialized tasks, or trained to perform these tasks, the natural models for such a human-to-human interaction would not lead to the type of optimization design problem we study here because human decision-makers cannot be `designed' to perform well as teammates in the same fine-grained way as algorithmic agents can in our setup.

Finally, \textit{agentic AI} systems, which complete complex, long-term, or underspecified goals with a high degree of agency, are an increasingly popular topic of study in research and industry \citep{hbr2024, oai2023, chan2023}. Our work studies AI systems in arguably the \textit{simplest} agentic setting, where the agent must  act autonomously in clearly defined scenarios involving a single decision. We  find that even in this simple setting, designing an optimal agent is a fundamentally hard task. This suggests that in more general settings it may also be challenging to design agentic systems which will be optimally relied on by humans.
    
\section{Model}\label{sec:model}

Let the world of all decision-making instances be described by $d$ binary features $x_1, ..., x_d \in \{0, 1\}$. There are then $n = 2^d$ possible \emph{instances} or \textit{states} of the world $\x = (x_1, ..., x_d) \in \{0, 1\}^d$. Let $\Prob(\x)$ denote the probability of observing state $\x$.
In state $\x$, there is some ground truth correct action $\fnopt(\x) \in \mathbb{R}$ that should be taken. If an agent takes action $a \in \mathbb{R}$ in state $\x$, the team will receive a loss of $(a-\fnopt(\x))^2$. There are two agents, a human and a machine. The human can observe features 
$\idxset{H} \subseteq \{1, ..., d\} =: [d]$; the machine can observe features 
$\idxset{M} \subseteq [d]$. 
A \textit{delegation setting} is determined by $\idxset{H}, \idxset{M}, \Prob,$ and $\fnopt$.

To situate how this formalism works through a brief example,
suppose that the machine is an AI shopping agent as in the introduction, which can traverse the Internet to purchase items on a user's behalf. In this case, $\x$ represents the features of an item on a given day, and the ground truth optimal action $\fnopt(\x) \in \mathbb{R}$ is the highest price a user should pay for that item on that day. The human-observable features $\idxset{H}$ could be the users' preferences and level of urgency for purchasing the item, and $\idxset{M}$ could be information on how the item's current price compares to market trends for that item on the Internet. The human and machine might share some features -- the human could communicate some preferences to the machine -- but the human cannot fully articulate all the subtleties of her preferences and the machine cannot fully summarize all the complex market trends involved in a way that's legible to the human. 

A human category $C$ is a set of states that are indistinguishable to a human because they all share the same human-observable feature values.  Let $\hcatvec{x}$ and $\mcatvec{x}$ denote the restrictions of $\x$ to human and machine-observable features. Formally, states $\x$ and $\z$ are in the same human category if and only if $\hcatvec{x} = \hcatvec{z}$. Similarly, a machine category $K$ is a set of states indistinguishable to the machine, so that states $\x$ and $\z$ are in the same machine category if and only if $\mcatvec{x} = \mcatvec{z}$. Let $\hcats$ and $\mcats$ denote the set of all human and machine categories respectively; $\hcats$ and $\mcats$ are each a partition of the states. There are $h = 2^{|\idxset{H}|}$ and $m = 2^{|\idxset{M}|}$ human and machine categories respectively; we enumerate the human categories $\hcati{1}, \hcati{2}, ..., \hcati{h} \in \hcats$ and the machine categories $\mcati{1}, \mcati{2}, ..., \mcati{m} \in \mcats$. We illustrate the categories induced by human- and machine-accessible features through an example in Figure \ref{fig:categoryillustration}.
For a state $\x$, let $\hcat(\x)$ and $\mcat(\x)$ be the human and machine categories containing $\x$. Moreover, for a 
set $S$ let $\Prob(S) := \sum_{\x \in S} \Prob(\x)$. 

\begin{figure}[ht]
    \centering
    \includegraphics[width=0.5\linewidth]{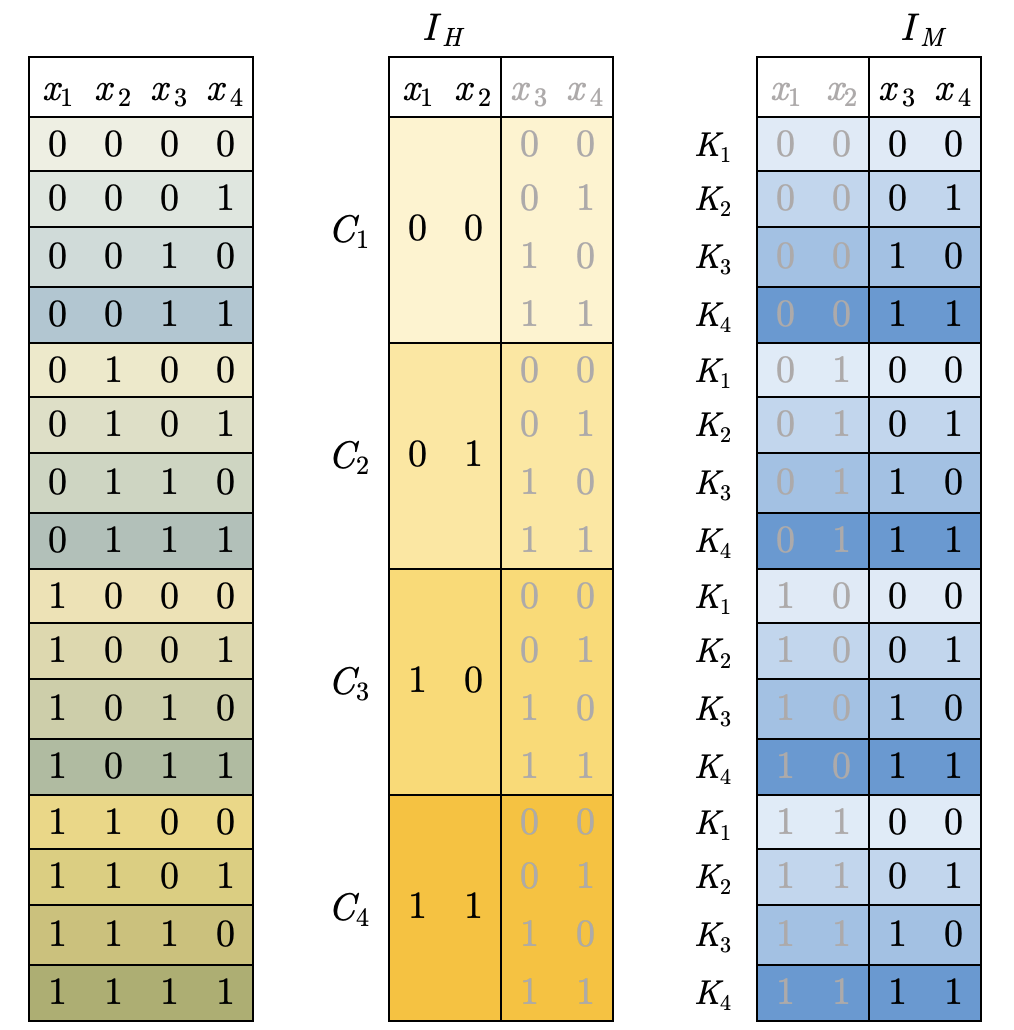}
    \caption{Categories emerge from limited access to information. Here there are $d = 4$ features, resulting in $n = 2^d = 16$ states shown in the first panel. The human has access to features $\idxset{H} = \{1, 2\}$. The middle panel visualizes the human categories in different shades of yellow: each collection of states with the same values of $x_1$ and $x_2$ corresponds to a category. The machine has access to features $\idxset{M} = \{3, 4\}$. The right panel visualizes the machine categories in shades of blue: each collection of states  with the same values of $x_3$ and $x_4$ corresponds to a category.}
    \Description{On the left, there is a table with four binary features $x_1$ to $x_4$, and all possible settings of these features. Each setting of features has a different color. In the middle, the rows are colored different shades of yellow based on the values of $x_1, x_2$. On the right, the rows are colored different shades of blue based on the values of $x_2, x_3$.}
    \label{fig:categoryillustration}
\end{figure}

Since each agent can't distinguish between states within a category, the human and machine choose actions as a function of the category they observe: let $\hfn: \hcats \to \mathbb{R}$ and $\mfn: \mcats \to \mathbb{R}$ denote the human and machine's choices, or \textit{action functions}. The human's action function $\hfn$ specifies the action she will take if she does not delegate to the machine; as we will see, this is simply the action she takes in the machine's absence. The machine's action function is a policy specified by the machine's designer before the machine's deployment.

The delegation process works as follows.
\begin{enumerate}
    \item Given state $\x$, the human observes the category $\hcat(\x)$.
    \item The human decides whether or not to delegate based on which agent has better expected performance in $\hcat(\x)$.
    \item If the human does not delegate, she takes action $\hfn(\hcat(\x))$. 
    \item If the human delegates, the machine observes the category $\mcat(\x)$, and takes action $\mfn(\mcat(\x))$. 
\end{enumerate}
We illustrate this in Figure \ref{fig:decisionflow}. Returning to the shopping agent example, the human can observe her preferences but not the market trends, resulting in a set of indistinguishable states that corresponds to a human category $C$. Based on $C$, she decides whether to delegate to the machine, or take action $\hfn(C)$. If she delegates to the machine, the machine observes the market trends but not all aspects of the user's preferences, resulting in a set of indistinguishable states that corresponds to a machine category $K$, and takes action $\mfn(K)$.
\begin{figure}[ht]
    \centering
    \includegraphics[width=\linewidth]
    {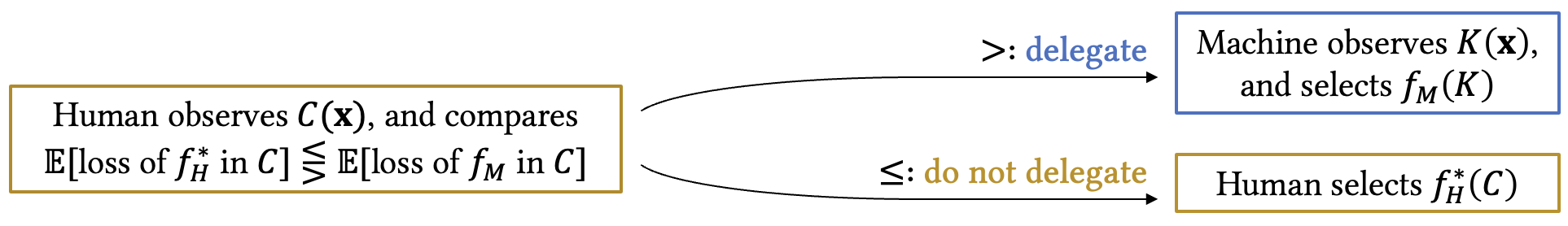}
    \caption{Flow of decision-making control in delegation.}
    \Description{...}
    \label{fig:decisionflow}
\end{figure}

Note that we assume the human knows whether her expected loss is better or worse than the machine's loss in each category; for example she may learn this over time from past experiences delegating to the machine, or from other users who have observed the machine's performance.

Besides the human and machine, there is a third actor: the machine's designer. We aim to characterize the optimal machine action function $\mfn$, and determine whether the machine's designer is able to tractably discover this optimal policy. We will consider two regimes: for the main portion of the paper, we will assume the machine's designer has \textit{full knowledge} of the delegation setting $\idxset{H}, \idxset{M}, \Prob$ and $\fnopt$. In Section \ref{sec:simulations}, we relax this assumption. The assumption of full knowledge provides a lower bound on the difficulty of the problem: if the designer cannot efficiently construct the optimal machine with full information, the designer has no hope of achieving optimality with less information. Moreover, while stylized, the assumption reflects a genuine property of the problem domain, which is that the designer of the machine has more information than the human user or the machine itself have individually: the designer has some understanding of the categories a human user will have access to (but of course they don't not know which category any instance belongs to at the future time of usage), and the designer also has insight into the features that the machine will use. For example, a company developing a shopping agent might have a user experience team which aims to understand customer use cases, and an engineering team which assembles the machine's inputs and sensors.

\subsection{An illustrative example} Before we finish formalizing the optimal design problem, we will work through an example, shown in Figure \ref{fig:grid:fulldemo}. Here, the human and machine each have four categories, as would arise when each has access to two underlying Boolean features. If the human and machine features partition a set of four total features, there will be one state $\x_{ij}$ in the intersection of $C_i \cap K_j$ for each $i,j$. Figure \ref{fig:grid:fulldemo:f} visualizes these categories in a grid, where the rows are human categories and the columns are machine categories. The value of entry $(i,j)$ in the grid is the ground truth correct action $\fnopt(\x_{ij})$, and the entry is shaded by the magnitude of $\fnopt$. We let each state have equal probability $\Prob(\x_{ij}) = 1/n$.

\begin{figure*}[!ht]
    \centering
    \begin{subfigure}[b]{\linewidth} 
        \centering
        \includegraphics[width=0.34\linewidth]{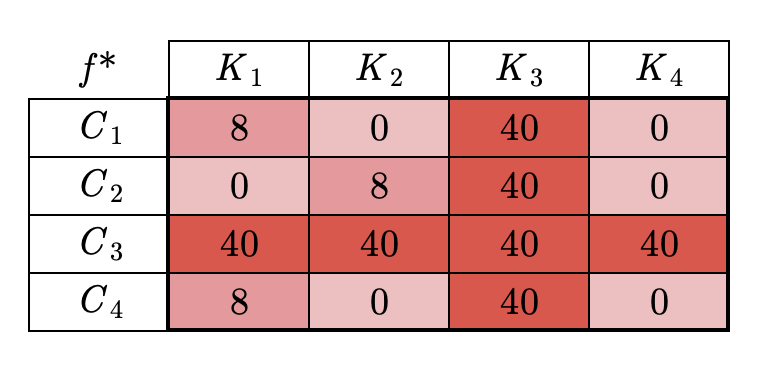}
        \caption{We arrange decision-making instances in a grid as in Figure \ref{fig:grid:cats:nn}, where rows are human categories and columns are machine categories. The value in entry $(i, j)$ is the optimal action $\fnopt(\x_{ij})$ in the state $\x_{ij} \in C_i \cap K_j$.}
        \Description{A grid labeled $\fnopt$ with rows $C_1$ to $C_4$ and columns $K_1$ to $K_4$. The grid entries are colored according to the magnitude of the value in that entry: both $K_3$ and $C_3$ have very large outlying values; the remainder of the entries are eiteher 0 or 8.}
        \label{fig:grid:fulldemo:f} 
    \end{subfigure}

\begin{subfigure}[b]{0.49\linewidth} 
        \centering
        \includegraphics[width=\linewidth]{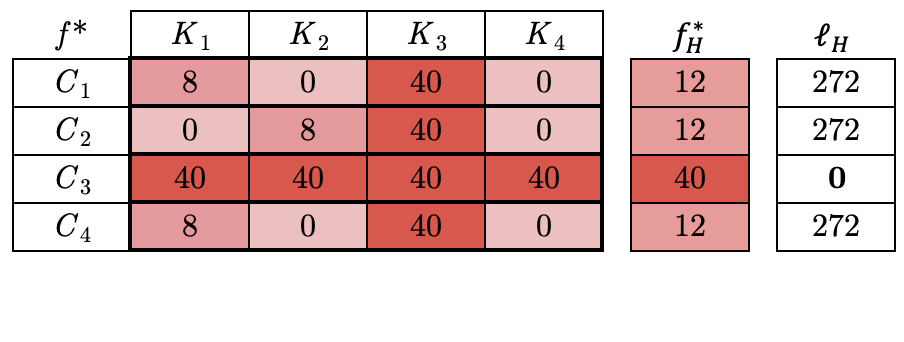}
        \caption{In category $C_i$, the human selects the loss-minimizing action $\hfnopt$, the average action in row $i$. The expected loss of $\hfnopt$ in human category $C_i$ is $\ell_H(C_i)$.\\}
        \Description{Worked example showing the loss of a human acting alone.}
        \label{fig:grid:fulldemo:humanonly} 
    \end{subfigure}
    \hfill
\begin{subfigure}[b]{0.49\linewidth} 
        \centering
        \includegraphics[width=0.8\linewidth]{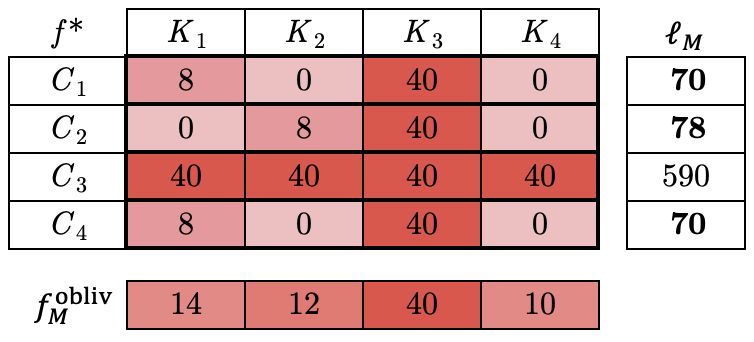}
        \caption{In category $K_j$, the machine selects some action $\mfn$. The expected loss of $\mfn$ in human category $C_i$ is $\ell_M(C_i)$. Here we display $\mfn = \mfnobliv$, corresponding to selecting the average action in column $j$.}
        \Description{Worked example showing the loss of a machine designed to perform well across all human categories.}
        \label{fig:grid:fulldemo:machineonly} 
    \end{subfigure}
    
    \begin{subfigure}[b]{\linewidth} 
        \centering
        \includegraphics[width=0.6\linewidth]{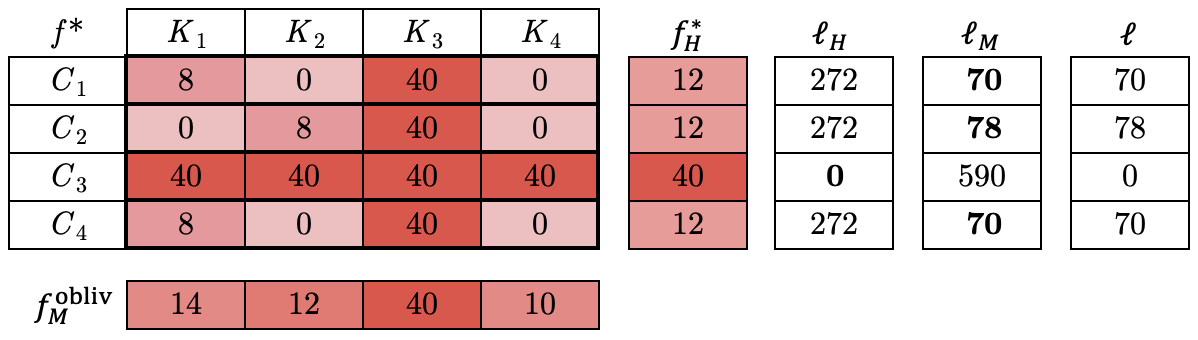}
        \caption{Team with $\hfnopt$ and $\mfnobliv$; team loss $\ell(C)$ is the minimum of $\ell_H(C)$ and $\ell_M(C)$.}
        \Description{A machine designed to operate in all human categories has bad team performance.}
        \label{fig:grid:fulldemo:bad} 
    \end{subfigure}
    
\begin{subfigure}[b]{\linewidth} 
        \centering
        \includegraphics[width=0.6\linewidth]{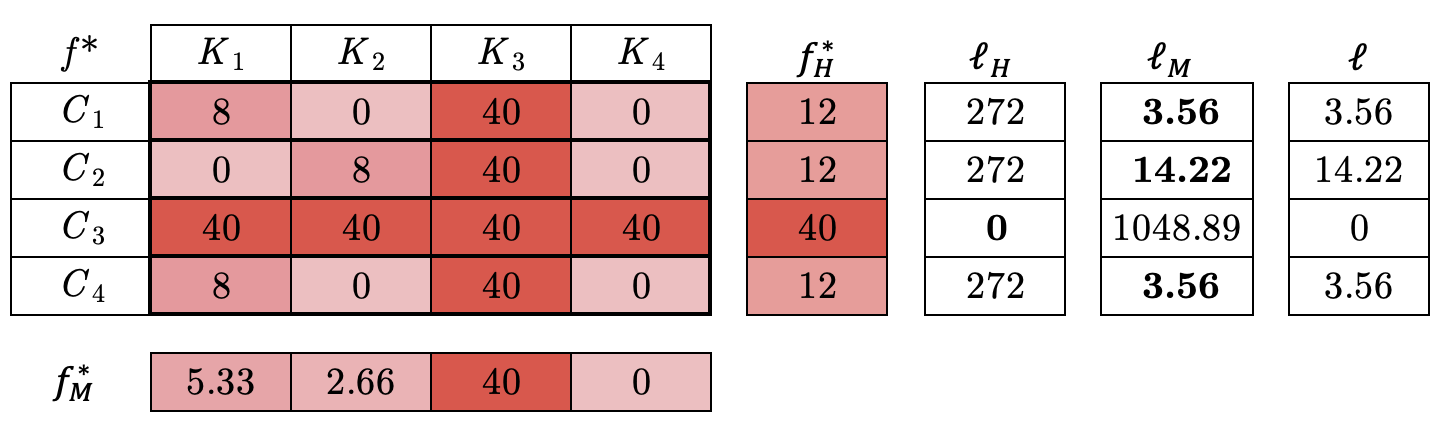}
        \caption{Team with $\hfnopt$ and $\mfnopt$; in category $K_j$, $\mfnopt$ averages over $\{C_1, C_2, C_4\}$ and achieves a much lower team loss.}
        \Description{A machine designed to operate only in $\{C_1, C_2, C_4\}$ performs well.}
        \label{fig:grid:fulldemo:good} 
    \end{subfigure}
    \caption{Example delegation problem.}
    \label{fig:grid:fulldemo}
\end{figure*}

In a given state $\x_{ij}$, the human can observe which row $i$ the state is in, and will either delegate to the machine, or take action $\hfn(\hcat_i)$. When the human does not delegate, her choice of $\hfn(\hcat_i)$ is simple: the only information the human has is the row $i$, so to minimize the squared loss of her decision she can do no better than to choose the average action in row $i$, which we show in Figure \ref{fig:grid:fulldemo:humanonly} as $\hfnopt$. This yields an average loss of $\ell_H(C_i)$ in each human category $C_i$, which corresponds to the variance of the optimal action across $C_i$. When the human does delegate to the machine in state $\x_{ij}$, the machine can observe the column $j$ and take some action $\mfn(\mcat_j)$. In Figure \ref{fig:grid:fulldemo:machineonly} we show a machine $\mfnobliv$, which upon observing category $K_j$ takes the average action in column $j$. This yields an average loss of $\ell_M(C_i)$ in each human category $C_i$.

In Figures \ref{fig:grid:fulldemo:bad} and \ref{fig:grid:fulldemo:good} 
 we contrast the team performance of different delegates. In state $\x_{ij}$, the human makes her delegation decision by comparing the losses $\ell_H$ and $\ell_M$ in row $i$, and choosing the agent with the lower loss $\ell$. The overall team loss is the average of the column $\ell$. 
 
 In Figure \ref{fig:grid:fulldemo:bad} we show the team consisting of $\hfnopt$ from Figure \ref{fig:grid:fulldemo:humanonly} and $\mfnobliv$ from Figure \ref{fig:grid:fulldemo:machineonly}. In this case, the human will delegate in categories $C_1, C_2,$ and $C_4$. If the machine can only observe the machine category $K_j$, this delegate $\mfnobliv$ that takes the average action in column $j$ is the best choice to minimize the squared loss. 

However, the machine will only be used in decision-making instances where the human chose to delegate to it. Can we use this information to improve the team loss? In this example, states in the human category $C_3$ are difficult for the machine to handle: the optimal action $40$ is often very different than in other states in the machine's categories. However, $C_3$ is very easy for the human to take action in: the human can take action $40$ with no loss. Without $C_3$, the remaining values in each machine category have very low variance, and the machine achieves very good performance even if the human adopts the machine in all other human categories. Figure \ref{fig:grid:fulldemo:good} 
shows a machine design that averages only over the human categories excluding $C_3$; this machine achieves a significantly lower loss in categories $C_1, C_2$ and $C_4$, and is again adopted in these categories. This is in fact 
the optimal delegate.

The example above suggests two high-level insights. First, the machine's designer should partition the human categories into ``retained'' categories where the machine will be responsible ($C_1$, $C_2$, and $C_4$) and categories that are ``yielded'' to the human's responsibility  $(C_3)$, and optimize the machine to be perform well in the retained categories. We generalize this in \Cref{prop:dropisoptimal}, which shows that finding an optimal machine amounts to finding the best set of retained categories, and that the human will adopt the optimal machine in exactly these categories.
Second, to find this partition, 
the machine's designer should take the yielded human categories to be those with low variance in the optimal action among states in the category, and the retained human categories to be those that make the machine categories have low variance. The optimal delegate is then designed to only be used in the latter set of categories. In \Cref{prop:variancegame:full} we show this is exactly the problem of designing the optimal delegate.

One natural approach to finding this partition of human and machine categories is to start with a delegate that is designed to perform well in all states, and then to observe which categories human adopts the machine in. The designer could then re-optimize the machine to perform well in only those categories and observe which categories this new machine is adopted in. The designer then reoptimize the machine to perform well in these categories, and thus iteratively improve the machine's design. Unfortunately, we show in Section \ref{sec:simulations} that this is not guaranteed to converge to an \textit{optimal} delegate, but reaches a local optimum. As a result, in the next few sections we focus on finding a global optimum.

\subsection{Team loss minimization} We now finish formalizing the problem of finding an optimal delegate. Let \[\hloss(\hfn, \hcat) := \E[(\hfn(C) - \fnopt(\x))^2 | \x\in \hcat],\]
\[\mloss(\mfn, \hcat) := \E[(\mfn(K(\x)) - \fnopt(\x))^2 | \x \in \hcat]\] denote the expected loss of the human function $\hfn$ and machine function $\mfn$ respectively in category $C$. When the human is delegating optimally, she will delegate to the machine in category $\hcat$ if and only if the machine's loss is lower than the human's in that category.\footnote{We assume, without loss of generality, that the human will not delegate when the two agents have equal expected losses.}
        The loss associated with the team in which the human with function $\hfn$ optimally delegates tasks to the machine with function $\mfn$ is denoted by $\loss(\hfn, \mfn)$. For simplicity, we refer to this as the \textit{team loss}. Formally, the expected team loss is 
        \[
            \loss(\hfn, \mfn) :=\sum_{\hcat \in \hcats} \Prob(\hcat)\cdot \min\left\{\hloss(\hfn,\hcat) ,\mloss(\mfn, C)\right\}.\]
            
In a given category $\hcat$, the human will observe $C$, and either delegate to the machine, or will take an action. If the human chooses to take an action, given that the human can only observe $\hcat$, the loss-minimizing action in category $\hcat$ is the expected value of $\fnopt$ in $\hcat$; we denote the human function that takes this optimal action in each $\hcat$ by $\hfnopt$, that is,
\[\hfnopt(\hcat) := \E[\fnopt | \hcat].\]
This will be the optimal human function regardless of the machine's function $\mfn$, so we will restrict our consideration to $\hfn = \hfnopt$.

The optimal machine action is less clear: in machine category $\mcat$, the machine's designer can make this choice depend not only on the features corresponding to $\mcat$, but also on the information that the human chose to delegate to it (by nature of having to choose an action at all). The machine's design could be oblivious, relying only on the machine-interpretable features and ignoring the fact that it has been delegated to, as in a traditional machine learning approach. The loss-minimizing action $\mfnobliv$ in machine category $K$ that ignores delegation information would be the expected value of $\fnopt$ across states in $K$,
\[\mfnobliv(\mcat) := \E [\fnopt | \mcat].\] 
Alternatively, an \textit{optimal delegate} $\mfnopt$ makes use of all the information available and attains the optimal team loss, 
\[\mfnopt \in \argmin_{\mfn} \loss(\hfnopt, \mfn).\]

\varXiv{
\begin{restatable}{fact}{factarbworse}\label{fact:arbworse}
    There exist families of delegation settings such that the oblivious machine $\mfnobliv$ has unbounded team loss, but the team loss of the optimal machine $\mfnopt$ is constant.
\end{restatable}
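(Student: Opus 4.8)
The plan is to exhibit an explicit one-parameter family of delegation settings, indexed by a scale $N \to \infty$, in which a single human category simultaneously (i) has high variance in $\fnopt$, so it is hard for the human, and (ii) has its machine-category column averages ``poisoned'' by a second, human-easy category carrying large values, so it is also hard for the oblivious machine; meanwhile the optimal delegate, anticipating that it will not be adopted in the poisoning category, can exclude that category from its averages and act perfectly. Concretely I would take $d = 2$, $\idxset{H} = \{1\}$, $\idxset{M} = \{2\}$, and uniform $\Prob(\x) = 1/4$, so there are two human categories $\hcati{1} = \{x_1 = 0\}$ and $\hcati{2} = \{x_1 = 1\}$ and two machine categories $\mcati{1} = \{x_2 = 0\}$ and $\mcati{2} = \{x_2 = 1\}$, with one state in each intersection. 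Setting $\fnopt = 0$ on the single state $(x_1,x_2) = (0,0)$ and $\fnopt = N$ on the other three states realizes exactly this structure; this is the scaled analogue of the worked example in \Cref{fig:grid:fulldemo}.

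Next I would compute the three relevant losses directly from the definitions. The human plays $\hfnopt(\hcati{1}) = N/2$ and $\hfnopt(\hcati{2}) = N$, giving $\hloss(\hfnopt, \hcati{1}) = N^2/4$ and $\hloss(\hfnopt, \hcati{2}) = 0$. The oblivious machine plays the column averages $\mfnobliv(\mcati{1}) = N/2$ and $\mfnobliv(\mcati{2}) = N$; the value $N$ sitting in $\hcati{2} \cap \mcati{1}$ poisons the first column, so that $\mloss(\mfnobliv, \hcati{1}) = N^2/8$. Since the human delegates in $\hcati{1}$ (where $N^2/8 < N^2/4$) and acts in $\hcati{2}$ (where her loss is $0$), the oblivious team loss is $\tfrac12 \cdot N^2/8 = N^2/16$, which is unbounded as $N \to \infty$.

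Finally I would upper-bound the optimal team loss by exhibiting a single delegate that attains team loss $0$: the function $\mfn$ with $\mfn(\mcati{1}) = 0$ and $\mfn(\mcati{2}) = N$. This delegate is perfect on $\hcati{1}$, whose two states lie in $\mcati{1}$ and $\mcati{2}$ with true values $0$ and $N$, so $\mloss(\mfn, \hcati{1}) = 0$ and the human delegates there; in $\hcati{2}$ her loss is already $0$, so she acts. Hence $\loss(\hfnopt, \mfn) = 0$, and since $\mfnopt$ minimizes team loss and loss is nonnegative, $\loss(\hfnopt, \mfnopt) = 0$ for every $N$, which is constant. This is precisely the ``retain $\hcati{1}$, yield $\hcati{2}$'' partition guaranteed to be optimal by \Cref{prop:dropisoptimal}.

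The main obstacle to anticipate is not the arithmetic but getting the qualitative configuration right. Because the team loss takes the pointwise minimum of $\hloss$ and $\mloss$, it is always bounded above by the human-only loss; so to make the oblivious team loss blow up, the bad category must be hard for the human as well as for the machine. The single idea that makes this compatible with a constant optimal loss is the poisoning mechanism: the large values must live in a category that is \emph{easy} for the human (so the optimal delegate can safely shed it) yet occupy the same machine columns as the human-hard category (so they corrupt the oblivious column averages exactly where the human also struggles). Once this configuration is identified, scaling the outlier by $N$ yields the claimed unbounded-versus-constant separation immediately.
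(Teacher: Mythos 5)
Your proof is correct and takes essentially the same approach as the paper: an explicit two-feature family indexed by a scale $N$, in which an outlying value of order $N$ forces both the human and the oblivious machine's column averages to incur loss $\Theta(N^2)$ in some human category, while the optimal delegate sheds one category and achieves constant team loss. The paper's instance is $(a,b)=(0,N)$ with optimal team loss $1/8$, whereas yours is effectively the $a=b$ configuration scaled by $N$ with optimal team loss $0$; this difference is immaterial, and your loss computations all check out.
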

We show this in Section \ref{app:arbitrarilyworse}; this implies that the oblivious delegate can have arbitrarily worse performance than the optimal delegate. This observation motivates the need to design optimal delegates that account for human adoption.
}
{
We show in Section \ref{app:arbitrarilyworse} that there exist families of delegation settings such that the oblivious machine $\mfnobliv$ has unbounded team loss, but the team loss of the optimal machine $\mfnopt$ is constant. This implies that the oblivious delegate can have arbitrarily worse performance than the optimal delegate. This observation motivates the need to design optimal delegates that account for human adoption.
}

\subsection{Reformulating the problem} 
Our goal is now to find optimal delegates $\mfnopt$. We now state two results 
(Propositions \ref{prop:dropisoptimal} and \ref{prop:variancegame:full})
that give general versions of  
the principles from our example above; in Section \ref{sec:results} we will use these results to examine the tractability of finding optimal delegates. 

First, we transform our problem into a discrete optimization problem.
Let $\mfn^\kept$ be the machine that in machine category $\mcat$ takes the average action only over states also in ``retained'' human categories $C \in \kept$:
\[\mfn^\kept(\mcat) := \E[\fnopt | X(\kept) \cap K],\] 
where $X(\kept) := \bigcup_{C \in \kept} C$. In contrast, the oblivious machine takes the expected optimal action over \textit{all} states in the machine category. Note that the functions $\mfn$ we computed in the example above were examples of $\mfn^\kept$ for different sets of retained categories $\kept$.

The categories in which a human with function $\hfn$ will delegate to a machine with function $\mfn$ are those where the machine's expected loss is lower than the human's; we denote this by
\[\dcats(\hfn, \mfn) := \{\hcat \subseteq \hcats: \mloss(\mfn, C) < \hloss(\hfn, C)\}.\]

We may rewrite the problem of finding an optimal delegate $\mfnopt$ as \begin{align}\label{problem:continuous}
        \min_{\mfn} \loss(\hfnopt, \mfn) \equiv \min_{\mfn}\sum_{\hcat \in \hcats \setminus \dcats(\hfnopt, \mfn)}\Prob(\hcat) \hloss(\hfnopt, \hcat) + \sum_{\hcat \in \dcats(\hfnopt, \mfn)} \Prob(\hcat) \mloss(\mfn, \hcat).
    \end{align}
The following result shows that to find an optimal delegate $\mfnopt$, it is sufficient and necessary to find a set of human categories $\kept^*$ that attains the minimum team loss when the human delegates to $\mfn^{\kept^*}$ in precisely the categories in $\kept^*$. We defer the proof of this result to Appendix \ref{app:charproofs}.
\begin{restatable}{proposition}{dropisoptimalinformal}\label{prop:dropisoptimal}
    Consider the combinatorial optimization problem \begin{equation}\label{problem:discrete}
        \min_{\kept \subseteq \hcats} \sum_{C \in \hcats \setminus \kept} \Prob(C) \hloss(\hfnopt, C) + \sum_{C \in \kept} \Prob(C) \mloss(\mfn^\kept, C)
    \end{equation}

    In any delegation setting, if $\kept^*$ is a solution to Problem \ref{problem:discrete}, $\mfn^{\kept^*}$ is a solution to Problem \ref{problem:continuous}. 
    
    Likewise, if $\mfnopt$ is a solution to Problem \ref{problem:continuous}, then $\dcats(\hfnopt,\mfnopt)$ is a solution to Problem \ref{problem:discrete}.
\end{restatable}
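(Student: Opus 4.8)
The plan is to show that Problems~\ref{problem:continuous} and~\ref{problem:discrete} have the same optimal value, and then read off both directions of the correspondence by a sandwiching argument. Write $F(\kept)$ for the discrete objective in Problem~\ref{problem:discrete}. The crux of the whole argument is a single lemma: for any fixed retained set $\kept \subseteq \hcats$, the machine $\mfn^\kept$ minimizes the aggregate machine loss $\sum_{C \in \kept} \Prob(C)\,\mloss(\mfn, C)$ over all machine functions $\mfn$.

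To prove this lemma I would rewrite the weighted loss as a sum over states. Using $\Prob(C)\,\E[g \mid C] = \sum_{\x \in C}\Prob(\x)\, g(\x)$, the quantity $\sum_{C \in \kept}\Prob(C)\,\mloss(\mfn, C)$ equals $\sum_{\x \in X(\kept)} \Prob(\x)\,(\mfn(K(\x)) - \fnopt(\x))^2$, which regroups by machine category as $\sum_{K \in \mcats}\sum_{\x \in X(\kept) \cap K}\Prob(\x)\,(\mfn(K) - \fnopt(\x))^2$. For each $K$ the inner sum is minimized by the $\Prob$-weighted mean of $\fnopt$ over $X(\kept)\cap K$, which is exactly $\mfn^\kept(K) = \E[\fnopt \mid X(\kept)\cap K]$ --- the standard fact that the conditional mean minimizes squared error. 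Hence $\mfn^\kept$ is the categorywise minimizer.

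Next I would establish two inequalities between the optimal values. For the first, plug $\mfn^\kept$ into the team loss: since each term $\min\{\hloss(\hfnopt, C),\,\mloss(\mfn^\kept, C)\}$ is at most the human term when $C \notin \kept$ and at most the machine term when $C \in \kept$, summing gives $\loss(\hfnopt, \mfn^\kept) \le F(\kept)$, so $\min_\mfn \loss(\hfnopt, \mfn) \le \min_\kept F(\kept)$. For the reverse, take any $\mfn$ and set $\kept = \dcats(\hfnopt, \mfn)$; by the definition of $\dcats$ the minimum in each category selects the machine term exactly on $\kept$ and the human term off $\kept$, so $\loss(\hfnopt, \mfn) = \sum_{C \notin \kept}\Prob(C)\,\hloss(\hfnopt, C) + \sum_{C \in \kept}\Prob(C)\,\mloss(\mfn, C)$. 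Applying the lemma to lower-bound the machine term by $\mfn^\kept$ yields $\loss(\hfnopt, \mfn) \ge F(\kept) \ge \min_{\kept'}F(\kept')$, hence $\min_\mfn \loss \ge \min_\kept F$. Together the two inequalities give equality of the optimal values.

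Both directions then follow by sandwiching. If $\kept^*$ solves Problem~\ref{problem:discrete}, then $\loss(\hfnopt, \mfn^{\kept^*}) \le F(\kept^*) = \min_\kept F = \min_\mfn \loss$ forces equality, so $\mfn^{\kept^*}$ solves Problem~\ref{problem:continuous}. Conversely, if $\mfnopt$ solves Problem~\ref{problem:continuous} and $\kept = \dcats(\hfnopt, \mfnopt)$, the chain $\min_\mfn \loss = \loss(\hfnopt, \mfnopt) \ge F(\kept) \ge \min_{\kept'}F(\kept')$ must collapse to equalities, so $\kept$ solves Problem~\ref{problem:discrete}. The main obstacle to get right is the gap between a \emph{prescribed} delegation on $\kept$ (which defines $F$) and the human's \emph{actual} min-based delegation: the first inequality is generally loose because the human may delegate on a strict subset of $\kept$ whenever losses tie, so the delicate point is verifying that the tie-breaking convention (no delegation under equality) makes $\dcats(\hfnopt,\mfn)$ coincide with the set where the pointwise minimum selects the machine term, ensuring Part B's decomposition is an exact identity rather than merely an inequality.
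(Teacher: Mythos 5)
Your proposal is correct, and it reaches the result by a somewhat different route than the paper. The paper introduces an intermediate \emph{joint} problem $\min_{\mfn,\,\kept}\sum_{C\in\hcats\setminus\kept}\Prob(C)\hloss(\hfnopt,C)+\sum_{C\in\kept}\Prob(C)\mloss(\mfn,C)$ and shows that Problems \ref{problem:continuous} and \ref{problem:discrete} are each constrained versions of it whose constraints are automatically satisfied by every optimal solution: the constraint $\mfn=\mfn^\kept$ holds because the conditional mean minimizes squared loss (your key lemma, identical in content), and the constraint $\kept=\dcats(\hfnopt,\mfn)$ holds via an exchange argument by contradiction (adding or removing a misassigned category strictly lowers the objective). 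You instead prove equality of the two \emph{optimal values} by a pair of inequalities and then sandwich. Your second inequality quietly replaces the paper's exchange argument with something cleaner: the observation that, under the stated tie-breaking convention, $\loss(\hfnopt,\mfn)$ \emph{identically} equals the decomposition $\sum_{C\in\hcats\setminus\kept}\Prob(C)\hloss(\hfnopt,C)+\sum_{C\in\kept}\Prob(C)\mloss(\mfn,C)$ with $\kept=\dcats(\hfnopt,\mfn)$, since the pointwise minimum selects the machine term exactly where $\mloss<\hloss$. This is an exact identity, not an estimate, so no contradiction step is needed. What the paper's version buys is the stronger statement that all three problems have the same \emph{solution sets} (not just the same optimal value), which it uses implicitly when chaining solutions across the three formulations; your sandwiching recovers both directions of the stated correspondence just as well, and your closing remark correctly isolates the only delicate point (the tie-breaking convention making the $\dcats$-based decomposition exact). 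One small note: your first inequality $\loss(\hfnopt,\mfn^\kept)\le F(\kept)$ being possibly loose is harmless, as you observe, because it points in the direction needed for the sandwich.
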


Since we measure team performance using the squared loss, the objective function in Problem \ref{problem:discrete} depends fundamentally on the variance -- the variance across each non-retained human category, and the variance across all retained states in each machine category. To formalize this, we use \textit{conditional variance} \citep{spanos2019}, \[\Var(\fnopt | S) := \E[(\fnopt - \E[\fnopt | S])^2 | S].\] Define $\modvar(\cdot | S) := \Prob(S)\Var(\cdot | S).$ 
\begin{restatable}{proposition}{propvariancegameinformal}\label{prop:variancegame:full} To find an optimal delegate $\mfnopt$, it is sufficient to find a set $\kept^*$ that solves \[\min_\kept \sum_{\hcat \in \hcats \setminus \kept} \modvar(\fnopt| \hcat) + \sum_K \modvar\left(\fnopt| K \cap X(\kept)\right)\] and take $\mfnopt = \mfn^{\kept^*}$. 
\end{restatable}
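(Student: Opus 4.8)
The plan is to start from Proposition~\ref{prop:dropisoptimal}, which already establishes that an optimal delegate takes the form $\mfn^{\kept^*}$ for some retained set $\kept^*$ solving Problem~\ref{problem:discrete}. It therefore suffices to show that the objective of Problem~\ref{problem:discrete} equals, term for term, the variance-based objective in the statement. Once this identity is established, the two minimizations have identical objectives over the same feasible set $\kept \subseteq \hcats$, so their minimizers coincide, and taking $\mfnopt = \mfn^{\kept^*}$ for a minimizer $\kept^*$ of the variance objective yields an optimal delegate by Proposition~\ref{prop:dropisoptimal}.

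For the first (``yielded'') sum, I would observe that the human's loss-minimizing action in $\hcat$ is the conditional mean $\hfnopt(\hcat) = \E[\fnopt \mid \hcat]$, so that $\hloss(\hfnopt, \hcat) = \E[(\E[\fnopt\mid \hcat] - \fnopt)^2 \mid \hcat] = \Var(\fnopt \mid \hcat)$ is exactly the conditional variance. Multiplying by $\Prob(\hcat)$ gives $\Prob(\hcat)\hloss(\hfnopt,\hcat) = \modvar(\fnopt \mid \hcat)$, so the first sum in Problem~\ref{problem:discrete} coincides with $\sum_{\hcat \in \hcats \setminus \kept} \modvar(\fnopt \mid \hcat)$ directly.

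The second (``retained'') sum is where the real work lies, and it is the step I expect to be the main obstacle, because the two expressions are indexed differently: the objective of Problem~\ref{problem:discrete} groups states by human category $C \in \kept$, whereas the target expression groups them by machine category $K$. My approach is to reduce both to a common sum over individual states. By definition $\mfn^\kept(K) = \E[\fnopt \mid K \cap X(\kept)]$ is the conditional mean over the retained states of $K$, so $\modvar(\fnopt \mid K \cap X(\kept)) = \sum_{\x \in K \cap X(\kept)} \Prob(\x)\,(\fnopt(\x) - \mfn^\kept(K))^2$; summing over $K$ and using that the machine categories partition the states (so the sets $K \cap X(\kept)$ partition $X(\kept)$) collapses the right-hand side to $\sum_{\x \in X(\kept)} \Prob(\x)\,(\fnopt(\x) - \mfn^\kept(K(\x)))^2$. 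On the other hand, $\Prob(C)\mloss(\mfn^\kept, C) = \sum_{\x \in C}\Prob(\x)\,(\mfn^\kept(K(\x)) - \fnopt(\x))^2$, and summing over $C \in \kept$ --- whose members also partition $X(\kept)$ --- yields the identical state-level sum. The point that makes the regrouping legitimate is that the summand $(\fnopt(\x) - \mfn^\kept(K(\x)))^2$ depends on $\x$ only through its machine category and is constant across human categories within a fixed $K$, so it can be freely reassociated between the human-indexed and machine-indexed groupings.

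Combining the two term-wise identities shows the two objectives agree on every $\kept \subseteq \hcats$, which completes the reduction. The only care required beyond this bookkeeping is to confirm that $X(\kept) = \bigcup_{C \in \kept} C$ is partitioned both by $\{C : C \in \kept\}$ and by $\{K \cap X(\kept) : K \in \mcats\}$; this is immediate since $\hcats$ and $\mcats$ are each partitions of the full state space.
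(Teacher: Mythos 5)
Your proposal is correct and follows essentially the same route as the paper's proof: invoke Proposition~\ref{prop:dropisoptimal}, identify $\Prob(\hcat)\hloss(\hfnopt,\hcat)$ with $\modvar(\fnopt\mid\hcat)$ via the conditional-mean property of $\hfnopt$, and regroup the retained-category loss from a sum over $C \in \kept$ into a sum over machine categories $K$, recognizing $\mfn^\kept(K)$ as the conditional mean over $K \cap X(\kept)$ so that the inner sum becomes $\modvar(\fnopt\mid K\cap X(\kept))$. (One cosmetic remark: the regrouping needs no appeal to the summand depending on $\x$ only through $K(\x)$ --- which is not quite true anyway, since $\fnopt(\x)$ varies within $K$ --- because reassociating a finite sum over the fixed state set $X(\kept)$ is always valid.)
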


This follows from properties of the squared loss; we provide a formal proof in Appendix \ref{app:charproofs}.

We first use this reformulation to characterize optimal delegates in the simple setting where the human and machine each access a single feature in Section \ref{sec:twofeature}, and again in Section \ref{sec:results} to investigate the tractability of finding an optimal delegate. 

While our results in Section \ref{sec:results} will apply for arbitrary human- and machine-observable features, we note that this problem has an especially clean combinatorial formulation when the human and machine partition the set of features $[d]$.

\begin{restatable}{corollary}{propvariancegamesimple}\label{cor:variancegame:simple}
    In the case that $\idxset{H} \cap \idxset{M} = \emptyset$ and $\idxset{H} \cup \idxset{M} = [d]$, the problem of finding an optimal delegate is as follows. There is a single $\x_{ij} \in C_i \cap K_j$ for each $i,j$. Define a matrix $V \in \mathbb{R}^{h} \times \mathbb{R}^{m}$ with entries $v_{ij} = \fnopt(\x_{ij})$.  
    In this notation, we use $\modvar(v_{ij} |i)$ to denote $ \modvar(\fnopt | C_i)$, and likewise $\modvar(v_{ij} | j)$ to denote $\modvar(\fnopt | K_j)$.
\begin{quote}\textnormal{\textsc{VarianceAssignment.} Fix a set of rows $R \subseteq [h]$. For each row $i \notin R$, pay a cost of $\modvar(v_{ij} |i)$. For each column $j$, pay a cost of $\modvar(v_{ij}| i \in R, j)$ (indicating that we take the conditional variance for the given $j$ only over the rows in $R$). Find the set $R^*$ that minimizes the total cost.
    }
\end{quote}
    Then for $\kept^* = \{C_i : i \in R^*\}$, $\mfn^{\kept^*}$ will be an optimal delegate.
\end{restatable}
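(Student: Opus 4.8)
The plan is to treat this corollary as a direct specialization of Proposition~\ref{prop:variancegame:full}: by that proposition it already suffices to find a set $\kept^*$ minimizing $\sum_{\hcat \in \hcats \setminus \kept} \modvar(\fnopt \mid \hcat) + \sum_K \modvar(\fnopt \mid K \cap X(\kept))$ and then take $\mfnopt = \mfn^{\kept^*}$. So the whole task reduces to checking that, under the hypotheses $\idxset{H} \cap \idxset{M} = \emptyset$ and $\idxset{H} \cup \idxset{M} = [d]$, this variance objective coincides term-for-term with the cost minimized in \textsc{VarianceAssignment}, under the correspondence $R = \{i : \hcati{i} \in \kept\}$.

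First I would establish the singleton structure. Since $\idxset{H}$ and $\idxset{M}$ partition $[d]$, a state $\x$ is determined by its restrictions $\hcatvec{x}$ and $\mcatvec{x}$; fixing a human category (a value of $\hcatvec{x}$) together with a machine category (a value of $\mcatvec{x}$) therefore pins down a unique $\x$, so $|\hcati{i} \cap \mcati{j}| = 1$ for every $i,j$, and the matrix $V$ with $v_{ij} = \fnopt(\x_{ij})$ is well defined. This is the only place the partition hypothesis is used, and it is precisely what makes the two-index matrix formulation legitimate (in general, as the paper notes, a cell may contain zero or many states).

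The main step is then the term-by-term translation. For the first sum, a non-retained human category $\hcati{i}$ (i.e. $i \notin R$) consists exactly of the states $\{\x_{ij} : j \in [m]\}$, so $\modvar(\fnopt \mid \hcati{i}) = \Prob(\hcati{i})\,\Var(\fnopt \mid \hcati{i})$ is the probability-weighted variance of the entries $v_{ij}$ across row $i$ — precisely the cost $\modvar(v_{ij} \mid i)$ charged for each $i \notin R$. For the second sum, $\mcati{j} \cap X(\kept)$ consists exactly of the states $\{\x_{ij} : i \in R\}$, so $\modvar(\fnopt \mid \mcati{j} \cap X(\kept))$ is the probability-weighted variance of the entries $v_{ij}$ restricted to the retained rows $i \in R$ — precisely the column cost $\modvar(v_{ij} \mid i \in R, j)$. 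Summing over all non-retained rows and all columns recovers exactly the \textsc{VarianceAssignment} objective, so the two minimization problems have identical objective values for corresponding $\kept$ and $R$. Hence a minimizer $R^*$ yields $\kept^* = \{\hcati{i} : i \in R^*\}$ minimizing the Proposition~\ref{prop:variancegame:full} objective, and $\mfn^{\kept^*}$ is an optimal delegate.

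I do not expect a genuine obstacle here, as the statement is essentially a rewriting of Proposition~\ref{prop:variancegame:full} in matrix notation. The only thing requiring care is the probability weighting hidden inside the $\modvar$ shorthand: one must confirm that the abbreviated symbols $\modvar(v_{ij} \mid i)$ and $\modvar(v_{ij} \mid i \in R, j)$ really denote $\Prob(S)\,\Var(\fnopt \mid S)$ for $S = \hcati{i}$ and $S = \mcati{j} \cap X(\kept)$ respectively, including the correct conditional means $\E[\fnopt \mid \hcati{i}]$ and $\E[\fnopt \mid \mcati{j} \cap X(\kept)]$ used inside each variance. Once that bookkeeping is checked, the equivalence is immediate.
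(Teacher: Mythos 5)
Your proposal is correct and matches the paper's treatment: the paper presents this corollary as an immediate restatement of Proposition~\ref{prop:variancegame:full} in matrix notation (with no separate proof in the appendix), relying on exactly the observation you make — that the partition hypothesis forces $|C_i \cap K_j| = 1$, so the two sums in the proposition's objective translate term-for-term into the row and column costs of \textsc{VarianceAssignment}. Your extra care about the probability weighting inside $\modvar$ is a reasonable bookkeeping check but does not change the argument.
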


\section{Delegation in two-feature settings}\label{sec:twofeature}

Before introducing our general results, we will consider this model in the special case where the world only has two relevant features -- the first feature being accessible only to the human and the second accessible only to the machine -- and where each state occurs with equal probability. We will first show two extreme examples of delegation settings in which the optimal machine design is intuitive. We will then fully characterize the relationship between the ground truth function $\fnopt$ and the optimal machine design. This exercise will demonstrate that this relationship follows simple principles but may itself be complex. 

Formally, let there be $d =2$ features $x_1$ and $x_2$. Let the human access $x_1$ only, and the machine access $x_2$ only ($\idxset{H} = \{1\}, \idxset{M} = \{2\}$). In this case, each agent has two categories, and there are four states $\x$ -- one state $\x_{ij} \in C_i \cap K_j$ for each pair of human and machine categories $C_i, K_j$. Thus, any $\fnopt$ can be specified by four values: the optimal action in each state. It is easy to show that the optimal set of retained categories $\kept^*$ is invariant to scaling and shifting of the ground truth function $\fnopt$, so we may without loss of generality fix two of the values of $\fnopt$: let $\fnopt(\x_{11}) = 0$ and $\fnopt(\x_{12}) = 1$. We denote the remaining two values $a := \fnopt(\x_{21})$, $b := \fnopt(\x_{22}) $, and write $\fnopt = \fnopt_{a,b}$.\footnote{All functions $\fnopt$ can be scaled and shifted to equal $\fnopt_{a,b}$ for some $a,b$ as long as $\fnopt(\x_{11}) \neq \fnopt(\x_{12})$. If $\fnopt(\x_{11}) = \fnopt(\x_{12})$, the human achieves zero loss in $C_1$, the optimal delegate takes $\kept = \{C_2\}$ and achieves zero loss there, so that the team achieves zero loss.} We visualize $\fnopt_{a,b}$ in Figure \ref{fig:twofeature:general}. 
\begin{figure}[ht]
    \centering
    \includegraphics[width=0.25\linewidth]{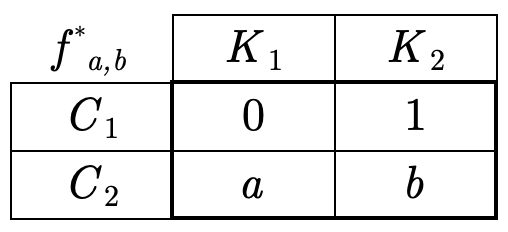}
    \caption{General two-feature setting}
    \Description{A grid labeled $\fnopt_{\fx,\fy}$ with two rows and two columns. The rows are labeled $C_1, C_2$; the columns are labeled $K_1, K_2$. The value in row $C_1$ and column $K_1$ is 0; the value in row $C_1$ and column $K_2$ is 1; the value in row $C_2$ and column $K_1$ is $\fx$; the value in row $C_2$ and column $K_2$ is $\fy$.}
    \label{fig:twofeature:general}
\end{figure}

First, consider the case where $a = 0$, $b = 1$, as shown in Figure \ref{fig:twofeature:center}. In this case, the optimal action is entirely determined by the machine's feature $x_2$: the machine has perfect information. The optimal machine therefore retains all human categories, and attains perfect team performance. In general the expected team loss from $\kept = \{C_1, C_2\}$ increases radially as away from $(0, 1)$: we show in Appendix \ref{app:twofeatjustification} that it is proportional to $a^2 + (b-1)^2$. 

\begin{figure}[ht]
    \centering
    \begin{subfigure}[t]{0.49 
    \linewidth} 
        \centering
        \includegraphics[width=0.95\linewidth]{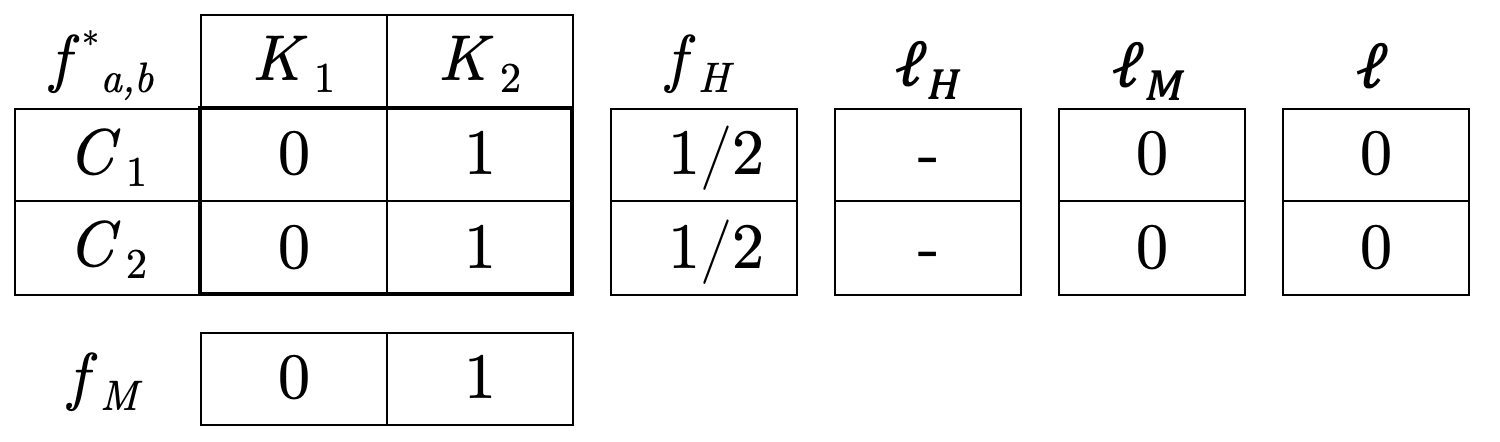}
    \caption{$a = 0$, $b = 1$.}
    \Description{A worked example showing the loss of the human and the optimal delegate for $\fn_{a,b}$ for $a = 0, b = 1$}
    \label{fig:twofeature:center}
    \end{subfigure}
    \begin{subfigure}[t]{0.49
    \linewidth} 
        \centering        \includegraphics[width=0.95\linewidth]{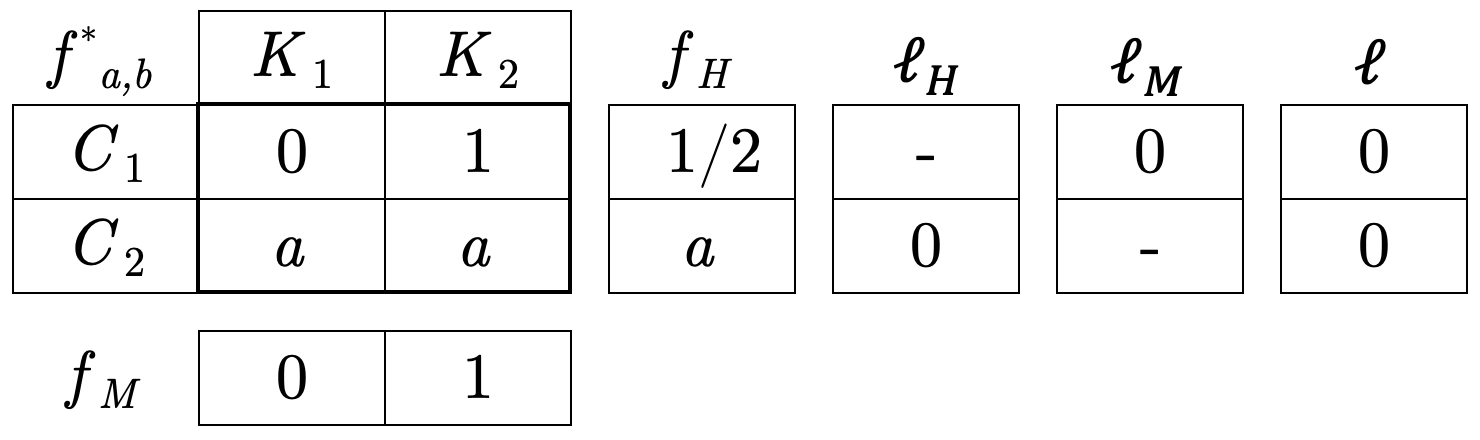}
        \caption{$a = b$.}
        \label{fig:twofeature:line}
        \Description{A worked example showing the loss of the human and the optimal delegate for $\fn_{a,b}$ for $a = b$}
    \end{subfigure}
    \caption{Optimal delegation for different settings of $a,b$. A dash (-) denotes an irrelevant value.}
\end{figure}

If $a = b$, the human-machine team operates with zero loss when the machine retains only $\kept = \{C_1\}$. We illustrate this in Figure \ref{fig:twofeature:line}: the human has perfect performance in $C_2$, and delegates to the machine in $C_1$ where the machine has perfect performance. In general, the loss from $\kept = \{C_1\}$ increases quadratically away from the line $a = b$: it is proportional to $(a-b)^2$.

Note that similarly, if the machine retains only the second human category, $\kept = \{C_2\}$, the machine will predict $\mfn(K_1) = a, \mfn(K_2) = b$ and attain perfect performance in $C_2$. Thus, when $\kept = \{C_2\}$, the expected team loss is determined by the human's loss in $C_1$, which is constant. Moreover, for all $(a,b)$, this is strictly better than the human acting alone, so $\kept = \emptyset$ is never optimal.\footnote{In general, one can always design a delegate that the human should use in at least one category, as long as the machine has at least one feature. This is because anytime the machine only retains a single human category $C$, the corresponding machine function $\mfn^{\{C\}}(K)$ can effectively observe use both $C$ and $K$ to select an action, which is strictly more information than the human.} 

By \Cref{prop:dropisoptimal}, the expected team loss of an optimal delegate $\mfnopt$ is the minimum attainable loss amongst different choices of $\kept$, which we show in Appendix \ref{app:twofeatjustification} is \[\loss(\hfnopt, \mfnopt) = \frac{1}{8}\cdot \min\left\{1, (a-b)^2, a^2 + (b-1)^2\right\}.\]
The optimal set of retained categories $\kept^*$ is the one that produces the minimal value above. In Figure \ref{fig:twofeaturecategories} we plot $\loss(\hfnopt, \mfnopt)$ and $\kept^*$ 
of $\fnopt_{\fx, \fy}$ for each $\fx, \fy$. 

\begin{figure}[ht!]
    \centering
    \begin{subfigure}[t]{0.49 
    \linewidth} 
        \centering
        \includegraphics[width=\linewidth]{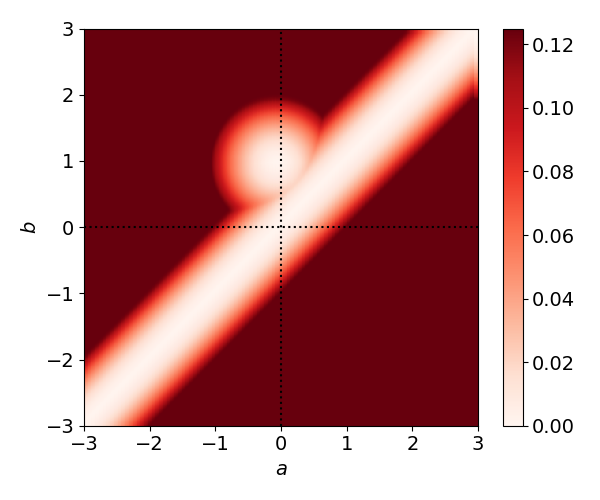}
        \caption{$\loss(\hfnopt, \mfnopt)$}
        \label{fig:twofeature:heatmap} 
        \Description{Heatmap showing the team loss of the optimal delegate for different values of $a, b$. There is a strip of low loss around the line $a = b$, and a sink of low loss around the point $(a,b) = (0,1)$.}
    \end{subfigure}
    \begin{subfigure}[t]{0.49\linewidth} 
        \centering
        \includegraphics[width=0.83\linewidth]{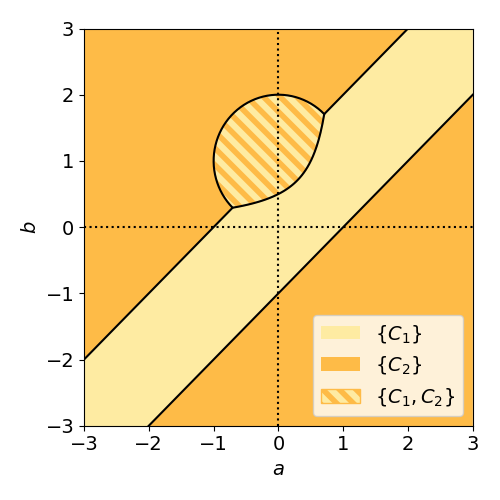}
        \caption{$\kept^*$}
        \Description{A set of three regions corresponding to different retained categories $\kept$. This plot visually resembles \ref{fig:twofeature:heatmap}.}
        \label{fig:twofeature:regions}
    \end{subfigure}
    \caption{For each pair $(a,b)$, we compute the optimal delegate $\mfnopt$ for ground truth function  $\fnopt_{\fx,\fy}$. In (a) we plot the expected loss of the human-machine team where the human delegates to $\mfnopt$, $\loss(\hfnopt, \mfnopt)$. In (b) we plot the categories $\kept^*$ that are retained by $\mfnopt$. By \Cref{prop:dropisoptimal} this corresponds exactly to the categories where $\mfnopt$ will be adopted; (b) can also be considered a plot of $\dcats(\hfnopt, \mfnopt)$.}
    \label{fig:twofeaturecategories} 
\end{figure}
Each region of Figure \ref{fig:twofeature:regions} corresponds to a qualitatively different delegate. In the two solid regions, the optimal delegate specializes to one category and lets the human handle the other one (in the lighter, inner strip the delegate specializes to $C_1$; in the darker, outer region the delegate specializes to $C_2$). In the striped ``teardrop'', the optimal delegate is a general-purpose machine designed to handle all categories. Despite the extremely simple setting in which each agent observes a single binary feature, the three regions have very unusual structures. The region corresponding to $\kept^* = \{C_1\}$ is non-convex, and the region corresponding to $\kept^* = \{C_2\}$ is not even connected. \Cref{prop:dropisoptimal} showed that to find the optimal delegate for ground truth function $\fnopt$ it is necessary to find the optimal set of retained categories: Figure \ref{fig:twofeature:regions} gives intuition that the relationship between the optimal set of retained categories and the ground truth function is complex even in the simplest of settings, so that finding the optimal delegate is a difficult endeavor. In the following section we formalize this intuition, showing that in general finding the optimal delegate is NP-hard.

We pause to compare the optimal machine to other potential machine designs in the context of the two-feature setting. 

\subsection{Oblivious and optimal delegation}\label{app:arbitrarilyworse}

\begin{figure}[h]
    \centering
    \begin{subfigure}[b]{0.49\linewidth}
        \centering
        \includegraphics[width=0.95\linewidth]{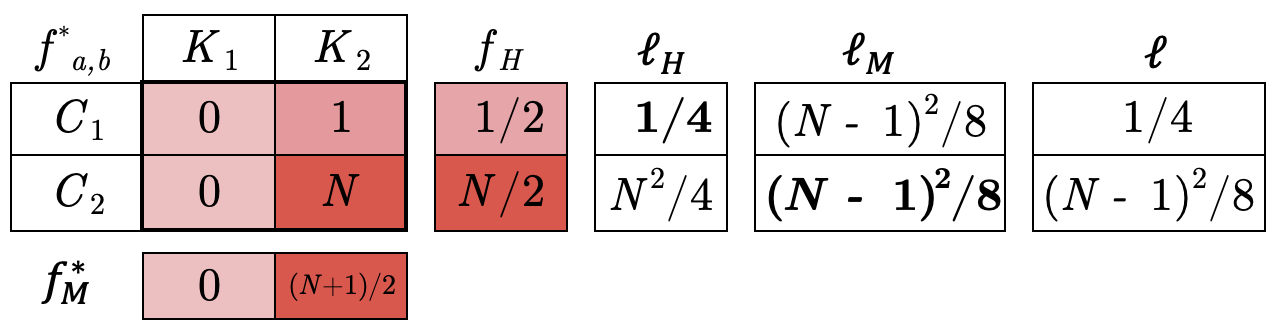}
        \caption{Oblivious delegate}
    \end{subfigure}
    \begin{subfigure}[b]{0.49\linewidth}
        \centering
        \includegraphics[width=0.95\linewidth]{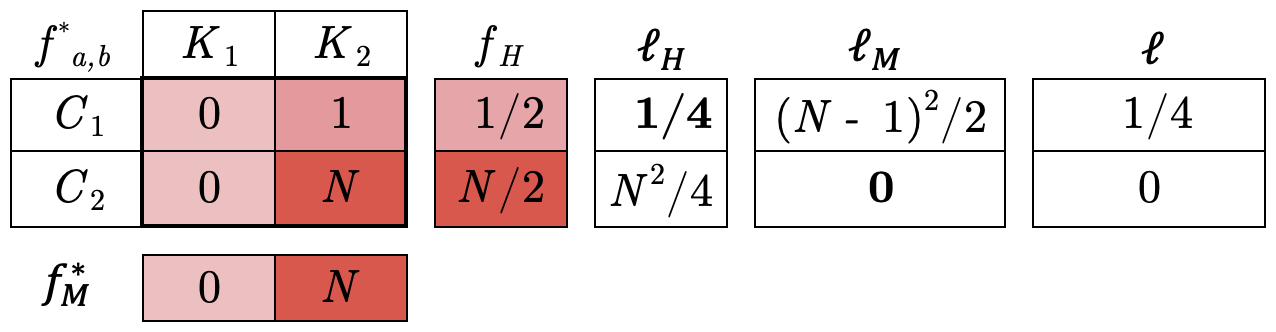}
        \caption{Optimal delegate}
    \end{subfigure}
    \caption{A family of delegation settings when the human and machine each observe a single distinct feature, in which the oblivious machine's team performance becomes arbitrarily worse than the optimal machine's team performance as $N$ grows large.}
    \Description{Grid resembling \ref{fig:twofeature:general} with $a = 0$, $b = N$.}
    \label{fig:twofeat:N}
\end{figure}

We first give an example of a family of delegation settings where the team performance of the oblivious machine $\mfnobliv$ can be \textit{arbitrarily} worse than the performance of the optimal machine $\mfnopt$. Consider the family of delegation settings shown in Figure \ref{fig:twofeat:N}. This corresponds to $(a,b) = (0, N)$; inspecting Figure \ref{fig:twofeaturecategories}, $(0, N)$ is in the region where $\kept^* = \{C_2\}$, and $\mfnopt$ achieves an expected team loss of $1/8$. Both the human and the oblivious machine struggle to select an action in state $\x_{22}$, since the optimal action $\fnopt(\x_{22})$ is very different from the other values in the same row and column. The human observes category $C_2$, and averages between $\fnopt(\x_{21}) = 0$ and $\fnopt(\x_{22}) = N$, which has a loss on the order of $N^2$ in state $\x_{22}$. The oblivious machine observes category $K_2$ and averages between $\fnopt(\x_{12}) = 1$ and $\fnopt(\x_{22}) = N$, which also has a loss on the order of $N^2$ in $\x_{22}$. Thus regardless of whether human delegates to the oblivious machine or not in category $C_2$, the expected team loss of the human and the oblivious machine will be on the order of $N^2$. As $N$ grows large, this is arbitrarily worse than the expected team loss of the human and the optimal machine, which is constant in $N$.

\subsection{Maximally adopted delegates}\label{app:maxadopt}

We now extend our analysis of the two-feature setting to compare the optimal delegate to the delegate that maximizes the likelihood of delegation, as might be implemented by a self-interested firm. In the two-feature setting, the designer will always be able to design a machine that is adopted in at least \textit{one} human category: he could simply design the optimal delegate. For which ground truth functions $\fnopt_{\fx, \fy}$ is it possible to design a delegate $\mfn$ that is adopted in \textit{both} categories, regardless of whether this has optimal team performance? 

\begin{figure}[ht]
    \centering
    \begin{subfigure}{0.49\linewidth} 
        \centering
        \includegraphics[width=0.83\linewidth]{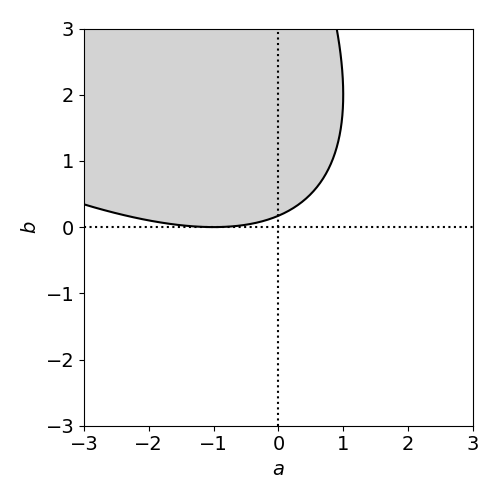}
        \caption{}
        \label{fig:twofeat:maxadopt:grey}
    \end{subfigure}
    \begin{subfigure}{0.49\linewidth} 
        \centering
        \includegraphics[width=0.83\linewidth]{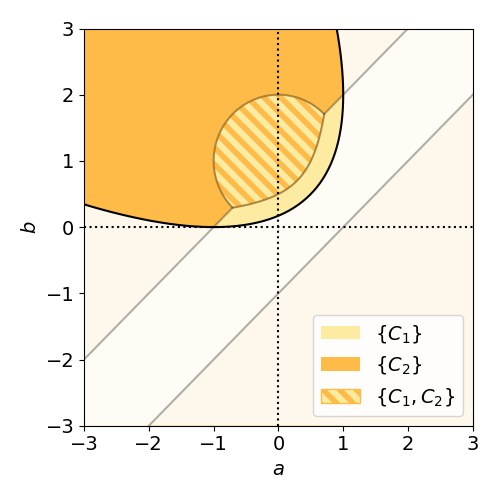}
        \caption{}
        \label{fig:twofeat:maxadopt:overlay}
    \end{subfigure}
    \caption{The grey shaded region of Figure \ref{fig:twofeat:maxadopt:grey} is the set of delegation settings $\fnopt_{\fx, \fy}$ where a firm can design a machine which is adopted in all categories. In Figure \ref{fig:twofeat:maxadopt:overlay}, we highlight this region in Figure \ref{fig:twofeature:regions}, which showed the regions where the optimal delegate will be adopted by the human user. The solid yellow regions within the highlighted set correspond to delegation settings where a firm seeking to maximize adoption will design a sub-optimal machine.}
    \label{fig:twofeat:maxadopt}
\end{figure}

In Appendix \ref{app:twofeatjustification:maxadopt}, we derive the set of $\fnopt_{a,b}$ such that it is \textit{possible} to design a machine that will be adopted in both $C_1$ and $C_2$; we plot the corresponding $(a,b)$ in Figure \ref{fig:twofeat:maxadopt}. Figure \ref{fig:twofeat:maxadopt:overlay} compares this region to the regions where the optimal delegate will be adopted from Figure \ref{fig:twofeature:regions}; we see that -- necessarily -- it contains the striped region where it is optimal to design a machine that will be adopted in both categories. However, it also slices through both regions where only the optimal delegate is only adopted in a single category: here, maximizing adoption comes at a cost to team performance. This indicates that delegates which maximize adoption are suboptimal for an \textit{unbounded} set of delegation settings. In Appendix \ref{app:twofeatjustification:maxadopt} we give intuition for the shape of this region. Moreover, we show that the delegate with maximal adoption can be \textit{arbitrarily} worse than the optimal delegate, and that for most settings the maximally adopted machine has worse team performance than even the oblivious machine acting alone. 

\section{Tractability}\label{sec:results}

We have established that it is desirable to design optimal delegates. In this section, we investigate whether it is tractable. We discover two large families of delegation settings in which it is possible to efficiently find an optimal delegate. However, we show that this is intractable in general. We provide proofs in Appendix \ref{app:tractproofs}.

We first pause to define efficiency. In general, in order to describe a delegation setting $\idxset{H}, \idxset{M}, \Prob, \fnopt$, one must specify $\Prob(\x)$ and $\fnopt(\x)$ for each state $\x$, so that the size of a delegation setting is the number of states $n$. An efficient algorithm is then one in which an optimal delegate may be found in time polynomial in $n$.

\subsection{Efficient algorithms}

We first show that if the delegation setting is separable, that is, $\fnopt$ can be decomposed additively into functions of the human and machine features respectively, and the probabilities $\Prob$ satisfy  independence and regularity conditions, we can efficiently find an optimal delegate. 

To formalize separability, we first need to define the consistency of a human category $C$ and a machine category $K$: suppose there are some shared features observed by both the human and machine, that is, $\idxset{S} = \idxset{H} \cap \idxset{M} \neq \emptyset$. Since a human can't distinguish between states in $C$, each state $\x\in C$ will have the same shared features $\x_S$. Likewise, all states in $\z \in K$ will have the same $\z_S$. Then $C$ and $K$ are inconsistent if the shared features are different the two categories -- in this case, it would be impossible for the human to be in category $C$ and the machine to be in category $K$ at the same time. Formally, $C$ and $K$ are inconsistent if for $\x \in C$, $\z \in K$, we have $\x_S \neq \z_S$.

Now, a \textit{function} $\fnopt$ is separable with respect to observable feature sets $\idxset{H}$ and $\idxset{M}$ if there are some functions $u, w$ such that $\fnopt(\x) = u(C(\x)) + w(K(\x))$. A \textit{distribution} $\Prob$ is separable if all consistent human and machine categories are independent, that is, there exist rationals $p_i, q_j$ such that $\Prob(C_i \cap K_j) = p_i \cdot q_j$ for all consistent $C_i, K_j$. A \textit{delegation setting} $\idxset{H}, \idxset{M}, \Prob, \fnopt$ is separable if $\fnopt$ is separable with respect to $\idxset{H}, \idxset{M}$ and $\Prob$ is separable.  Notably, if $\fnopt$ is linear in $\x$, it is separable for all $\idxset{H}, \idxset{M}$. Moreover, if the features take values independently, $\Prob$ will be separable.

We will also assume that the probabilities $p_i$ have \textit{polynomial precision}, that is, $p_i = t_i / T$ for $t_i, T \in \mathbb{N}$ and $T \in O(\text{poly}(n))$.

\begin{restatable}{theorem}{thmseparable}\label{thm:separable}
    If the delegation setting is separable with polynomial precision probabilities, we can find an optimal delegate $\mfnopt$ in time polynomial in $n$.
\end{restatable}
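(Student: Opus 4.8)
The plan is to start from the variance reformulation in \Cref{prop:variancegame:full}, which reduces the task to choosing a retained set $\kept \subseteq \hcats$ minimizing
\[
\sum_{\hcat \in \hcats \setminus \kept} \modvar(\fnopt \mid \hcat) \;+\; \sum_K \modvar(\fnopt \mid K \cap X(\kept)),
\]
and then to exploit separability to collapse this combinatorial search to a one-dimensional one. First I would decompose into independent ``grid'' subproblems indexed by the shared features $\idxset{S}=\idxset{H}\cap\idxset{M}$. Since every human and machine category fixes $\x_S$, inconsistent categories have empty (hence probability-zero) intersection, so both sums above split along the shared-feature blocks and the retained set may be chosen independently within each block. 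Within a block the unobserved features contribute nothing: because $\fnopt$ is separable it is constant on each cell $\hcat_i \cap K_j$, equal to $v_{ij}=u_i+w_j$, and separability of $\Prob$ gives cell mass $\Prob(\hcat_i\cap K_j)=p_iq_j$; these cell values and masses are the only data that matter, placing each block exactly in the matrix form of \Cref{cor:variancegame:simple}.

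Next I would use separability to reduce both cost terms to scalar quantities. Conditioned on a row $\hcat_i$, cell $K_j$ has weight proportional to $q_j$ (independent of $i$) and its value $u_i+w_j$ varies only through $w_j$, so $\Var(\fnopt \mid \hcat_i)$ equals a single constant $\sigma_w^2$ for \emph{every} row, and the yield cost of row $i$ is $\modvar(\fnopt \mid \hcat_i)=p_i\,\sigma_w^2$. Symmetrically, conditioned on a retained cell in column $K_j$ the weight is proportional to $p_i$ over $i\in\kept$ and the value varies only through $u_i$, so $\Var(\fnopt \mid K_j \cap X(\kept))=\Var_{p\mid\kept}(u)$ for every $j$ while $\Prob(K_j\cap X(\kept))=q_jP_\kept$. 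Writing $P_\kept=\sum_{i\in\kept}p_i$, summing the column costs over $j$, and normalizing within the block so that $\sum_i p_i=\sum_j q_j=1$, the per-block objective becomes
\[
F(\kept) \;=\; (1-P_\kept)\,\sigma_w^2 \;+\; P_\kept\,\Var_{p\mid\kept}(u).
\]

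The crux — and the step I expect to be the main obstacle — is to show that an optimal $\kept$ may be taken to be an \emph{interval} once the rows are sorted by $u_i$, which is what replaces the exponential search by a quadratic one; the difficulty is that the coupling term $P_\kept\Var_{p\mid\kept}(u)$ is nonlinear and non-additive in $\kept$. I would argue this by a fixed-center swap. Define $G(\kept,\mu)=\sum_{i\notin\kept}p_i\sigma_w^2+\sum_{i\in\kept}p_i(u_i-\mu)^2$, so that $F(\kept)=G(\kept,\mu_\kept)$ for the retained mean $\mu_\kept$, and $F(\kept)\le G(\kept,\mu)$ for every $\mu$ since the mean minimizes squared deviation. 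For fixed $\mu$, $G(\cdot,\mu)$ is minimized termwise by the threshold set $\kept_\mu=\{i:(u_i-\mu)^2<\sigma_w^2\}$. Taking an optimal $\kept^*$ with mean $\mu^*=\mu_{\kept^*}$ yields $F(\kept_{\mu^*})\le G(\kept_{\mu^*},\mu^*)\le G(\kept^*,\mu^*)=F(\kept^*)$, so $\kept_{\mu^*}=\{i:|u_i-\mu^*|<\sigma_w\}$ is itself optimal — and this set is precisely the rows whose $u_i$ fall in an interval, hence contiguous in sorted order.

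Finally I would assemble the algorithm. Within each block, sort the rows by $u_i$ (equivalently by $v_{ij_0}$ for any fixed column $j_0$ in the block) and evaluate $F$ on all $O(h^2)$ contiguous intervals, using prefix sums of $p_i$, $p_iu_i$, and $p_iu_i^2$ so each evaluation is $O(1)$; take the best interval per block and union the choices across blocks, which is globally optimal by the block decomposition. Every quantity is a sum or ratio of the rationals $p_i=t_i/T$ and the cell values $v_{ij}$, and the polynomial-precision assumption $T\in O(\mathrm{poly}(n))$ guarantees these are computable exactly with polynomially many bits; since $h,m\le n$, the total running time is polynomial in $n$. The procedure returns $\kept^*$, and \Cref{prop:dropisoptimal} then gives the optimal delegate $\mfnopt=\mfn^{\kept^*}$.
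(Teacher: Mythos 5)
Your proposal is correct, and while it follows the paper's overall skeleton (block decomposition along shared features, reduction of the per-block objective to $(1-P_\kept)\,\sigma_w^2 + P_\kept\,\Var_{p\mid\kept}(u)$, then a contiguity argument enabling an $O(h^2)$ search over sorted intervals), it proves the crucial contiguity step by a genuinely different and arguably cleaner argument. The paper stratifies by cardinality: it rewrites the objective as $\min_k\,(1-k/h)\Var(w) + (k/h)\min_{|R|=k}\Var(u_i\mid i\in R)$ and proves (Lemma~\ref{lem:minvar}) that the minimum-variance subset of each fixed size $k$ is contiguous, via a swap argument built on a variance-decomposition identity; it then handles non-uniform $p_i=t_i/T$ by blowing the instance up into a multiset of $T$ uniformly weighted copies and proving separately (Lemma~\ref{lem:separable:fullsetsincluded}) that an optimal solution of the blown-up problem never splits the copies of a single $u_i$ --- this is where the polynomial-precision hypothesis does real work. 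Your fixed-center relaxation $G(\kept,\mu)$ sidesteps both steps at once: since $F(\kept)=\min_\mu G(\kept,\mu)$ and $G(\cdot,\mu^*)$ is minimized termwise by the threshold set $\{i:(u_i-\mu^*)^2<\sigma_w^2\}$, the chain $F(\kept_{\mu^*})\le G(\kept_{\mu^*},\mu^*)\le G(\kept^*,\mu^*)=F(\kept^*)$ exhibits an optimal retained set that is an interval in $u$-value, with arbitrary positive weights $p_i$ handled natively. This buys you a shorter proof in which polynomial precision is needed only for the bit-complexity of exact arithmetic rather than for the combinatorial reduction, whereas the paper's cardinality-stratified lemma is a reusable standalone fact about minimum-variance subsets. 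Two trivial loose ends you should tie off: the degenerate case $\kept^*=\emptyset$ (where $\mu_{\kept^*}$ is undefined, but the empty set is already an interval), and the convention at ties $(u_i-\mu^*)^2=\sigma_w^2$, neither of which affects correctness.
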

\begin{proof}[Proof sketch]
    The simplest case of this problem is when the human and machine partition the set of all features, and when each state occurs with equal probability. In this case, we can write $v_{ij} = \fnopt(\x_{ij}) = u_i + w_j$ for each $i, j$, for some $u_i, w_j \in \mathbb{R}$. By \Cref{cor:variancegame:simple}, we need to find some set $R$ solving \begin{align*}\min_R \sum_{i \notin R} \modvar(v_{ij} | i) + \sum_j \modvar(v_{ij}| i\in R, j).
    \end{align*} In the separable case, we may simplify this to \[\min_k \left[\left(1 - \frac{k}{|\hcats|}\right) \Var(w_j | i)+ \frac{k}{|\hcats|} \min_{R: |R| = k} \Var(u_i | i \in R, j)\right].\] The problem is now simply to find the minimum variance subset of $\{u_i\}$ of size $k$ for each $1 \leq k \leq |\hcats|$, where $|\hcats| = O(n)$. We can do this efficiently by sorting the values of $u_i$ and examining the variance of \textit{contiguous} subsets of size $k$ \citep{stackoverflowminvar}.
    We may reduce more complicated versions of the problem with arbitrary human- and machine-observable features and arbitrary polynomial-precision probabilities $p, q$ to this setting.
\end{proof}

We also find that if the human or machine has access to a limited number of features, the problem is again tractable.

\begin{restatable}{theorem}{thmconstant}\label{thm:constant}
    Suppose that $|\idxset{H} \setminus \idxset{M}| = O(1)$ or $|\idxset{M} \setminus \idxset{H}|= O(1)$. Then we may find an optimal delegate in time polynomial in $n$.
\end{restatable}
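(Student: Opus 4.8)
The plan is to reduce to the combinatorial variance-minimization problem of \Cref{prop:variancegame:full} and then exploit a decomposition by shared features. Write $\idxset{S} = \idxset{H}\cap\idxset{M}$ for the shared features, $A = \idxset{H}\setminus\idxset{M}$ for the human-only features, and $B = \idxset{M}\setminus\idxset{H}$ for the machine-only features. Since a human category $C$ and a machine category $K$ pin down the same shared coordinates whenever they are consistent, and are otherwise disjoint, the state space partitions into $2^{|\idxset{S}|}\le n$ \emph{blocks}, one per assignment to $\idxset{S}$, and each human category, each machine category, and every nonempty intersection $C\cap K$ lives inside a single block. Consequently the objective of \Cref{prop:variancegame:full} decomposes additively across blocks: the dropped term $\modvar(\fnopt\mid C)$ and the retained term $\modvar(\fnopt\mid K\cap X(\kept))$ each involve only categories from one block (indeed $K\cap X(\kept)$ only sees retained categories in $K$'s own block), and the retained categories chosen in one block do not affect any other. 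So it suffices to solve each block independently. Within a block there are exactly $2^{|A|}$ human categories and $2^{|B|}$ machine categories, and I would precompute in one $O(n)$ pass over the states the weight and the first and second moments of $\fnopt$ in each cell $C\cap K$, from which every required $\modvar$ value follows in constant time.

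\emph{Case $|A| = |\idxset{H}\setminus\idxset{M}| = O(1)$.} Here each block contains only $2^{|A|} = O(1)$ human categories, so I would simply enumerate all $2^{2^{|A|}} = O(1)$ candidate retained sets within the block, evaluate the block objective for each from the precomputed moments, and keep the best. Summing over the at-most-$n$ blocks gives a $\mathrm{poly}(n)$ algorithm.

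\emph{Case $|B| = |\idxset{M}\setminus\idxset{H}| = O(1)$.} This is the main obstacle, since a block may now contain exponentially many human categories, ruling out subset enumeration. The key idea is a continuous reformulation that ``guesses'' the machine's action. Fix a block whose machine categories are indexed by $b\in\{1,\dots,t\}$ with $t = 2^{|B|} = O(1)$. For a candidate action vector $\mu = (\mu_b)\in\mathbb{R}^{t}$, retaining human category $a$ contributes $c_a(\mu) := \sum_b \sum_{\x\in C_a\cap K_b}\Prob(\x)(\fnopt(\x)-\mu_b)^2$ to the machine's loss, while dropping it costs $h_a := \modvar(\fnopt\mid C_a)$; the block objective for a retained set $R$ is $\sum_{a\notin R} h_a + \sum_b \modvar(\fnopt\mid\bigcup_{a\in R}(C_a\cap K_b))$. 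Because $\modvar(\fnopt\mid S) = \min_{\mu}\sum_{\x\in S}\Prob(\x)(\fnopt(\x)-\mu)^2$ and each $\mu_b$ is minimized independently, this objective equals $\min_{\mu}\sum_a\min\{h_a,\,c_a(\mu)\}$ after swapping the discrete minimization over $R$ with the continuous minimization over $\mu$; for each fixed $\mu$ the inner optimizer retains exactly $R(\mu)=\{a : c_a(\mu)\le h_a\}$. I would then observe that $R(\mu)$ is constant on each cell of the arrangement of the $g\le n$ quadric surfaces $\{c_a(\mu)=h_a\}$ in $\mathbb{R}^{t}$, since each $c_a - h_a$ is a quadratic in $\mu$ (in fact separable across the coordinates $\mu_b$, so its zero set is an axis-aligned ellipsoid). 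For constant $t$ this arrangement has only $O(g^{t})$ cells, and a representative point of each cell --- hence every retained set that arises as some $R(\mu)$ --- can be enumerated in time polynomial in $n$ by standard sign-condition enumeration for bounded-degree polynomials in fixed dimension. Evaluating the true block objective on each of these polynomially many candidate sets and taking the minimum recovers the block's optimal retained set, and thus, by \Cref{prop:dropisoptimal}, an optimal delegate.

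The crux is establishing that, despite exponentially many human-category subsets, only polynomially many are candidates once $|B|=O(1)$: the swap of minimizations turns the discrete search into an optimization over a constant-dimensional continuous parameter, and the low-dimensional quadric arrangement then bounds the number of distinct induced subsets. The remaining ingredients --- decomposition across shared-feature blocks, moment precomputation, and the constant-size enumeration in the $|A|=O(1)$ case --- are routine, as are degenerate cases such as empty or zero-probability cells, which I would handle by the convention that $\modvar$ of a zero-weight set is $0$.
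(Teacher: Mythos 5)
Your proposal is correct and follows essentially the same route as the paper: reduce to independent subproblems indexed by the shared-feature values (the paper's Lemma \ref{lem:shared}), brute-force the $O(1)$-many retained sets when the human-only feature set is constant, and, when the machine-only feature set is constant, identify the candidate retained sets with the cells of an arrangement of $h$ ellipsoids in the constant-dimensional space of machine action vectors. The only cosmetic difference is that you justify restricting to arrangement cells via an explicit swap of the discrete and continuous minimizations, whereas the paper gets the same conclusion by invoking \Cref{prop:dropisoptimal} (the optimal retained set equals the adoption set, i.e.\ the set of ellipsoids containing the optimal action vector); these are the same insight.
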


Finally, in any given delegation setting we may efficiently determine whether it is possible to design a machine so that the human-machine team has zero loss.
\begin{restatable}{proposition}{propperfect}\label{prop:perfectteams}
    For any delegation setting, we may determine whether there exists a function $\mfnopt$ such that $\loss(\hfnopt, \mfnopt) = 0$ in time linear in $n$.
\end{restatable}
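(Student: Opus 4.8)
The plan is to characterize exactly when a zero-loss team exists by a local condition on $\fnopt$, and then to check that condition with two linear passes over the states.

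First I would use the fact that team loss is a nonnegative weighted sum, $\loss(\hfnopt,\mfn)=\sum_{C}\Prob(C)\min\{\hloss(\hfnopt,C),\mloss(\mfn,C)\}$, so $\loss(\hfnopt,\mfn)=0$ if and only if for every human category $C$ with $\Prob(C)>0$ we have $\min\{\hloss(\hfnopt,C),\mloss(\mfn,C)\}=0$; that is, in each such category either the human or the machine incurs zero loss. Since $\hloss(\hfnopt,C)=\Var(\fnopt\mid C)$, the human is perfect in $C$ precisely when $\fnopt$ is constant over the positive-probability states of $C$; I would call such $C$ \emph{easy} and all others \emph{hard}. In a hard category $\hloss(\hfnopt,C)>0$, so zero team loss forces $\mloss(\mfn,C)=0$, which in turn forces the equality $\mfn(K(\x))=\fnopt(\x)$ for every positive-probability state $\x\in C$.

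Next I would observe that the only way these forced constraints can conflict is through the machine's action function assigning a single value to each machine category. This yields the characterization I aim to prove: a zero-loss delegate exists if and only if, for every machine category $K$, all positive-probability states of $K$ that lie in hard human categories share a common value of $\fnopt$. For the forward direction, any two such states $\x,\x'\in K$ each pin $\mfn(K)$ to their own $\fnopt$-value, so those values must agree. For the reverse direction, I would set $\mfn(K)$ equal to this common value (choosing it arbitrarily for machine categories containing no hard-category state); then every hard category is handled perfectly by the machine and every easy category perfectly by the human, so every term of the loss vanishes. One can also note that the human does delegate in each hard category, since there $0=\mloss(\mfn,C)<\hloss(\hfnopt,C)$.

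Finally I would turn the characterization into a linear-time test. A first pass over the states records, for each human category, a reference value of $\fnopt$ and flags the category hard as soon as a positive-probability state with a different value appears; this determines hardness for all human categories. A second pass records, for each machine category $K$, a reference value over its positive-probability hard-category states and declares infeasibility upon the first disagreement, reporting feasibility otherwise. Indexing categories by the restrictions $\hcatvec{x}$ and $\mcatvec{x}$, each pass touches every state a constant number of times, for $O(n)$ total. The main thing to get right here is not a computation but the coupling argument in the middle step: recognizing that feasibility decouples across human categories into the single requirement that, within each machine category, the hard states agree, and correctly handling the bookkeeping of zero-probability states and of machine categories that mix easy and hard states.
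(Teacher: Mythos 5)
Your proposal is correct and follows essentially the same route as the paper's proof: identify the human categories where $\fnopt$ is non-constant on positive-probability states (your ``hard'' categories are exactly the paper's $\kept^0$), observe that these force the machine's action on each machine category, and check consistency of the forced values in linear time. The paper packages the final check as testing whether $\loss(\hfnopt, \mfn^{\kept^0}) = 0$ and separately argues that retaining a superset of $\kept^0$ cannot help, but this is the same agreement condition you state directly.
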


\subsection{Intractability}

The problem of finding an optimal delegate is NP-hard in general.

\begin{restatable}{theorem}{thmnp}\label{thm:np}
    Unless $\textnormal{P} = \textnormal{NP}$, there is no algorithm to find an optimal delegate $\mfnopt$ in time polynomial in $n$ for all delegation settings.
\end{restatable}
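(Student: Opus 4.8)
The plan is to prove hardness for the clean combinatorial form of the problem given by \Cref{cor:variancegame:simple} — the partition case where $\idxset{H}$ and $\idxset{M}$ split $[d]$ — since hardness for this special case immediately implies hardness for the general problem. By \Cref{prop:variancegame:full} and \Cref{cor:variancegame:simple}, finding $\mfnopt$ is equivalent to the \textsc{VarianceAssignment} problem of choosing a set of rows $R$ of the value matrix $V$ to minimize $\sum_{i\notin R}\modvar(v_{ij}\mid i) + \sum_j \modvar(v_{ij}\mid i\in R, j)$, so it suffices to exhibit a polynomial-time reduction from a known NP-hard problem to \textsc{VarianceAssignment}.

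The starting point is to expose the combinatorial structure hidden in the column term. Viewing each row $i$ as a point $v_i = (v_{i1},\dots,v_{im})\in\mathbb{R}^m$, the retained-column cost is, up to the uniform weighting, a single-cluster sum of squares $\sum_{i\in R}\lVert v_i - \mu_R\rVert^2$ about the centroid $\mu_R$ of the retained points. Using $\sum_{i\in R}\lVert v_i-\mu_R\rVert^2 = \tfrac{1}{|R|}\sum_{\{i,i'\}\subseteq R}\lVert v_i-v_{i'}\rVert^2$, this becomes a (normalized) sum of pairwise squared distances among retained rows, while the first term is a modular cost charging a fixed $\modvar(v_{ij}\mid i)$ for each dropped row. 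Thus \textsc{VarianceAssignment} is precisely the problem of selecting a subset of points to minimize internal pairwise dissimilarity plus a separable drop penalty — an Independent-Set / sparsest-subgraph flavored objective. I would reduce from a strongly NP-hard problem in this family (e.g.\ \textsc{$k$-Independent Set} / \textsc{Densest-$k$-Subgraph}), identifying vertices with rows, so that ``does a low-internal-edge $k$-subset exist'' is answered by the optimal objective value.

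Two gadget-design steps make this precise. First, I must realize the desired dissimilarities $\lVert v_i - v_{i'}\rVert^2$ — essentially a scaled adjacency matrix — as genuine entries of a value matrix. Since a raw $0/1$ adjacency matrix is not itself a matrix of squared Euclidean distances, I would add a large common offset $L$ to all off-diagonal dissimilarities; for $L$ large the matrix $L(\mathbf{1}\mathbf{1}^\top - I) + A$ is of negative type and hence realizable by points in sufficiently many coordinates (columns), with the offset contributing only a constant $\binom{|R|}{2}L$ at any fixed cardinality. Second, I must pin down the drop penalties $\modvar(v_{ij}\mid i)$; in the uniform partition case these are coupled to the same matrix $V$ that encodes the distances, so I would either add auxiliary columns to tune them independently, or invoke the more flexible reformulation of \Cref{prop:variancegame:full} with non-uniform, polynomial-precision probabilities to set the per-row drop costs directly.

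The main obstacle, and the reason the reduction is delicate rather than routine, is the variance normalization: the retained-column cost carries a factor $1/|R|$ (from dividing by the retained probability mass), so the objective is \emph{not} modular and, worse, couples subsets of \emph{different} sizes in a size-dependent way. This is not an artifact one can simply remove — flattening the denominator (say, by a heavy always-retained anchor row) linearizes the column term into a separable sum of squares and trivializes the problem, which confirms that the normalization is exactly what carries the hardness. My plan is therefore to neutralize it by controlling the optimal cardinality: I would calibrate the drop penalties so that the global minimizer of \textsc{VarianceAssignment} provably has a prescribed size $k$, reducing the comparison to a single cardinality class in which the $1/|R|$ factor is a constant and the objective faithfully tracks ``minimum internal edges among $k$-subsets.'' Verifying that this calibration forces the intended cardinality at optimum — and that the resulting threshold on the objective cleanly separates yes- from no-instances of the source problem — is the crux of the argument; the realizability and offset bookkeeping are then routine.
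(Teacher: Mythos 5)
You have correctly isolated the structure of the problem: passing to \textsc{VarianceAssignment} in the partition case, rewriting the retained-column cost as $\frac{1}{n\lvert R\rvert}\sum_{\{i,i'\}\subseteq R}\lVert v_i - v_{i'}\rVert^2$, and identifying the $1/\lvert R\rvert$ normalization as the carrier of the hardness. But the proposal stops exactly where the proof has to start. Your plan is to reduce from a fixed-cardinality problem and then force the optimum to have size $k$ by calibrating the drop penalties, and both halves of that plan have unresolved obstacles. First, in the uniform partition case the drop penalty $\modvar(v_{ij}\mid i)$ is (up to the factor $1/h$) the variance of row $i$, and is therefore determined by the very entries that encode your pairwise distances; auxiliary columns introduced to tune it also enter the retained-column sum and perturb the distances, and non-uniform probabilities $p_i$ reweight the column variances as well, so ``calibrate the drop penalties'' is not an independent knob. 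Second, even granting tunable penalties, you must prove that the global minimizer has cardinality exactly $k$ on both yes- and no-instances of the source problem; with the $1/\lvert R\rvert$ factor present, and with your offset $L$ contributing roughly $(\lvert R\rvert-1)L/2$ to the normalized objective (a strong, size-dependent push toward small $R$), nothing in the proposal rules out that a set of size $k\pm 1$ beats every size-$k$ set on a no-instance. You flag this calibration as the crux and leave it unverified; it is the entire difficulty, not bookkeeping.

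The paper resolves the normalization differently, and it is worth seeing why that works where cardinality-forcing struggles. Rather than fighting the $1/\lvert R\rvert$ factor, it reduces from a problem that already carries that normalization --- maximum-density subgraph with regularized, possibly negative edge weights (\textsc{NegRegularDSD}), itself obtained from \textsc{MaxRegularClique} --- and engineers the rows of the value matrix to have zero mean and unit squared norm (two columns $a_j$ and $-a_j$ per edge, entries $\pm\sqrt{\lvert w\rvert/2}$). With $\lVert v_i\rVert^2 = 1$ and $\E[v_{ij}\mid i]=0$, the drop term contributes $\frac{h-\lvert R\rvert}{n}$ and the diagonal of the pairwise-distance expansion contributes $\frac{\lvert R\rvert - 1}{n}$, so the size-dependent pieces cancel and the objective collapses to a constant minus $\frac{2}{n\lvert R\rvert}\sum_{i < k:\, i,k \in R}\langle v_i, v_k\rangle$, i.e.\ exactly the signed density of the induced subgraph. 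If you want to salvage your route, the most direct repair is to adopt this normalization device and switch the source problem to a density-normalized one: it simultaneously makes the drop penalties uniform and cancels the size dependence, which are precisely the two gaps in your argument.
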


\begin{proof}[Proof sketch]
    We first consider the problem \textsc{NegRegularDSD} of finding a maximum density subgraph of a graph with weighted edges, where the weights may be negative but satisfy some regularity conditions. We reduce the NP-hard problem of finding a maximal clique in a regular graph to \textsc{NegRegularDSD}. We then reduce \textsc{NegRegularDSD} to \textsc{VarianceAssignment}
    \varXiv{by constructing an instance $\{v_{ij}\}$ of \textsc{VarianceAssignment} where rows $i$ correspond to nodes, columns $j$ correspond to edges, and the values $v_{ij}$ and $v_{kj}$ are functions of the weight on edge $j = (i,k)$.}{.}
\end{proof}
This result motivates why the optimal delegate has not previously been characterized: the problem has a fundamentally combinatorial nature that makes it intractable to solve.

\section{Computational experiments with iterative design}\label{sec:simulations}

In this section, we explore the implications of an alternative design pipeline: instead of designing the optimal delegate, the designer might iteratively update the delegate depending on when it is adopted. Concretely, this process is as follows.
\begin{enumerate}[noitemsep]
    \item The designer designs and deploys an oblivious delegate. 
    \item The designer observes the categories where the human delegates to this machine.
    \item The designer deploys a new machine that is optimized for these categories.
    \item The designer repeats steps (2) and (3) until convergence.
\end{enumerate} 
If we think back to our earlier example of an online shopping agent, this would correspond to a designer initially deploying a general-purpose agent, which seeks to assist with purchases of all kinds, but then observing that the agent turns out to only be used by people for a particular kind of purchases --- say for travel-related purchases. The designer then re-optimizes the agent for travel purchasing, and continues this cycle until the machine is designed for the settings where it's actually used. 

Formally, suppose the designer initially designs the oblivious machine $\mfnobliv$. The human then delegates to $\mfnobliv$ in categories $\dcats(\hfnopt, \mfnobliv)$. The designer can improve the team performance with \[\mfn^{(1)} = \mfn^{\dcats(\hfnopt, \mfnobliv)}.\] The human adopts $\mfn^{(1)}$ in categories $\dcats(\hfnopt, \mfn^{(1)})$, achieving even better team performance. The designer iteratively takes \[\mfn^{(t+1)} = \mfn^{\dcats(\hfnopt, \mfn^{(t)})}.\] Since there are a finite number of subsets of human categories $\dcats$, and the performance improves after each iteration, this process will eventually converge to some fixed point $\mfn^{\textnormal{iter}}$.\footnote{To be precise: when the human chooses to delegate to $\mfn^{(t+1)}$ in categories $\dcats(\hfnopt, \mfn^{(t+1)})$ rather than $\dcats(\hfnopt, \mfn^{(t)})$, by definition of $\dcats$ this must either \textit{strictly} improve the loss, or satisfy $\dcats(\hfnopt, \mfn^{(t+1)}) \subseteq \dcats(\hfnopt, \mfn^{(t)})$. The re-optimization step also weakly improves the loss. Thus, it is impossible to visit the same $\dcats$ more than once in non-consecutive iterations.} 

In order to carry out this iteration, the designer only needs to know the loss-minimizing action within each machine category, restricting to states where $\mfn^{(t)}$ is adopted -- formally, $\E[\fnopt | X(\dcats(\hfnopt, \mfn^{(t)}))\cap K]$ for each machine category $K$ -- rather than needing to know the full delegation setting $\idxset{H}, \idxset{M}, \Prob, \fnopt$. Notably, this iterative design process can arise naturally through training: roughly, if the machine is iteratively retrained on data from past interactions at long enough intervals, the distribution of past interactions will shift to categories in which the machine is adopted.

We give an example in Figure \ref{fig:iteration} where the iterative solution $\mfniter$ is significantly different from the optimal delegate $\mfnopt$: $\mfniter = (0.3, 0.85)$ takes a small action in $K_1$ and a large action in $K_2$, while $\mfnopt =  (0.9, 0.1)$ has the opposite design: it takes a large action in $K_1$ and a small action in $K_2$.

\begin{figure*}[h!]
    \centering
    \includegraphics[width=\linewidth]{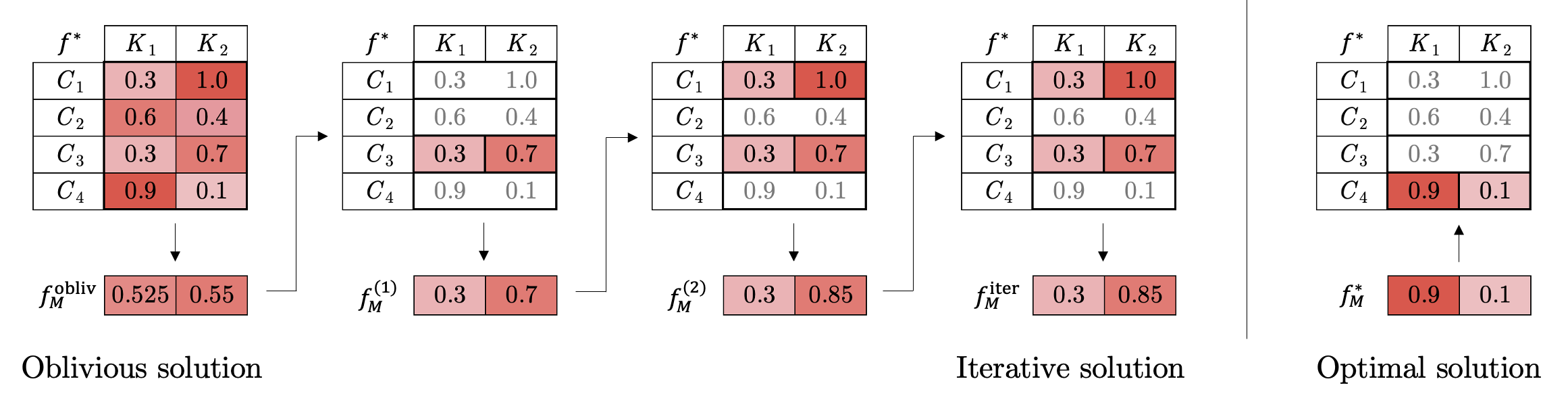}
    \caption{Illustration of iterative vs. optimal machine design. In the delegation setting shown, the human observes two binary features (inducing four human categories $C_1$ to $C_4$), the machine observes one distinct binary feature (inducing two machine categories $K_1, K_2$), and each feature takes value 0 or 1 with equal probability. There is then a single state $\x_{ij}$ in $C_i \cap K_j$, and the value in grid entry $(i,j)$ is $\fnopt(\x_{ij})$. The sequence of machines $\mfn$ on the left shows the process of iteration: a designer first builds a machine $\mfnobliv$ to be used in all human  categories. This machine is only used in category $C_3$, so the machine is updated to $\mfn^{(1)}$, which  performs optimally in $C_3$. Now, $\mfn^{(1)}$ is used in $\{C_1, C_3\}$, so the machine is updated to retain only those categories, yielding $\mfn^{(2)}$. Finally, $\mfn^{(2)}$ is actually used in $\{C_1, C_3\}$, so the process ends with $\mfniter = \mfn^{(2)}$. By contrast, the optimal machine $\mfnopt$ retains only $C_4$, and takes nearly opposite actions as $\mfniter$.}
    \Description{On the left, a sequence of $\fnopt$ and $\mfn^{(i)}$ that converges to $\mfniter = (0.3, 0.85)$ which is delegated to in $\{C_1, C_3\}$. On the right, the optimal machine $\mfnopt = (0.9, 0.1)$ which is delegated to in $C_4$ only.}
    \label{fig:iteration}
\end{figure*}
\varXiv{}{In Appendix \ref{app:formaliteration} we formalize this process and argue that it always converges to some machine $\mfniter$ in a finite number of steps. We also show an example where the iterative solution is significantly different from the optimal delegate.
}

We may now compare the performance of $\mfniter$ and $\mfnopt$. In Figure \ref{fig:lossratios} of Appendix \ref{app:additionalexperiments} we show the result of computational experiments that investigate how far from optimal the iterative solution's performance is in the case of linear functions with independent features. As the number of features increases, the iterative solutions achieve near-optimal performance. 

We run an additional experiment in the same setting as Figure \ref{fig:lossratios} to determine the proportion of times the iterative process finds the optimal solution for several different numbers of observed features $|I_H|$ and $|I_M|$. We show the results of this in Table \ref{tab:props} in Appendix \ref{app:additionalexperiments}. We find that the proportion \textit{decreases monotonically} in the number of features the human observes -- which makes sense, as this corresponds to strictly more complex options for the retained categories $\kept$ -- and \textit{increases monotonically} in the number of features the machine observes. As expected from Figure \ref{fig:lossratios}, the proportion of optimal delegates find goes up along the diagonal $|I_H| = |I_M|,$ but this increase is very mild, indicating that many of the iterative solutions that have near-optimal performance are simply very good alternatives to the optimal delegate.

\section{Discussion}\label{sec:discussion}

In this paper, we developed a formal model for settings where a human decides whether to delegate to a machine, and where categories arise from incomplete information. 
We studied the problem of designing an optimal machine for delegation in the presence of categories. We showed that this induces surprisingly clean algorithmic formulations and derived tractability results.

\subsection{General relationships between teams}\label{sec:discussion:losscomp}

Our motivation for studying the design of optimal delegates is that the optimal delegate $\mfnopt$  is always weakly better than the oblivious machine $\mfnobliv$, and may be arbitrarily better. Additional general relationships exist between different human-algorithm teams.

Define $\loss(\fn_A)$ as the loss of agent $A$ acting individually. Any team with the human optimally delegating between $\hfn$ and some machine $\mfn$ will always be better than $\hfn$ and $\mfn$ acting individually, because the human could choose to always delegate (thus emulating $\mfn$), or never delegate (thus emulating $\hfn$), but will actually choose the better agent in each category. Formally, $\loss(\hfn, \mfn) \leq \loss(\hfn)$ and $\loss(\hfn, \mfn) \leq \loss(\mfn)$. 
\varXiv{This produces the arrows from the standalone models to the model teams.}{}

Moreover, $\loss(\hfnopt, \mfnopt) \leq \loss(\hfnopt, \mfn)$ for any machine $\mfn$; in particular, delegating to the oblivious machine will have worse loss than delegating to the optimal delegate $\mfnopt$, i.e., $\loss(\hfnopt, \mfnopt) \leq \loss(\hfnopt, \mfnobliv)$. 
\varXiv{This results in an arrow from $\hfnopt \land \mfnobliv$ to $\hfnopt \land \mfnopt$.}{}
However, without delegation, the oblivious machine will perform better than the optimal delegate, $\loss(\mfnobliv) \leq \loss(\mfnopt)$: the oblivious machine is defined to be the best machine without delegation. 
\varXiv{This results in the final arrow from $\mfnopt$ to $\mfnobliv$.}{}

We illustrate these relationships in Figure \ref{fig:losscomparison}. Of particular note is that not only is $\loss(\hfnopt, \mfnopt) \leq \loss(\hfnopt, \mfnobliv)$, but it is also always true that $\loss(\mfnobliv) \leq \loss(\mfnopt)$, where $\loss(\mfn)$ is the loss of the machine acting individually. These two relationships echo the results of \citet{bansal2021} and \citet{hamade2024chesshandoff}, who show in other human-algorithm collaboration settings that the optimal individual algorithmic agent is a worse collaborator than an algorithmic agent designed for collaboration.\begin{figure}[ht]
        \centering
        \begin{tikzpicture}
            \node[shape=rectangle] (N) at (0,2) {$\mfnobliv$};
            \node[shape=rectangle] (H) at (2,2) {$\hfnopt$};
            \node[shape=rectangle] (O) at (4,2) {$\mfnopt$};
            \node[shape=rectangle] (HN) at (1,1) {$\hfnopt \land\mfnobliv$};
            \node[shape=rectangle] (HO) at (3,0) {$\hfnopt\land \mfnopt$};
            \path [->] (O) edge [bend right] node [left] {} (N); 
            \path [->] (O) edge (HO); 
            \path [->] (N) edge (HN); 
            \path [->] (H) edge (HN); 
            \path [->] (H) edge (HO); 
            \path [->] (HN) edge (HO); 
        \end{tikzpicture}            
        \caption{Relationships between the losses of different possible human and machine teams. 
         $\hfn\land\mfn$ denotes a human with function $\hfn$ optimally delegating to the machine $\mfn$; an arrow from team A to team B indicates that the loss of team B will always be (weakly) lower than team A.} 
         \Description{A set of arrows between different teams; justification for each arrow is described in Appendix \ref{app:losscomparisons}.}
        \label{fig:losscomparison}
    \end{figure}
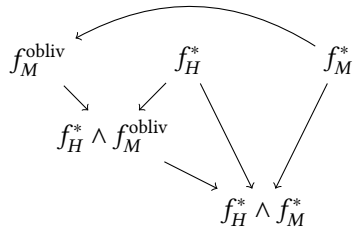

\subsection{Extensions}\label{sec:extensions}

We now show that our model and results extend to cases where categories are arbitrary partitions of states, including those that arise behaviorally. We also show how our results extend to cases where the machine's designer can choose the best \textit{fixed-size subset} of features. We then demonstrate how our model may be simply extended to describe other aspects of decision-making such as human behavior, but leave extending our results to future work. Finally, we discuss the limitations of our results for providing insights into machine-to-human delegation.

\paragraph{Extending our results} While our model showed categories arising from agents perceiving incomplete information, all of our theoretical results extend to the case where categories are arbitrary partitions of the world; we provide details in Appendix \ref{app:arbitrarycategories}.

Another natural extension of our model is where instead of a fixed set of features the machine can observe or interpret, the machine's designer can select the best subset of features from a set of features available to the machine, where the subset has some size constraint. In this case, our efficient algorithms may be extended, as there are at most $O(2^d) = O(n)$ such subsets of features, so adding an outer loop of brute-force feature selection only adds an multiplicative factor of $n$. The optimal design problem with feature selection is still hard: if maximum number of features the machine can use is the number of features available to it, we arrive at our original problem.

\paragraph{Modeling other aspects of decision-making} While we model the human's decision to delegate as a rational comparison between the machine's performance and her own in a given category in order to isolate the effect of categories, in reality it has been well-established that humans prefer to retain control of decisions \cite{owens2014}, may be biased toward or against algorithmic agents \cite{dietvorst2014,logg2019}, or exhibit other behavioral biases. These behavioral biases may be modeled by changing the human's decision to delegate from simply delegating if and only if the machine is better, to some behavioral policy that depends on the machine's average action or loss in each category. For example, to model a desire for control, one could model the human as only adopting the machine if the machine's average loss exceeds the human's by some factor. If the human is averse to unexpected outputs, one could add a penalty to the machine's loss that is a function of the average difference between the machine's action and the human's action within each human category, and so on.
It is also reasonable that the human and machine might have different levels of skill at different tasks, which we could model with additive or multiplicative factors on the machine and human losses.
While all of these aspects are easy to include in the \textit{model}, our \textit{results} do not necessarily extend in these more general settings.

\paragraph{Other delegation settings} In some cases, rather than the human delegating to the algorithmic agent, the algorithmic agent has default control over the choice of action but may sometimes delegate to the human. This has been studied in the literature on learning to defer \cite{madras2018,lykouris2024}. In our model there is no intrinsic requirement that the delegator be a human and the delegate be an algorithm. However, our results still speak to the optimal design of the \textit{delegate}, and therefore when the roles of human and algorithm are swapped, our results address how a human should be trained to work with a machine. Similarly, our model could describe humans delegating to other humans in cases where communication is impossible, such as an emergency room physician referring a patient to a specialist where the two doctors share the patient's medical history but cannot necessarily share intuition that might inform a diagnosis.

\subsection{Further directions}

An interesting direction for future work is determining other delegation settings in which finding an optimal delegate is tractable. Moreover, generalizing our analysis in Section \ref{sec:twofeature} to investigate which ground truth optimal action functions $\fnopt$ lead to qualitatively different optimal delegates could yield heuristics for intractable instances. It would also be interesting to extend our computational experiments on iterative design to non-linear functions and prove bounds on the performance of the iterative solution. It is also important to investigate designing optimal delegates when our model is extended to include behavioral factors beyond categories or different levels of skill as described above, or when the team performance is measured with a different loss.

\begin{acks}

The authors would like to thank Erica Chiang, Kate Donahue, Greg d'Eon, Nikhil Garg, Malte Jung, Rajiv Movva, Kenny Peng, and Emma Pierson for valuable discussions and feedback, as well as the Artificial Intelligence, Policy, and Practice (AIPP) group at Cornell. The authors also thank the reviewers of the NeurIPS 2024 Workshop on Behavioral Machine Learning and the reviewers of EC'25 for helpful suggestions. 

SG is supported by a fellowship from the Cornell University Department of Computer Science and an NSERC PGS-D fellowship [587665]. The work is supported in part by a grant from the John D. and Catherine T. MacArthur Foundation.
\end{acks}

\bibliographystyle{ACM-Reference-Format}
\bibliography{bibliography}

\appendix

\section{Computing the loss in the single feature setting}\label{app:twofeatjustification}

Recall the setting of Section \ref{sec:twofeature}, where each state has two binary features, the human observes the first feature, and the machine observes the second feature. The human thus has two categories $C_1$ and $C_2$ based on the value of the first feature $x_1$, and the machine has two categories $K_1$ and $K_2$ based on the second feature $x_2$. Let $\x_{ij}$ be the state in $C_i \cap K_j$. We restricted our consideration to settings where each state occurs with equal probability, and the ground truth optimal action function $\fnopt$ is of the form \[\fnopt(\x) = \begin{cases}
    0, & \x = \x_{11}\\
    1, & \x = \x_{12}\\
    a, & \x = \x_{21}\\
    b, & \x = \x_{22}
\end{cases}\] for $a, b \in \mathbb{R}$.

In this section, we compute the loss of the human delegating to $\mfn^\kept$ in the categories in $\kept$ for each $\kept \subseteq \{C_1, C_2\}$, which we denote $\loss(\kept)$.

First, notice that \[\hfnopt(C_1) = \frac{1}{2}, \quad \hfnopt(C_2) = \frac{a+b}{2},\] so \begin{align*}
    \loss_H(\hfnopt, C_1) &= \frac{1}{2}(1/2)^2 + \frac{1}{2}(1/2)^2 = 1/4
\end{align*} and \begin{align*}
    \loss_H(\hfnopt, C_2) &= \frac{1}{2}((a-b)/2)^2 + \frac{1}{2}((a-b)/2)^2 = (a-b)^2 / 4.
\end{align*}

We consider each value of $\kept$ in turn. If $\kept = \{C_1, C_2\}$, \begin{align*}\mfn^\kept(K_1) &= \frac{\fnopt(\x_{11}) + \fnopt(\x_{21})}{2} = \frac{a}{2}, \\
\mfn^\kept(K_1) &= \frac{\fnopt(\x_{12}) +\fnopt(\x_{22})}{2} = \frac{b+1}{2}.\end{align*} Then \begin{align*}
        \mloss(\mfn^\kept, C_1) &= \frac{1}{2}(\fnopt(\x_{11}) - \mfn^\kept(K_1))^2 + \frac{1}{2}(\fnopt(\x_{12}) - \mfn^\kept(K_2))^2\\
        &= \frac{1}{2}(a/2)^2 + \frac{1}{2}((b-1)/2)^2\\
        &= \frac{1}{8}(a^2 + (b-1)^2)
    \end{align*} Similarly, \begin{align*}
        \mloss(\mfn^\kept, C_2) &= \frac{1}{2}(\fnopt(\x_{21}) - \mfn^\kept(K_1))^2 + \frac{1}{2}(\fnopt(\x_{22}) - \mfn^\kept(K_2))^2\\
        &= \frac{1}{2}(a/2)^2 + \frac{1}{2}((b-1)/2)^2\\
        &= \frac{1}{8}(a^2 + (b-1)^2).
    \end{align*}
    Thus \[\loss(\{C_1, C_2\}) = \frac{1}{2}\loss_M(\mfn^\kept, C_1)+ \frac{1}{2}\loss_M(\mfn^\kept, C_2) = \frac{1}{8}(a^2 + (b-1)^2).\]

Next, if $\kept = \{C_1\}$, \[\mfn^\kept(K_1) = 0, \quad \mfn^\kept(K_1) = 1,\] and \[\mloss(\mfn^\kept, C_1) = 0.\] So \[\loss(\{C_1\}) = \frac{1}{2}\mloss(\mfn^\kept, C_1) + \frac{1}{2}\hloss(\hfnopt, C_2) = \frac{1}{2}\cdot\frac{(a-b)^2}{4} = (a-b)^2/8.\]

If $\kept = \{C_2\}$, \[\mfn^\kept(K_1) = a, \quad \mfn^\kept(K_1) = b,\] and \[\mloss(\mfn^\kept, C_2) = 0.\] So \[\loss(\{C_2\}) = \frac{1}{2}\hloss(\hfnopt, C_1) + \frac{1}{2}\mloss(\mfn^\kept, C_2) = \frac{1}{2}\cdot\frac{1}{4} = 1/8.\]

Finally, if $\kept = \{\}$, then \[\loss(\kept) = \frac{1}{2}\hloss(\hfnopt, C_1) + \frac{1}{2}\hloss(\hfnopt, C_2) = 1/8 + (a-b)^2/8.\]

Clearly, $\loss(\emptyset) \geq \loss(\{C_1\}), \loss(\{C_2\})$ as we claimed in Section \ref{sec:twofeature}.

Thus by \Cref{prop:dropisoptimal}, the loss of the optimal machine is \begin{align*}
    \loss(\hfnopt, \mfnopt) &= \min_\kept \loss(\kept) =\frac{1}{8}\cdot \min\{1, (a-b)^2, a^2 + (b-1)^2\}.
\end{align*}

 \subsection{Maximally adopted delegates in two-feature settings}\label{app:twofeatjustification:maxadopt}

 We now prove that in the two-feature setting, the settings with optimal action function $\fnopt_{\fx, \fy}$ in which it is possible to design a machine that the human will always delegate to corresponds exactly to the set of $(a,b)$ shown in Figure \ref{fig:twofeat:maxadopt}.

Still in the two-feature setting, consider some machine $\mfn$, and denote $y_1 = \mfn(K_1), y_2 = \mfn(K_2)$. This machine is adopted in both $C_1$ and $C_2$ if both \begin{align*}
    \mloss(\mfn, C_1) < \hloss(\hfnopt, C_1)
    \quad \text{and} \quad\mloss(\mfn, C_2) < \hloss(\hfnopt, C_2)
\end{align*} which corresponds to \begin{align*}
    (y_1 - 0)^2 + (y_2 - 1)^2 < (1/\sqrt{2})^2 \quad \text{and} \quad (y_1 - a)^2 + (y_2 - b)^2 < ((a-b)/\sqrt{2})^2
\end{align*} That is, such a $\mfn$ exists there is some intersection between the circle centered at $(0,1)$ with radius $1/\sqrt{2}$, and the circle centered at $(a,b)$ with radius $|a-b|/\sqrt{2}$. This is the case if and only if the distance between the two circles is less than the sum of radii, that is, \[\sqrt{(a-0)^2 + (b-1)^2} < \frac{1}{\sqrt{2}} + \frac{|a-b|}{\sqrt{2}}.\] 

Figure \ref{fig:twofeat:maxadopt} shows the set of $(\fx,\fy)$ that satisfy this condition.

 This set roughly covers the region $a < 0$, $b > 1$. In this region, the human's loss in $C_2$ suffers from her inability to distinguish between $K_1$ and $K_2$ where the optimal actions $a$ and $b$ are very different. The machine can't distinguish between $C_1$ and $C_2$, but the difference between the optimal actions (between $a$ and $0$ in $C_1$ and between $b$ and $1$ in $C_2$) is not as large as the difference between $a$ and $b$. The machine can stay close to ($0,1$) to capture $C_1$ while automatically doing better than the human on $C_2$. Moreover, $\{(a,b): a < 0$ and $b = 1-a\}$ is contained in this set. For such $(a,b)$, the delegate $(0,1)$ will be adopted in both human categories but will have a loss of $a^2$ in the second human category, which becomes unbounded as the magnitude of $a$ grows. By contrast, the optimal delegate in the two-feature setting always has bounded loss. Therefore, a firm maximizing adoption rates rather than team performance could design an arbitrarily worse machine. 

Notably, the machine $(0,1)$ is not the oblivious machine, despite the oblivious machine being the machine achieving the best expected performance across both $C_1$ and $C_2$! In order to guarantee adoption in $C_1$, the firm sacrifices performance in $C_2$ rather than treating the two categories equally as in the oblivious delegate (even though the two categories have equal probability). In general, if $a^2 + (b-1)^2 > 2$, the oblivious machine is not adopted in $C_1$, so the maximally-adopted machine cannot be the oblivious machine. This means that in this region, the maximally-adopted machine has worse performance than even the oblivious machine acting alone, which in turn has worse team performance than the human using the oblivious machine as a delegate.  
 
\section{Additional iterative design experiments}\label{app:additionalexperiments}

In Figure \ref{fig:lossratios}, we show the results computational simulations to compare the performance of the iterative solution to both the optimal solution and the oblivious solution; we provide details in the caption.

\begin{figure}[ht]
    \centering
    \includegraphics[width=0.5\linewidth]{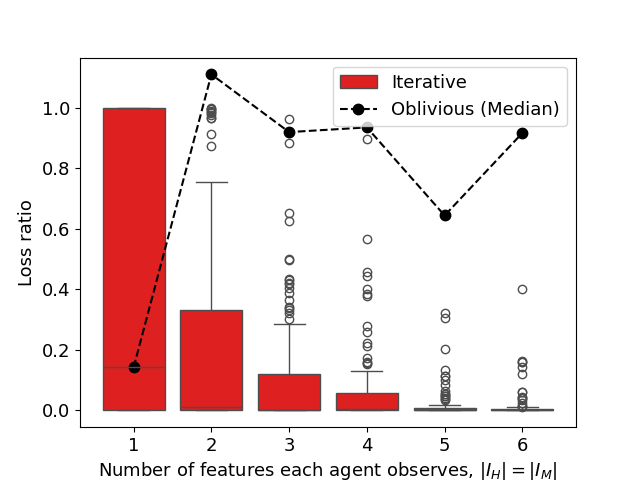}
    \caption{Relative performance between the iterative solution and the optimal solution. We consider delegation settings where the ground truth optimal action is a linear function $\fnopt(\x) = \mathbf{v}^T\x$ where $\mathbf{v} \sim \textnormal{Normal}(0, I)$, and each state has uniform probability. We vary the number of features the agents observe, so that $|\idxset{H}| = |\idxset{M}|$ ranges from $1$ to $6$, and $d = |\idxset{H}| + |\idxset{M}|$. For each $|\idxset{H}|$ we sample 100 ground truth functions $\fnopt$, and plot the relative difference in loss between the iterative solution $\mfn^{\textnormal{iter}}$ and the optimal solution $\mfnopt$, 
    $[\loss(\hfnopt, \mfn^{\textnormal{iter}}) - \loss(\hfnopt, \mfnopt)]/\loss(\hfnopt, \mfnopt).$
    We also plot the median relative difference in loss for the oblivious machine in the same 100 samples as a baseline.}
    \Description{A boxplot showing the loss ratio for different numbers of observed features. All loss ratios are positive, and the distribution of loss ratios shifts strongly toward zero as the number of observed features increases. By contrast, the median loss ratio for the oblivious machine decreases at first, but plateaus near 0.9.}
    \label{fig:lossratios}
\end{figure}

In Table \ref{tab:props}, we extend the simulations of Figure \ref{fig:lossratios} to cases where the human and machine have a different number of features, and determine the empirical proportion of sampled settings where the iterative approach arrives at an optimal solution.

\begin{table}[ht]
    \centering
    \caption{
    We again let $\idxset{H}$ and $\idxset{M}$ partition $[d]$, but vary each of $|I_H|$ and $|I_M|$ between 1 and 6. We examine $\fnopt(\x) = \textbf{v}^T \x$ for $1000$ samples of $\textbf{v} \sim \textnormal{Normal}(0, I)$, and compute the sample proportion of settings $\fnopt$ in which the iterative solution is optimal (specifically, if the loss of the iterative solution is within a tolerance of $10^{-8}$ of the optimal loss.)}
    \label{tab:props}
    \begin{tabular}{rc|ccccccc}
    \multicolumn{2}{c}{} & \multicolumn{6}{c}{\# of machine-observable features, $|I_M|$}\\
     \multicolumn{2}{c|}{} & 1 & 2& 3 & 4 & 5 & 6\\
    \cline{2-8}
    \multirow{6}{2.2cm}{\# of human-observable features, $|I_H|$} & 1 & 0.39 & 0.57 & 0.69 & 0.76 & 0.82 & 0.86 \\
    & 2 & 0.34 & 0.48 & 0.56 & 0.66 & 0.68 & 0.74\\
    & 3 & 0.31 & 0.43 & 0.49 & 0.56 & 0.60 & 0.62 \\
    & 4 & 0.31 & 0.42 & 0.45 & 0.52 & 0.52 & 0.59 \\
    & 5 & 0.28 & 0.37 & 0.40 & 0.44 & 0.50 & 0.54 \\
    & 6 & 0.26 & 0.32 & 0.39 & 0.42 & 0.44 & 0.52\\
    \end{tabular}
\end{table}

\section{Extension to arbitrary categories}\label{app:arbitrarycategories}

To see this, first note that the proofs of our characterization results \Cref{prop:dropisoptimal} and \Cref{prop:variancegame:full} (found in Appendix \ref{app:charproofs}) hold when $C, K$ are arbitrary partitions of $\hcats, \mcats$.

An analogue of Lemma \ref{lem:extra} defines $\bar\fn(C\cap K) = \sum_{\x \in C\cap K} \frac{\Prob(\x)}{\Prob(C \cap K)} \fnopt(\x)$ for each pair $C, K$, so we may treat each intersection $C \cap K$ as containing a single instance with probability $\Prob(C \cap K)$.

Our tractability result when the human or machine observes a constant number of features (\Cref{thm:constant}) likewise holds for arbitrary partitions,\footnote{If any intersection $C \cap K = \emptyset$, create a dummy state $\x$ and set $\Prob(\x) = 0$. Combining this with the analogue of Lemma \ref{lem:extra}, we may assume as before that there is exactly one state $\x$ in $C \cap K$ for each $C,K$.} as does the result on efficiently determining the existence of a perfect team (\Cref{prop:perfectteams}). Since these arbitrary categories are a generalization of categories arising from limited information, our hardness result in \Cref{thm:np} still holds. The only result that truly relies on the existence of binary features is \Cref{thm:separable}. This can be shown to extend when we extend the definition of consistency and consequently of separable distributions. We say that two categories $C$ and $K$ are inconsistent if $C \cap K = \emptyset$. We say that  $S_1, ..., S_s$ is a set of mutually inconsistent subproblems if $S_1, ..., S_s$ is a partition of the human and machine categories, and if $C \in S_t$ and $K \in S_t'$, then $C$ and $K$ are inconsistent. We say that $\Prob$ is separable if for all $t$, and for all $C_i, K_j \in S_t$, $\Prob(C_i \cap K_j) = p_i \cdot q_j$ for some $p_i, q_j$. With this assumption, the proofs of Lemma \ref{lem:shared} and \Cref{thm:separable} easily extend. 

\section{Characterization proofs (Propositions \ref{prop:dropisoptimal} and \ref{prop:variancegame:full})}\label{app:charproofs}

First, recall Proposition \ref{prop:dropisoptimal}.

\dropisoptimalinformal*

\begin{proof} We first define an intermediate problem, \begin{equation}\label{problem:intermediate}
        \min_{\mfn, \kept \subseteq \hcats} \sum_{C \in \hcats \setminus \kept} \Prob(C) \hloss(\hfnopt, C) + \sum_{C \in \kept} \Prob(C) \mloss(\mfn, C)
    \end{equation}
If $P(C) = 0$ for some category $C$, then $C$ does not affect the expected team loss regardless of which agent takes action. In the remainder of this proof, we ignore these categories without loss of generality, and assume that $P(C) > 0$ for all $C$. 
\paragraph{Part 1} Problem \ref{problem:discrete} can be written as 
\begin{align*} 
\min_{\mfn, \kept \subseteq \hcats} \quad & \sum_{C \in \hcats \setminus \kept} \Prob(C)\hloss(\hfnopt, C) + \sum_{C \in \kept} \Prob(C) \mloss(\mfn, C) \tag{\ref{problem:discrete}$'$}\label{problem:discrete:2}\\
    \text{s.t.} \quad & \mfn = \mfn^\kept.
\end{align*}

Clearly Problem \ref{problem:discrete:2} is a constrained version of Problem \ref{problem:intermediate}, and thus if every solution $\kept, \mfn$ to Problem \ref{problem:intermediate} satisfies $\mfn^\kept = \mfn$, then Problem \ref{problem:discrete:2} and Problem \ref{problem:intermediate} have the same set of solutions.

This is indeed the case. We know that \begin{align*}
        \text{Problem }\ref{problem:intermediate} &\equiv \min_\kept \min_{\mfn} \sum_{C \in \hcats \setminus \kept}\Prob(C) \hloss(\hfnopt, C) + \sum_{C \in \kept} \Prob(C) \mloss(\mfn, C)\\
        &\equiv \min_\kept \sum_{C \in \hcats \setminus \kept} \Prob(C) \hloss(\hfnopt, C) + \min_{\mfn} \sum_{C \in \kept} \Prob(C) \mloss(\mfn, C).    \end{align*} 
        Moreover, \begin{align*}
    \argmin_{\mfn} \sum_{C \in \kept} \Prob(C) \mloss(\mfn, C)
    = \argmin_{\mfn} \sum_K \sum_{\x \in X(\kept) \cap K} \Prob(\x) (\mfn(\x) - \fnopt(\x))^2
    := \mfn^\kept,\end{align*} since squared loss is uniquely minimized by the expected value, so any optimal solution $\kept, \mfn$ to Problem \ref{problem:intermediate} must satisfy $\mfn^\kept = \mfn$. 
    Thus Problem \ref{problem:intermediate} is equivalent to Problem \ref{problem:discrete:2}.

    \paragraph{Part 2} Problem \ref{problem:continuous} can be expressed as \begin{align*}
    \min_{\mfn, \kept \subseteq \hcats} \quad & \sum_{C \in \hcats \setminus \kept}\Prob(C)\hloss(\hfnopt, C) + \sum_{C \in \kept} \Prob(C)\mloss(\mfn, C)\tag{\ref{problem:continuous}$'$}\label{problem:continuous:2} \\
        \text{s.t.} \quad & C \in \kept \iff C \in  \dcats(\hfnopt, \mfn) \textnormal{ for all } C \textnormal{ where } \hloss(\hfnopt, C) \neq \mloss(\mfn, C).
\end{align*}
    This is again a constrained version of Problem \ref{problem:intermediate}, and thus if every solution $\mfn, \kept$ to Problem \ref{problem:intermediate} satisfies the constraint, then Problems \ref{problem:intermediate} and \ref{problem:continuous:2} have the same set of solutions. 

Suppose for the sake of contradiction that $\mfn, \kept$ is an optimal solution to Problem \ref{problem:intermediate} but is not a feasible solution to Problem \ref{problem:continuous:2}. This means that for some $C$ where $\hloss(\hfnopt, C) \neq \mloss(\mfn, C)$, either $C \in \hcats\setminus \kept$ and $\mloss(\mfn, C) < \hloss(\hfnopt, C)$ or $C \in \kept$ and $\hloss(\hfnopt, C) < \mloss(\mfn, C)$.
    
    If for some $C \in  \hcats\setminus \kept$, $\mloss(\mfn, C) < \hloss(\hfnopt, C)$, let $\kept' = \kept \cup \{C\}$. Then \begin{align*}
        &\left(\sum_{C \in \hcats \setminus \kept'}\Prob(C)\hloss(\hfnopt, C) + \sum_{C \in \kept'} \Prob(C) \mloss(\mfn, C)\right) -\left(\sum_{C \in \hcats \setminus \kept}\Prob(C) \hloss(\hfnopt, C) + \sum_{C \in \kept} \Prob(C) \mloss(\mfn, C)\right)\\
        &=  \Prob(C)\left(\mloss(\mfn, C) - \hloss(\hfnopt, C)\right)\\
        &< 0
    \end{align*}
    and $\mfn, \kept$ is not an optimal solution to Problem \ref{problem:intermediate}. If instead there is some $C \in \kept$, and $\hloss(\hfnopt, C) < \mloss(\mfn, C)$ we can similarly let $\kept' = \kept \setminus \{C\}$, and \begin{align*}
        &\left(\sum_{C \in \hcats \setminus \kept'} \Prob(C) \hloss(\hfnopt, C) + \sum_{C \in \kept'} \Prob(C) \mloss(\mfn, C)\right) - \left(\sum_{C \in \hcats \setminus \kept} \Prob(C)\hloss(\hfnopt, C) + \sum_{C \in \kept} \Prob(C) \mloss(\mfn, C)\right)\\
        &=  \Prob(C)\left(\hloss(\hfnopt, C) - \mloss(\mfn, C)\right)\\
        &< 0
    \end{align*} so $\mfn, \kept$ is not an optimal solution to Problem \ref{problem:intermediate} and we have a contradiction. Thus $\mfn, \kept$ must be a feasible solution to Problem \ref{problem:continuous:2}, and Problems \ref{problem:continuous:2} and \ref{problem:intermediate} must have the same set of solutions.

    We have established that Problems \ref{problem:continuous:2}, \ref{problem:discrete:2}, and \ref{problem:intermediate} all have the same set of solutions $\mfn, \kept$. Then if $\kept^*$ is a solution to Problem \ref{problem:discrete}, then $\kept^*, \mfn^{\kept^*}$ is a solution to Problem \ref{problem:discrete:2}, and thus to Problem \ref{problem:continuous:2}. This in turn means that $\mfn^{\kept^*}$ is a solution to Problem \ref{problem:continuous}. 
    
    On the other hand, if $\mfnopt$ is a solution to Problem \ref{problem:continuous}, then $\mfnopt, \dcats(\hfnopt, \mfnopt)$ is a solution to Problem \ref{problem:continuous:2} and thus to Problem \ref{problem:discrete:2}. This in turn means that $\dcats(\hfnopt, \mfnopt)$ is a solution to Problem \ref{problem:discrete}.
\end{proof}

Now, recall \Cref{prop:variancegame:full}, where $X(\kept) = \bigcup_{C \in \kept} C$ and $\mfn^\kept(\mcat) = \E[\fnopt | X(\kept) \cap \mcat]$.

\propvariancegameinformal*

\begin{proof}
Recall from \Cref{prop:dropisoptimal} that to find an optimal delegate $\mfnopt$, it is sufficient to find $\kept$ that solves
\begin{equation*}
    \argmin_{\kept \subseteq \hcats} \sum_{C \in \hcats \setminus \kept}\Prob(C)\hloss(\hfnopt, C) + \sum_{C \in \kept} \Prob(C)\mloss(\mfn^\kept, C).
\end{equation*} and take $\mfnopt = \mfn^\kept$. Thus for the remainder of this proof, we will focus on solving the problem above.

Substituting in the expressions for $\hloss$ and $\mloss$, we have the objective \begin{align*}
    &\left(\sum_{C \in \hcats \setminus \kept} \Prob(C)\sum_{\x \in C} \frac{\Prob(\x)}{\Prob(C)}(\hfnopt(C) - \fnopt(\x))^2\right) +\left(\sum_{C \in \kept}\Prob(C) \sum_{\x \in C} \frac{\Prob(\x)}{\Prob(C)}(\mfn^\kept(K(\x)) - \fnopt(\x))^2\right)\\
    &=\sum_{C \in \hcats \setminus \kept} \Prob(C)\Var(\fnopt| C) + \left(\sum_{C \in \kept} \sum_{\x \in C} \Prob(\x)(\mfn^\kept(K(\x)) - \fnopt(\x))^2\right) \tag{by defn of $\hfnopt$}\\
    &= \sum_{C \in \hcats \setminus \kept} \modvar(\fnopt | C) + \left(\sum_K \sum_{\x \in K \cap X(\kept)} \Prob(\x)(\mfn^\kept(K) - \fnopt(\x))^2\right)\\
    &= \sum_{C \in \hcats \setminus \kept} \modvar(\fnopt | C)+ \sum_K \Prob(K\cap X(\kept)) \sum_{\x \in K \cap X(\kept)} \frac{\Prob(\x)}{\Prob(K \cap X(\kept))} (\mfn^\kept(K) - \fnopt(\x))^2\\
    &= \sum_{C \in \hcats \setminus \kept} \modvar(\fnopt | C) + \sum_K \Prob(K \cap X(\kept)) \Var(\fnopt | K \cap X(\kept))\tag{by defn of $\mfn^\kept$}\\
    &= \sum_{C \in \hcats \setminus \kept} \modvar(\fnopt | C) + \sum_K \modvar(\fnopt | K \cap X(\kept))
\end{align*} and we are done.
\end{proof}

\section{Tractability proofs}\label{app:tractproofs}

In this section, we provide proofs for the theoretical results in Section \ref{sec:results}.

First, we show that any efficient algorithm for finding an optimal delegate when all features are observed by either the human or the machine can be used to find an optimal delegate in general settings. Thus in all subsequent results, we will assume that $\idxset{H} \cup \idxset{M} = [d]$.

\begin{lem}\label{lem:extra}
    Let $\x_{H\cup M}$ denote $\x$ restricted to the features in $\idxset{H}\cup\idxset{M}$. Given a ground truth optimal action function $\fnopt$, for $\x \in \hcat \cap \mcat$ define \[\bar \fn(\x_{H\cup M}) = \sum_{\z\in \hcat \cap \mcat} \frac{\Prob(\z)}{\Prob(C \cap K)}\fnopt(\z).\] Then a machine function $\mfnopt$ is an optimal delegate for ground truth function $\fnopt$ if and only if $\mfnopt$ is an optimal delegate for ground truth function $\bar f$.
\end{lem}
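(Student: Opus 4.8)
The plan is to use the fact that neither agent can condition on features outside $\idxset{H} \cup \idxset{M}$, so under any fixed pair of action functions and any delegation decision a single constant action is applied to each cell $\hcat \cap \mcat$; replacing $\fnopt$ by its cellwise conditional mean $\bar\fn$ then changes every relevant loss by the same additive constant, namely the irreducible within-cell variance of $\fnopt$. Concretely, I would first record the structural observation that the action taken is constant on each intersection $\hcat \cap \mcat$: the human decides whether to delegate at the level of the human category $\hcat$, so within $\hcat$ she either plays $\hfnopt(\hcat)$ at every state, or the machine plays $\mfn(\mcat)$ at every state of $\hcat \cap \mcat$; in either case the value is the same across all states of the cell. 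By definition $\bar\fn(\x_{H \cup M}) = \E[\fnopt \mid \hcat \cap \mcat]$ for $\x \in \hcat \cap \mcat$, and in particular $\bar\fn$ is constant on each cell.

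Next I would decompose the per-category losses using the bias--variance identity $\E[(\alpha - \fnopt)^2 \mid \hcat \cap \mcat] = (\alpha - \E[\fnopt \mid \hcat \cap \mcat])^2 + \Var(\fnopt \mid \hcat \cap \mcat)$, applied with the constant action $\alpha$ on each cell. Writing each category loss as a weighted sum over machine categories of its cell losses, and denoting by $\hloss^{\bar\fn}, \mloss^{\bar\fn}, \loss^{\bar\fn}$ the losses computed with ground truth $\bar\fn$ in place of $\fnopt$, this yields
\[
\hloss(\hfnopt, \hcat) = \hloss^{\bar\fn}(\hfnopt, \hcat) + V(\hcat), \qquad \mloss(\mfn, \hcat) = \mloss^{\bar\fn}(\mfn, \hcat) + V(\hcat),
\]
where $V(\hcat) := \sum_{\mcat} \tfrac{\Prob(\hcat \cap \mcat)}{\Prob(\hcat)} \Var(\fnopt \mid \hcat \cap \mcat)$ depends only on $\fnopt$ and $\hcat$, not on any action function (the bias terms reproduce the losses under $\bar\fn$ precisely because $\bar\fn$ is the cell mean). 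The step I would emphasize is that $V(\hcat)$ is the \emph{same} additive constant for the human and the machine, since it is the within-cell variance of the ground truth, which no choice of action can affect.

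Two consequences then finish the argument. First, because $\bar\fn$ is the conditional mean of $\fnopt$ on cells and $\hcat$ coarsens the cell partition, the tower property gives $\E[\bar\fn \mid \hcat] = \E[\fnopt \mid \hcat]$, so the loss-minimizing human action in each category is unchanged and $\hfnopt$ is optimal under both ground truths. Second, adding the common constant $V(\hcat)$ to both arguments of the minimum defining the per-category team loss preserves the delegation decision in $\hcat$ (including the tie-breaking convention) and shifts the value by exactly $V(\hcat)$, so that $\min\{\hloss(\hfnopt, \hcat), \mloss(\mfn, \hcat)\} = \min\{\hloss^{\bar\fn}(\hfnopt, \hcat), \mloss^{\bar\fn}(\mfn, \hcat)\} + V(\hcat)$. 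Summing against $\Prob(\hcat)$ gives $\loss(\hfnopt, \mfn) = \loss^{\bar\fn}(\hfnopt, \mfn) + c$, where $c := \sum_{\hcat, \mcat} \modvar(\fnopt \mid \hcat \cap \mcat)$ is independent of $\mfn$. Since the two objectives differ by a constant and use the same human function, they have identical sets of minimizers over $\mfn$, which is exactly the claim that $\mfnopt$ is an optimal delegate for $\fnopt$ if and only if it is one for $\bar\fn$.

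I expect the main work to be bookkeeping rather than conceptual: one must carefully verify that the within-cell variance enters the human and machine losses with the same coefficient so that it cancels inside the minimum, and handle cells with $\Prob(\hcat \cap \mcat) = 0$, on which $\bar\fn$ may be set arbitrarily and which contribute nothing to any loss.
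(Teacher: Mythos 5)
Your proposal is correct and follows essentially the same route as the paper: the paper expands $(f_A(S(\x)) - \fnopt(\x))^2$ around $\bar f$, kills the cross term using the fact that $\bar f$ is the conditional mean on each cell $\hcat \cap \mcat$, and identifies the leftover term as an agent-independent constant --- which is exactly your bias--variance decomposition with the within-cell variance $V(\hcat)$ entering both the human and machine losses identically. Your explicit remarks that the common shift preserves the $\min$ (and hence the delegation decision) and that $\hfnopt$ is unchanged by the tower property are slightly more detailed than the paper's writeup but do not change the argument.
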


Intuitively, if neither agent can observe some features, this results in an additional loss penalty which is the same 

\begin{proof} A human or machine agent $A$ with action function $f_A$ and categories $\{S\}$ ($\{C\}$ if $A$ is the human or $\{K\}$ if $A$ is the machine) will have expected loss in human category $C$ in a delegation setting with ground truth optimal action function $\fnopt$ of 
    \begin{align*}&\sum_K \sum_{\x \in C\cap K} \frac{\Prob(\x)}{\Prob(C)} (f_A(S(\x)) - \fnopt(\x))^2\\
    &= \sum_K \sum_{\x \in C\cap K}\frac{\Prob(\x)}{\Prob(C)} (f_A(S(\x)) - \bar f(\x_{H \cup M}) + \bar f(\x_{H \cup M}) - \fnopt(\x))^2\\
    &= \sum_K \sum_{\x \in C\cap K} \frac{\Prob(\x)}{\Prob(C)} (f_A(S(\x)) - \bar f(\x_{H \cup M}))^2\\
    &\quad - \sum_K \sum_{\x \in C \cap K} \frac{\Prob(\x)}{\Prob(C)} 2(f_A(S(\x)) - \bar f(\x_{H \cup M}))(\bar f(\x_{H \cup M}) - \fnopt(\x))\\
    &\quad + \sum_K \sum_{\x \in C\cap K}\frac{\Prob(\x)}{\Prob(C)}(\bar f(\x_{H \cup M}) - \fnopt(\x))^2\\
    &= (\textnormal{T1}) - (\textnormal{T2}) + (\textnormal{T3})
    \end{align*}
    Term (T2) can be simplified, since $f_A$ and $\bar f$ are constant in $C \cap K$: \begin{align*}
        &\sum_K \sum_{\x \in C \cap K} \frac{\Prob(\x)}{\Prob(C)} 2(f_A(S(\x)) - \bar f(\x_{H \cup M}))(\bar f(\x_{H \cup M}) - \fnopt(\x))\\
        &=2\sum_K (f_A(S(\x)) - \bar f(\x_{H \cup M})) \sum_{\x \in C \cap K} \frac{\Prob(\x)}{\Prob(C)} (\bar f(\x_{H \cup M}) - \fnopt(\x))\\
        &= 2\sum_K (f_A(S(\x)) - \bar f(\x_{H \cup M})) \cdot 0\tag{definition of $\bar f$}\\
        &= 0
    \end{align*} Term (T1) is the expected loss in category $C$ of using function $f_A$ in category $C$ when the ground truth action function is $\bar f$, and term (T3) is a constant independent of $f_A$. Thus the team loss with ground truth function $\fnopt$ will be different from the team loss with ground truth function $\bar f$ by a constant factor of \[\sum_C \Prob(C) \sum_K \sum_{\x\in C \cap K}\frac{\Prob(\x)}{\Prob(C)}(\bar f(\x_{H \cup M}) - \fnopt(\x))^2,\] and the minimization problems have the same set of solutions.
\end{proof}

To prove Theorems \ref{thm:separable} and \ref{thm:constant}, we will need the following useful Lemma. Given a delegation setting $\idxset{H}, \idxset{M}, \Prob, \fnopt$, if $\idxset{S} = \idxset{H} \cap \idxset{M}$, let $\x_S$ be value of $\x$ restricted to the features in $\idxset{S}$. There are $s \leq h$ values that $\x_S$ may take: label these $\x_S^{(t)}$ for $1 \leq t \leq s.$ Let $\idxset{N} = [d]\setminus \idxset{S}$ be the set of features that are not shared by the human and machine, and let $\x_N$ be the value of $\x$ restricted to these features.

For $1 \leq t \leq s$, define \[g_t^*(\x_N) = \fnopt((\x_N, \x_S^{(t)})) \quad\text{and}\quad Q_t(\x) = P(\x |\x_S = \x_S^{(t)}) = \frac{\Prob(\x)}{\Prob(\mathcal{C}_t)}.\]

\begin{lem}\label{lem:shared}
    Suppose we have a delegation setting $\idxset{H}, \idxset{M}, \Prob, \fnopt$ where $\idxset{H} \cap \idxset{M} = \idxset{S} \neq \emptyset$, and an efficient algorithm to find an optimal delegate for delegation settings $\idxset{H}', \idxset{M}', Q, g^*$ where the optimal action $g^* \in \mathcal{F}$ and the distribution $Q \in \Pi$ for some family of functions $\mathcal{F}$ and distributions $\Pi$, and $\idxset{H}' \subseteq \idxset{H}$, $\idxset{M}' \subseteq \idxset{M}$ satisfy $\idxset{H}' \cap \idxset{M}' = \emptyset$. If $g_t^* \in \mathcal{F},Q_t \in \Pi$ for all $t$, then there is an efficient algorithm for finding the optimal delegate. 
\end{lem}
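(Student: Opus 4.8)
The plan is to exploit the fact that the shared features $\idxset{S} = \idxset{H} \cap \idxset{M}$ are observed by \emph{both} agents, so the whole problem decouples across the $s$ possible values of $\x_S$. First I would observe that every human category $C$ and every machine category $K$ determines a value of the shared features, and that $C$ and $K$ can be consistent (have $C \cap K \neq \emptyset$) only when these values agree. Grouping the human categories into blocks $\hcats_t = \{C : \x_S = \x_S^{(t)} \text{ on } C\}$ and the machine categories into $\mcats_t$ analogously, every state in a category of block $t$ has $\x_S = \x_S^{(t)}$, so categories drawn from different blocks have empty intersection.

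Next I would apply the variance reformulation of \Cref{prop:variancegame:full}, whose objective is $\sum_{C \in \hcats \setminus \kept} \modvar(\fnopt \mid C) + \sum_K \modvar(\fnopt \mid K \cap X(\kept))$. The first sum splits across blocks since each $C$ lies in exactly one $\hcats_t$. For the second sum, a machine category $K \in \mcats_t$ contains only states with $\x_S = \x_S^{(t)}$, so $K \cap X(\kept) = K \cap X(\kept \cap \hcats_t)$: only the retained \emph{human} categories within the same block contribute. Hence the objective separates as $\sum_{t=1}^s \big[\sum_{C \in \hcats_t \setminus \kept}\modvar(\fnopt\mid C) + \sum_{K \in \mcats_t} \modvar(\fnopt \mid K \cap X(\kept))\big]$, and the choice of which categories in block $t$ to retain affects only the $t$-th summand. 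Thus the global minimization decomposes into $s$ independent per-block minimizations.

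The key step is then to identify each block's minimization with a delegation subproblem that the assumed algorithm can solve. Conditioning on $\x_S = \x_S^{(t)}$, the human's remaining uncertainty is over $\idxset{H}' = \idxset{H} \setminus \idxset{S}$ and the machine's over $\idxset{M}' = \idxset{M} \setminus \idxset{S}$, which are disjoint, with ground truth $g_t^*$ and distribution $Q_t$ — exactly the disjoint-feature setting assumed solvable, and $g_t^* \in \mathcal{F}$, $Q_t \in \Pi$ by hypothesis. I would verify that the block-$t$ objective under the original measure equals $\Prob(\mathcal{C}_t)$ times the subproblem's objective under $Q_t$: for any $S$ contained in block $t$, the conditional law of $\fnopt$ given $S$ is unchanged by the uniform rescaling $Q_t(\x) = \Prob(\x)/\Prob(\mathcal{C}_t)$, so $\Var(\fnopt \mid S)$ is identical under the two measures, while the weight satisfies $\modvar_{\Prob}(\fnopt \mid S) = \Prob(\mathcal{C}_t)\,\modvar_{Q_t}(g_t^* \mid S)$. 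Since $\Prob(\mathcal{C}_t)$ is a nonnegative constant (zero-probability blocks contribute nothing and may be dropped), a retained set minimizing the subproblem also minimizes the block-$t$ term.

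Finally I would assemble the algorithm: run the assumed efficient routine on each of the $s \leq h \leq n$ subproblems to obtain optimal retained sets $R_t^*$, set $\kept^* = \bigcup_t R_t^*$, and return $\mfn^{\kept^*}$, which is optimal by \Cref{prop:dropisoptimal}. Since $s = O(n)$ and each call runs in time polynomial in its input size (itself $O(n)$), the overall procedure is polynomial in $n$. The step I expect to be the main obstacle is the bookkeeping confirming that the objective separates \emph{exactly} and that the per-block rescaling by $\Prob(\mathcal{C}_t)$ preserves the $\argmin$; once the consistency structure of shared features is in hand, the rest follows directly.
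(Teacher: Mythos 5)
Your proposal is correct and follows essentially the same route as the paper: partition the human and machine categories by the value of the shared features $\x_S$, observe that the objective decomposes exactly into $s \leq n$ independent per-block terms each equal to $\Prob(\mathcal{C}_t)$ times the objective of the disjoint-feature subproblem $(\idxset{H}\setminus\idxset{M},\ \idxset{M}\setminus\idxset{H},\ Q_t,\ g_t^*)$, and invoke the assumed algorithm on each block. The only cosmetic difference is that you carry out the decomposition on the variance objective of \Cref{prop:variancegame:full}, whereas the paper decomposes the team loss $\sum_C \Prob(C)\min\{\hloss,\mloss\}$ directly; both are valid and the rescaling argument is the same.
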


In other words, if we have a tractable set of instances when the human and machine share no features, then when the human and machine do share features we may break up $\fnopt$ into $s$ independent sub-problems based on the value of the shared features $\x_S$. If the subproblems are also in the tractable set, we can apply the efficient algorithm on each subproblem, and glue together the resulting optimal delegates to obtain the optimal delegate. For Theorem \ref{thm:separable}, $\mathcal{F}$ will be the set of separable functions and $\Pi$ will be distributions where $\Prob(\x_{ij}) = p_i q_j$ for some $p_i, q_j$. For Theorem \ref{thm:constant}, $\mathcal{F}$ will be the set of all functions and $\Pi$ will be the set of all distributions over $\x$.

\begin{proof}
    First, note that all $\x$ in a human category have the same value for the human features $\x_H$, and thus the same value for the shared features $\x_S$. We may partition the set of human categories based on the value of $\x_S$ for $\x \in C$: if $\x \in C$ has $\x_S = \x_S^{(t)}$, then $C \in \mathcal{C}_t$.

    Since the machine can see the features $\x_S$, the machine can choose a different action function $\mfn[t]$ for each $\x_S^{(t)}$. 
    The optimal delegate design problem is
    \begin{align*}
        &\argmin_{\mfn} \sum_{C \subseteq \hcats} \Prob(C)\min\{\hloss(\hfnopt, C), \mloss(\mfn, C)\}\\
        &= \argmin_{\mfn[1], ... \mfn[s]} \sum_{t = 1}^s \sum_{C \subseteq \hcats \cap \mathcal{C}_t} \Prob(C)\min\{\hloss(\hfnopt, C), \mloss(\mfn[t], C)\}\\
        &= \argmin_{\mfn[1], ... \mfn[s]} \sum_{t = 1}^s \sum_{C \subseteq \hcats \cap \mathcal{C}_t} \min\left\{\sum_{\x \in C}\Prob(\x)(\hfnopt(C) - \fnopt(\x))^2,\sum_{\x\in C}\Prob(\x) (\mfn[t](K(\x)) - \fnopt(\x))^2\right\}\\
        &= \argmin_{\mfn[1], ... \mfn[s]} \sum_{t = 1}^s \sum_{C \subseteq \hcats \cap \mathcal{C}_t}\min\left\{\sum_{\x \in C}Q_t(\x)\Prob(\mathcal{C}_t)(\hfnopt(C) - \fnopt(\x))^2,\sum_{\x\in C}Q_t(\x)\Prob(\mathcal{C}_t) (\mfn[t](K(\x)) - \fnopt(\x))^2\right\}\\
        &= \argmin_{\mfn[1], ... \mfn[s]} \sum_{t = 1}^s \Prob(\mathcal{C}_t)\sum_{C \subseteq \hcats \cap \mathcal{C}_t} \min\left\{\sum_{\x \in C} Q_t(\x)(\hfnopt(C) - \fnopt(\x))^2,\sum_{\x\in C}Q_t(\x) (\mfn[t](K(\x)) - \fnopt(\x))^2\right\}
    \end{align*}
    Then \begin{align*}
        \mfnopt[t] &= \argmin_{\mfn[t]} \sum_{C \subseteq \hcats \cap \mathcal{C}_t} Q_t(\x) \min\{\hloss(\hfnopt, C), \mloss(\mfn[t], C)\}
    \end{align*} 
    Furthermore, for $C \in \mathcal{C}_t$, $\hfnopt(C)$ is the same for $\fnopt$ and $g^*_t$. This means that computing $\mfnopt[t]$ is  
    exactly the problem for finding the optimal delegate on sub-problem \[(\idxset{H} \setminus \idxset{M}), (\idxset{M} \setminus \idxset{H}), Q_t, g_t^*\] for which we have an efficient algorithm by assumption. Furthermore, there are only $s \leq h \leq n$ sub-problems.
\end{proof}

\subsection{Proof of \Cref{thm:separable}}

Recall \Cref{thm:separable}, which showed that for functions that are separable into functions of the human features and machine features respectively, we may find an optimal delegate in polynomial time in the size of $\fnopt$, which is $n$.

\thmseparable*

\begin{proof} We divide this proof into two parts: first, we prove this result under the assumption that each state occurs with uniform probability. Then, we reduce the case of general independent probabilities to the first case.

Define $u_i := u(C_i)$, $w_j := w(K_j)$.

\paragraph{Part 1.} We first assume that $\Prob(\x) = 1/n$ for each $\x$. Since by \Cref{lem:extra} and \Cref{lem:shared} we may assume without loss of generality that the human and machine partition the set of all features, so, $n = h m$. In this case, $\Prob(C) = 1/h$ for each category $C$, and $\Prob(K \cap X(\kept)) = |\kept|/n$.

Moreover, in this case, we can write $v_{ij} := \fnopt(\x_{ij}) = u_i + w_j$. We previously did not specify how we indexed the human categories $\hcats$; we may now index $\hcats$ so that $u_1 \leq u_2 \leq ... \leq u_{h}$; performing this indexing has polynomial time complexity $O(h \log h)= O(n \log n)$ since $h = O(n)$ in the worst case.

By \Cref{prop:variancegame:full}, we need to find some set $R$ solving \begin{align*}
&\min_R \sum_{i \notin R} \modvar(v_{ij} | i) + \sum_j \modvar (v_{ij}| i\in R, j)\\
&= \min_R \frac{1}{h}\sum_{i \notin R} \Var(u_i + w_j | j \in [m], i) + \frac{|R|}{n}\sum_j \Var (u_i + w_j | i\in R, j) \tag{by definition of $\Delta^2$}\\
&= \min_R \frac{1}{h}\sum_{i \notin R} \Var(w_j |  j \in [m], i) + \frac{|R|}{n}\sum_j \Var (u_i | i\in R, j)
\end{align*}

since variance is does not depend on additive constants. In turn, this is equivalent to
\begin{align*}
&\min_R \frac{h - |R|}{h}\Var(w_j | j \in [m]) + \frac{|R|}{n}m\Var (u_i | i\in R)\\
&= \min_R \left(1 - \frac{|R|}{h}\right)\Var(w) + \frac{|R|}{n}\frac{n}{h}\Var (u_i | i\in R) \tag{$\Var(w) := \Var(w_j |  j \in [m])$}\\
&= \min_k \min_{R: |R| = k}\left(1 - \frac{k}{h}\right)\Var(w) + \frac{k}{h} \Var (u_i | i\in R)\\
&= \min_k \left(1 - \frac{k}{h}\right)\Var(w) + \frac{k}{h}\min_{R: |R| = k} \Var (u_i | i\in R)
\end{align*}
where the last step was suggested by \citet{stackoverflowminvar}.
Let \[R_k \in \argmin_{R\subseteq [h]: |R| = k} \Var(u_i | i \in R).\] Then our objective is \[\min_k \left[\left(1 - \frac{k}{h}\right)\cdot \Var(w) + \left(\frac{k}{h}\right)\cdot \Var(u_i | i \in R_k)\right].\]

We may now outline an algorithm to compute the optimal retained rows $R^*$. First, compute $R_k$ for each $k$. Iterate over each $1 \leq k \leq h$, and find the $k^*$ that minimizes $\left(1 - \frac{k}{h}\right)\Var(w) + \left(\frac{k}{h}\right) \Var(u_i | i \in R_k)$. Then $R_{k^*}$ minimizes the  objective. Take $\kept = \{C_i : i \in R_{k^*}\}$. By \Cref{prop:variancegame:full}, $\mfn^{\kept}$ is an optimal machine design. It remains for us to show that each of these steps may be performed efficiently.

Given $\{R_k\}_{k=1}^h$, finding $k^*$ simply requires computing the objective for each $1 \leq k \leq h$, which has total time complexity $O(h)$. It therefore only remains to find a polynomial-time algorithm to compute $R_k$. This problem is equivalent to finding the minimum variance subset of $u$ of size $k$, and setting $R_k$ to be the indices corresponding to that subset. This can be done by observing that the minimum variance subset must be \textit{contiguous}. This was previously conjectured  \citep{stackoverflowminvar}; we prove this formally in \Cref{lem:minvar} using a proof technique similar that suggested by \citet{stackoverflowminvar}. We can then compute the variance of each of the $h-k + 1$ contiguous subsets of $\{u_i\}$ in time $O(h^2)$, for  a total time of $O(h^3)= O(n^3)$.

\paragraph{Part 2.} Now, we consider the case of general separable probabilities.

By definition of consistency, if the human and machine share any features, the independent sub-problems from Lemma \ref{lem:shared} satisfy $\Prob(\x_{ij})= p_i\cdot q_j$ for each $i,j$, so for the remainder of the proof we will assume that the human and machine partition the set of features and that $\Prob(\x_{ij}) = p_i \cdot q_j$. By \Cref{cor:variancegame:simple}, we must solve \[\min_R \sum_{i \notin R} \modvar(v_{ij} | i) + \sum_j \modvar(v_{ij} | i \in R, j).\] Expanding the definition of $\modvar$, we see that \begin{align}
    &\sum_{i \notin R} \modvar_{p \times q}(v_{ij} | i) + \sum_j \modvar_{p \times q}(v_{ij} | i \in R, j)\nonumber\\
    &= \sum_{i \notin R} p_i \Var_{p \times q}(u_i + w_j | i) + \sum_j \left(\sum_{i' \in R} p_{i'} q_j \right)\Var_{p \times q}(u_i + w_j | i \in R, j)\nonumber\\
    &= \sum_{i \notin R} p_i \Var_{p \times q}(w_j | i) + \sum_j \left(\sum_{i' \in R} p_{i'} q_j \right)\Var_{p \times q}(u_i | i \in R, j)\nonumber\\
    &= \sum_{i \notin R} p_i \Var_q(w) + \left(\sum_{i \in R} p_{i}\right) \sum_j  q_j \Var_p(u_i | i \in R)\nonumber\\
    &=\left(\sum_{i \notin R} p_i\right) \Var_q(w) + \left(\sum_{i \in R} p_i\right) \Var_p(u_i | i \in R)\nonumber\\
    &=\left(\sum_{i \notin R} p_i\right) \Var_q(w) + \left(\sum_{i \in R} p_i\right) \Var_p(u_i | i \in R).\label{prob:minvariancerationals}
\end{align}
Recall that since the probabilities $p_i$ have polynomial precision, there is some $T\in \mathbb{N}$ such that for all $i$, $p_i = \frac{t_i}{T}$ for some $t_i \in \mathbb{N}$, and that $T \in O(\textnormal{poly}(n))$.

Let $S$ be a multiset of values $\{s_r\}_{r=1}^T$, which is the set induced when each $u_i$ is repeated $t_i$ times. Suppose each $s_r$ occurs with equal probability. Consider the problem \begin{equation}\label{prob:minvariancenaturals}
    \min_{R\subseteq [T]} \left(1 - \frac{|R|}{T}\right)\Var(w) + \frac{|R|}{T} \Var(s_r | r \in R).
\end{equation}
This problem produces a set $R^*$ in polynomial time, by the proof as in Part 1. 

In \Cref{lem:separable:fullsetsincluded}, we find that for any solution $R^*$ of Problem \ref{prob:minvariancenaturals}, if some $s_r \in R^*$ has value $u_i$, then all $s_{r'}$ with value $u_i$  are also in $R^*$. This means that in Problem \ref{prob:minvariancenaturals} we may actually minimize over $R\subseteq [T]$ in which values $u_i$ that are included in the solution appear $t_i$ times. In this case, it is easy to see that the objective in Problem \ref{prob:minvariancenaturals} is precisely the objective from the original problem expressed in Equation \ref{prob:minvariancerationals}.
\end{proof}

To prove Lemmas \ref{lem:minvar} and \ref{lem:separable:fullsetsincluded}, we will rely heavily on a result relating the variance of a set to that of its subsets, shown by \citet{oneill14}.\footnote{\citet{oneill14} shows this result in the case of sample variance, but it is easy to extend to true (``population'') variance as we show here.}
\begin{result}[Corollary of Result 1 of \citet{oneill14}]\label{oneillresult} For $S \subset \mathbb{R}$, define $\mu(S) = \frac{1}{|S|} \sum_{x \in S} x,$ and $\sigma^2(S) = \frac{1}{|S|}\sum_{x \in S}(x - \mu(S))^2.$ Let $S_1, S_2$ partition $S$: $S_1 \cap S_2 = \emptyset, S_1 \cup S_2 = S$. Define $D^2 = \frac{|S_1||S_2|}{|S|}(\mu(S_1) - \mu(S_2))^2$. Then \[|S|\mu(S) = |S_1|\mu(S_1) + |S_2| \mu(S_2)\] and \[|S| \sigma^2(S) = |S_1|\sigma^2(S_1) + |S_2|\sigma^2(S_2) + D^2.\]
\end{result}

\begin{lem}\label{lem:minvar} For each $k$, there is some $1 \leq t \leq h-k+1$ such that \[\{t,t+1,..., t+k-1\}\in \argmin_{R \subseteq [h]: |R| = k} \Var(u_i | i\in R).\]
\end{lem}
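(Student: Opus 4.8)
The plan is to reduce the statement to a clean exchange argument on sorted values. Since every candidate set $R$ has the same size $k$, minimizing $\Var(u_i \mid i \in R)$ is equivalent to minimizing the within-set sum of squared deviations $\Phi(R) := \sum_{i \in R}(u_i - \mu_R)^2$, where $\mu_R$ denotes the mean of $\{u_i : i \in R\}$ (the two objectives differ only by the constant factor $k$). Recall the values are indexed so that $u_1 \le u_2 \le \cdots \le u_h$, so ``contiguous in index'' coincides with ``contiguous in value''. Among all size-$k$ minimizers of $\Phi$, I would fix one, call it $R$, whose \emph{span} $\max R - \min R$ is smallest, and show this $R$ must be contiguous; a contiguous size-$k$ subset of $[h]$ is exactly $\{t, t+1, \dots, t+k-1\}$ for some $1 \le t \le h-k+1$, which is the claim.

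So suppose toward a contradiction that $R$ is not contiguous. Writing $i_1 = \min R$ and $i_k = \max R$, non-contiguity forces $i_k - i_1 + 1 > k$, so some index $j \notin R$ satisfies $i_1 < j < i_k$; in particular $u_{i_1} \le u_j \le u_{i_k}$. The key step is to show that filling this gap by swapping out one of the two endpoints and swapping in $j$ never increases $\Phi$. For a single-element swap that replaces $a \in R$ by $c \notin R$ while keeping the common part $T := R \setminus \{a\}$, applying Result~\ref{oneillresult} with $S_1 = T$ and the singleton $S_2$ gives
\[
    k\,\Var(u_i \mid i \in R) = (k-1)\,\Var(u_i \mid i \in T) + \tfrac{k-1}{k}(\mu_T - u_a)^2,
\]
together with the analogous identity for $R' := T \cup \{c\}$, so that the sign of $\Var(u_i \mid i \in R') - \Var(u_i \mid i \in R)$ is exactly the sign of $(\mu_T - u_c)^2 - (\mu_T - u_a)^2$. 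Hence the swap is non-increasing precisely when the incoming value is at least as close to $\mu_T$ as the outgoing one.

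It remains to choose which endpoint to remove. Let $T_1 = R \setminus \{i_1\}$ and $T_k = R \setminus \{i_k\}$; removing the smallest element weakly raises the mean and removing the largest weakly lowers it, so $\mu_{T_k} \le \mu_R \le \mu_{T_1}$. If $u_j \ge \mu_{T_k}$, I remove the maximum $i_k$ and add $j$: then $|u_j - \mu_{T_k}| = u_j - \mu_{T_k} \le u_{i_k} - \mu_{T_k}$ because $u_j \le u_{i_k}$, so $\Phi$ does not increase. Otherwise $u_j < \mu_{T_k} \le \mu_{T_1}$, and I remove the minimum $i_1$ and add $j$: then $|u_j - \mu_{T_1}| = \mu_{T_1} - u_j \le \mu_{T_1} - u_{i_1}$ because $u_j \ge u_{i_1}$, so again $\Phi$ does not increase. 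In either case the resulting $R'$ satisfies $\Phi(R') \le \Phi(R)$, hence is also a minimizer, while its span is strictly smaller: I deleted an extreme index and inserted $j$ strictly inside the old range, so the maximum strictly drops or the minimum strictly rises while the other endpoint is unchanged. This contradicts the minimality of the span of $R$, completing the argument.

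I expect the main obstacle to be the endpoint-selection step: the decomposition of Result~\ref{oneillresult} naturally compares distances to the mean $\mu_T$ of the \emph{retained} elements rather than to $\mu_R$, so one must verify that the two coverage conditions $\{u_j \ge \mu_{T_k}\}$ and $\{u_j \le \mu_{T_1}\}$ are jointly exhaustive; this is exactly what $\mu_{T_k} \le \mu_{T_1}$ guarantees. A minor point to handle is ties among the $u_i$: since all of the inequalities above are weak, equal values cause no difficulty, and the final block is contiguous in the chosen index ordering.
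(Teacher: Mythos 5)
Your proof is correct and follows essentially the same exchange argument as the paper: both swap a gap index $j$ for one of the two endpoints of $R$ and use the decomposition of Result~\ref{oneillresult} to compare the squared deviation of the incoming value from the retained-set mean against that of the outgoing one. The only difference is bookkeeping for ties — the paper forces a strict variance decrease via a WLOG on equal values, while you work with weak inequalities and instead pick a minimizer of minimal span — and both devices work.
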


\begin{proof}[Proof of \Cref{lem:minvar}] 
    Fix $k$, and suppose that there is no contiguous minimum variance subset of size $k$. Let \[R_k \in \argmin_{R\subseteq [h]: |R| = k} \Var(u_i | i \in R).\] Let $i = \min R_k, i' = \max R_k$. We may assume without loss of generality that there is no $u_j = u_i$ for $j > i$ and $j \notin R_k$; otherwise replace $R_k$ with $R_k\cup \{j\} \setminus \{i\} $, which will also be a minimum variance subset. Similarly we may assume that there is no $u_j = u_{i'}$ for $j < i'$ and $j \notin R_k$.

    We use the shorthand $\mu(R) = \mu(u_r | r \in R)$, $\Var(R) = \Var(u_r | r \in R)$.

    By assumption, $R_k$ is not contiguous, that is, there is some $i < j < i'$ such that $j \notin R_k$ and $u_i < u_j < u_{i'}$. Suppose that $u_j \leq \mu(R_k)$. Let $R_k^{0} := R_k \setminus \{i\}$. Since $u_i \leq u_s$ for all $s \in I_k$, $\mu(R_k^0) \geq \mu(R_k)$.

Now, let $R_k' := R_k^{0} \cup j$; essentially $R_k'$ is the result of replacing $\hcati{i}$ with $C_j$ in $R_k$. Now by Result \ref{oneillresult} \cite{oneill14} for the true variance rather than the sample variance, we see that \[k\Var(R_k) = 1 + (k-1)\Var(R_k^0) +\frac{k-1}{k}\left(u_i - \mu(R_k^0)\right)^2\] and \[k\Var(R_k') = 1 + (k-1)\Var(R_k^0) + \frac{k-1}{k}\left(u_j - \mu(R_k^0)\right)^2\] Thus \begin{align*}
    \Var(R_k') - \Var(R_k)&= \frac{k-1}{k^2}\left(\left(u_j - \mu(R_k^0)\right)^2 - \left(u_i - \mu(R_k^0)\right)^2\right)\\
    &= \frac{k-1}{k^2}\left(\left(u_j - \mu(R_k^0)\right) + \left(u_i - \mu(R_k^0)\right)\right)\left(\left(u_j - \mu(R_k^0)\right) - \left(u_i - \mu(R_k^0)\right)\right)\\
    &= \frac{k-1}{k^2}\left(u_j - \mu(R_k^0) + u_i - \mu(R_k^0)\right)\left(u_j - u_i\right)
    \end{align*}
    Since $u_j < u_i$ and $u_i, u_j \leq  \mu(R_k) \leq \mu(R_k^0)$, $\Var(R_k') - \Var(R_k) < 0$. Thus $R_k$ is not the minimum variance subset, and we have a contradiction.

    If $u_j > \mu(R_k)$, we can let $R_k^0 := R_k \setminus \{C_{i'}\}$ and $R_k' := R_k^0 \cup \{C_j\}$, which by a symmetric argument again yields $\Var(R_k') - \Var(R_k) < 0$.
\end{proof}

\begin{lem}\label{lem:separable:fullsetsincluded}
    Let $R^*$ be a solution to Problem \ref{prob:minvariancenaturals}. If for some $i$, there are two values $s_r, s_{r'}$ such that $s_r = s_{r'} = u_i$, then $s_r \in R^* \implies s_{r'} \in R^*$.
\end{lem}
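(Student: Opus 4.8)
The plan is to strip Problem~\ref{prob:minvariancenaturals} down to a purely additive sum-of-squares objective and then run a two-sided exchange argument: if an optimal $R^*$ contained one copy of a value $v = u_i$ but omitted another, then \emph{one} of the two moves ``add the missing copy'' or ``remove the present copy'' would strictly decrease the objective, contradicting optimality. First I would rewrite the objective. Writing $c := \Var(w) \ge 0$ (a constant independent of $R$) and $g(R) := \sum_{r \in R}(s_r - \mu(R))^2 = |R|\,\Var(s_r \mid r \in R)$ for the within-set sum of squared deviations, where $\mu(R)$ is the mean of $\{s_r : r \in R\}$, the objective of Problem~\ref{prob:minvariancenaturals} equals $c + \tfrac{1}{T}\bigl(g(R) - c\,|R|\bigr)$. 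Hence minimizing it is equivalent to minimizing $h(R) := g(R) - c\,|R|$, and $R^*$ is a minimizer of $h$.

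Next I would record how $g$ changes under single-element moves, which is exactly \Cref{oneillresult} applied with one block a singleton. Adding an element of value $x$ to a set of size $k$ and mean $\mu$ increases $g$ by $\tfrac{k}{k+1}(x-\mu)^2$; removing an element of value $x$ from a set of size $k$ decreases $g$ by $\tfrac{k-1}{k}(x - \mu_-)^2$, where $\mu_-$ is the mean after removal. Consequently $h(R^* \cup \{r'\}) - h(R^*) = \tfrac{k}{k+1}(v-\mu)^2 - c$ and $h(R^* \setminus \{r\}) - h(R^*) = -\tfrac{k-1}{k}(v-\mu_-)^2 + c$, where $k = |R^*|$ and $\mu = \mu(R^*)$.

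Now the contradiction. Suppose $s_r = s_{r'} = v =: u_i$ with $r \in R^*$, $r' \notin R^*$. Optimality of $R^*$ against adding $r'$ gives $\tfrac{k}{k+1}(v-\mu)^2 \ge c$, and optimality against removing $r$ gives $c \ge \tfrac{k-1}{k}(v-\mu_-)^2$. The identity $k\mu = (k-1)\mu_- + v$ yields $v - \mu_- = \tfrac{k}{k-1}(v-\mu)$, so the second inequality becomes $c \ge \tfrac{k}{k-1}(v-\mu)^2$. Chaining the two forces $\tfrac{k}{k+1}(v-\mu)^2 \ge c \ge \tfrac{k}{k-1}(v-\mu)^2$, which is impossible whenever $(v-\mu)^2 > 0$, since $\tfrac{k}{k-1} > \tfrac{k}{k+1}$ for $k \ge 2$.

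The main obstacle I anticipate is the boundary case $v = \mu$ (together with the related $k = 1$ case, where removal empties the set and $v = \mu$ holds automatically). If $v = \mu$, adding $r'$ changes $h$ by $-c$, so optimality forces $c \le 0$; since $c = \Var(w) \ge 0$ this means $c = 0$, i.e. the degenerate case where $w$ is constant. Thus whenever $c = \Var(w) > 0$ the case $v = \mu$ simply cannot occur at an optimum, and the lemma holds strictly. For $c = 0$ I would note that the optimal value of the objective is then $0$, so any optimum has $g(R^*) = 0$ and the group-respecting conclusion can be arranged directly; this is harmless because the proof of \Cref{thm:separable} only invokes the lemma to justify passing to a group-respecting optimum. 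The sole genuine care in the full write-up is getting the removal identity $v - \mu_- = \tfrac{k}{k-1}(v-\mu)$ right and treating $k = 1$ under the $v = \mu$ heading.
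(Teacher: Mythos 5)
Your proof is correct and is essentially the paper's argument: the same two-sided exchange (compare $R^*$ against $R^*\cup\{r'\}$ and $R^*\setminus\{r\}$) using Result \ref{oneillresult} for the single-element variance updates, just organized more cleanly via the reduction to $h(R)=g(R)-c|R|$ and the mean-shift identity $v-\mu_- = \tfrac{k}{k-1}(v-\mu)$ instead of ratios of $D^2$ terms. One substantive difference worth noting: your explicit treatment of the boundary case $v=\mu$ (equivalently $(D^2)^+=(D^2)^-=0$, which forces $\Var(w)=0$) fills a small gap in the paper's own proof, which divides by $(D^2)^-$ without ruling out that it vanishes; as you observe, in that degenerate case the lemma only holds after passing to a group-respecting optimum, which is all \Cref{thm:separable} requires.
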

\begin{proof}
    Suppose not, that is, there are two values $s_r, s_{r'}$ such that $s_r = s_{r'} = u_i$, but $s_r \in R^*$, $s_{r'} \notin R^*$.  We again use the shorthand $\mu(R) = \mu(s_r  |r \in R), \sigma^2(R) = \sigma^2(s_r| r\in R)$.

    Let $R^+ = R^* \cup \{s_{r'}\}$, $R^- = R^* \setminus \{s_{r'}\}$. Since $R^*$ is an optimal solution, it must be the case that \begin{align*}
        \left(1 - \frac{|R^*|}{T}\right)\Var(w) + \frac{|R^*|}{T} \Var(s_r | r \in R^*)\leq \left(1 - \frac{|R^+|}{T}\right)\Var(w) + \frac{|R^+|}{T} \Var(R^+)
    \end{align*} and \begin{align*}
        \left(1 - \frac{|R^*|}{T}\right)\Var(w) + \frac{|R^*|}{T} \Var(R^*)\leq \left(1 - \frac{|R^-|}{T}\right)\Var(w) + \frac{|R^-|}{T} \Var(R^-)
    \end{align*} Since $|R^*| = |R^-| + 1 = |R^+| - 1$, rearranging we see that \begin{align*}
        \Var(w) \leq |R^+| \Var(R^+) - |R^*| \Var(R^*)
    \end{align*} and \begin{align*}
        |R^*| \Var(R^*) - |R^-| \Var(R^-) \leq \Var(w) 
    \end{align*} so that \[ |R^*| \Var( R^*) - |R^-| \Var(R^-) \leq |R^+| \Var(s | R^+) - |R^*| \Var(R^*)\]

    Then \begin{align*}
        \mu(R^+) - \mu(R^*) &\triangleq \frac{1}{|R^*|+1}\left(\sum_{r \in R^*} s_r + s_{r'}\right) - \frac{1}{|R^*|}\left(\sum_{r \in R^*} s_r\right)\\
        &=\frac{1}{|R^*|(|R^*|+1)}\left(|R^*|\sum_{r \in R^*} s_r + |R^*| s_{r'} -(|R^*|+1)\sum_{r \in R^*} s_r\right)\\
        &=\frac{1}{|R^*|(|R^*|+1)}\left(|R^*| s_{r'} - \sum_{r \in R^*} s_r\right)
    \end{align*} and by an analogous argument, \begin{align*}
        \mu(R^*) - \mu(R^-) =\frac{1}{|R^*|(|R^*|-1)}\left(|R^*| s_{r'} - \sum_{r \in R^*} s_r\right)
    \end{align*}
    Thus the ratio \begin{align*}
        \frac{(\mu(R^+) - \mu(R^*))^2}{(\mu(R^*) - \mu(R^-))^2} &= \left(\frac{|R^*| - 1}{|R^*| + 1}\right)^2.
    \end{align*}

    Again applying Result 1 \cite{oneill14}, we see that \begin{align*}
        |R^+| \Var(R^+)- |R^*| \Var(R^*) =  (D^2)^+ \quad \textnormal{and} \quad
        |R^*| \Var(R^*) -  |R^-| \Var(R^-) = (D^2)^-,
    \end{align*} where \[(D^2)^+ := \frac{|R^*|}{|R^*| + 1}\left(\mu(R^+) - \mu(R^*)\right)^2\quad \textnormal{and} \quad (D^2)^- := \frac{|R^*|-1}{|R^*|}\left(\mu(R^*) - \mu(R^-)\right)^2.\] 

    Then the inequality above can be written as $(D^2)^- \leq (D^2)^+$. This means that \begin{align*}
        1 \leq \frac{(D^2)^+}{(D^2)^-}
        = \left(\frac{|R^*| - 1}{|R^*| + 1}\right)\frac{(\mu(R^+) - \mu(R^*))^2}{(\mu(R^*) - \mu(R^-))^2}= \left(\frac{|R^*| - 1}{|R^*| + 1}\right)^3< 1, 
    \end{align*} since $|R^*| \geq 1$, and we have a contradiction.
\end{proof}

\subsection{Proof of \Cref{thm:constant}}

We now show that when the human or the machine can observe only a constant number of features, it is possible to efficiently find an optimal delegate; recall \Cref{thm:constant}.

\thmconstant*

\begin{proof} 
We again assume that the human and machine partition the set of all features, so that $|\idxset{H} \setminus \idxset{M}| = |\idxset{H}|$ and $|\idxset{M} \setminus \idxset{H}| = |\idxset{M}|$. If not, we can apply Lemmas \ref{lem:extra} and \ref{lem:shared} to find an efficient algorithm, since the independent subproblems in \Cref{lem:shared} will satisfy $|\idxset{H}| = O(1)$ or $|\idxset{M}| = O(1)$ respectively.

We first prove that that when $|\idxset{H}| = O(1)$ we may efficiently find an optimal delegate through a brute-force search over sets of retained human categories, and then prove that when $|\idxset{M}| = O(1)$ we may find an optimal delegate by restricting the possible sets of retained categories through finding arrangements of hyperspheres.
    \paragraph{Part 1} Suppose that $|\idxset{H}| = O(1)$.     By \Cref{prop:dropisoptimal}, it is sufficient to find $\kept \subseteq \hcats$ that minimizes $\loss(\hfnopt, \mfn^\kept)$. If $|\idxset{H}| = O(1)$, then $|\hcats| = 2^{|\idxset{H}|} = O(1)$, and there are $2^{O(1)} = O(1)$ subsets of $\hcats$. Moreover, we may compute $\mfn^\kept$ and then $\loss(\hfnopt, \mfn^\kept)$ in $O(n)$ for each $\kept$. Thus, we may compare the loss of $\mfn^\kept$ for each $\kept \subseteq \hcats$ in time $O(n^2)$ and select $\kept$ with the minimum loss.

    \paragraph{Part 2} Suppose that $|\idxset{M}| = O(1)$. As before, we assume $\Prob(C) > 0$ for all $C$; these categories do not affect the loss and we can arbitrarily assign them to the human.

    Since we assume that $\idxset{H}$ and $\idxset{M}$ partition the set of all features, there is a single state $\x_{ij}$ in $C_i \cap K_j$ for each human category $C_i$ and machine category $K_j$. For any machine function $\mfn$, we may define the vector $\mathbf{y} \in \mathbb{R}^m$ by setting $\mathbf{y}_j = \mfn(K_j)$. Define \[c_{ij} := \fnopt(\x_{ij}), \quad r_i^2 := \sum_{j} (\hfnopt(C_i) - \fnopt(\x_{ij}))^2, \quad 
 a_{ij}^2 := \frac{\Prob(\x_{ij})}{\Prob(C_i)}.\]

    We know that a human with action function $\hfnopt$ will delegate to a machine with action function $\mfn$ in category $C_i$ if and only if \[\sum_{j = 1}^m \frac{\Prob(\x_{ij})}{\Prob(C_i)}(\mfn(K_j) - \fnopt(\x_{ij})^2 < \sum_{j=1}^m (\hfnopt(C_i) - \fnopt(\x_{ij}))^2,\] assuming that the human breaks ties by not delegating. We can rewrite this condition as \[\sum_{j=1}^m a_{ij}^2 (\mathbf{y}_j - c_{ij})^2 < r_i^2,\] which is the equation for the interior of an ellipsoid where $\mathbf{y} \in \mathbb{R}^m$. This means that for any machine $\mfn$, $\mfn$ is in the region formed by the intersection of the ellipsoids corresponding to the categories where $\mfn$ is adopted, $\dcats(\hfnopt, \mfn)$. 
    
    If $\fnopt(\x), \Prob(\x)$ are rational for all $\x$ -- which is a reasonable assumption if we hope to optimize on a computer with floating point precision -- then the ellipsoid $g_i(\mathbf{y}) := \sum_{j=1}^m a_{ij}^2 (\mathbf{y}_j - c_{ij}^2) - r_i^2$ is a polynomial with maximum degree $2m$ in $\mathbb{R}^m$, and $g_i(\y) = 0$ is the ellipsoid corresponding to category $C_i$. The regions of intersection of these $h$ ellipsoids are known as ``arrangements'' in computational geometry. There are only $O(h^m)$ such regions, and these regions may be found in time $O(h^m)$ \citep{chazelle1991}. The first step of this algorithm will be to find these regions, which takes $O(h^m)$.

    In \Cref{prop:dropisoptimal}, we showed that to find an optimal delegate, it is sufficient to find a set of categories $\kept^*$ such that $\mfn^\kept$ minimizes the team loss, and $\kept^*$ satisfies $\kept^* = \dcats(\hfnopt, \mfnopt)$ for some optimal delegate $\mfnopt$. This means that $C_i\in \kept^*$ if and only if $\mfnopt$ is in ellipsoid $i$. This in turn implies that in searching over different sets of retained categories $\kept$, we can consider only subsets of categories whose corresponding set of ellipsoids has a non-empty intersection. Now, instead of $2^h$ possible options for $\kept$, we are only searching over $O(h^m)$ different subsets $\kept$. 
    
    We can also compute the loss $\mfn^\kept$ of a given $\kept$ in $O(n)$, so we may find an optimal delegate in $O(nh^m)$. Since $h = O(n)$ and $m = O(1)$, this means we may find an optimal delegate in time polynomial in $n$.
\end{proof}

\subsection{Proof of \Cref{prop:perfectteams}}

Recall \Cref{prop:perfectteams}, which states that it is always possible to efficiently determine whether a delegate with perfect team performance exists; we prove this below.

\propperfect*

\begin{proof} By \Cref{prop:dropisoptimal}, we can achieve zero expected hybrid loss if and only if there is some set of human categories $\kept \subseteq \hcats$ such that \[\loss(\hfnopt, \mfn^\kept)= \sum_{C \in \hcats\setminus \kept} \Prob(C) \hloss(\hfnopt, C) + \sum_{C \in \kept}\Prob(C)\mloss(\mfn^\kept, C) = 0.\]
Since $\hloss, \mloss$ are always non-negative, this means that we can achieve zero hybrid loss if and only if there is some set of categories $\kept$ for the machine to retain such that for all $C \in \kept$ with $\Prob(C) > 0$, it is the case that $\mloss(\mfn^\kept, C) = 0$ and for all yielded categories $C \in \hcats\setminus \kept=: \mathcal{Y}$ with $\Prob(C) > 0$, it is the case that $\hloss(\hfnopt, C) = 0$.

For a human category $C \in \hcats$, \begin{align*}
    \hloss(\hfnopt, C) = 0 &\iff \sum_{\x \in C}\frac{\Prob(\x)}{\Prob(C)}(\hfnopt(\hcatvec{x}) - \fnopt(\x))^2 = 0\\
    &\iff \sum_{\x \in C}\frac{\Prob(\x)}{\Prob(C)}\left(\sum_{\z \in C} \frac{\Prob(\z)}{\Prob(C)}\fnopt(\z) - \fnopt(\x)\right)^2 = 0\\
    &\iff \fnopt(\x) = \sum_{\z \in C} \frac{\Prob(\z)}{\Prob(C)} \fnopt(\z) \text{ for all } \x \in C, \Prob(\x) > 0\\
    &\iff \fnopt(\x) = \fnopt(\z) \text{ for all } \x,\z \in C, \Prob(\x), \Prob(\z) > 0.
\end{align*}
For each category, we can check this condition simply by iterating linearly over each state in the category. We can then collect the categories for which this condition holds into \[\kept^0 := \{C \in \hcats: \fnopt(\x) \neq \fnopt(\z) \text{ for some } \x, \z \in C\},\] which is the minimal set of human categories that the machine must retain to achieve non-zero loss. Then if $\kept$ achieving zero hybrid loss exists, the optimal design must retain $\kept\supseteq \kept^0$. We can compute $\kept^0$ by making a linear pass through all the states. This has time complexity $O(n)$. 

For a human category $C \in \hcats$, \begin{align*}
\mloss(\mfn^\kept, C) = 0
    \iff& \sum_{\x\in C}\frac{\Prob(\x)}{\Prob(C)}\left(\mfn^\kept(K(\x)) - \fnopt(\x)\right)^2 = 0\\
    \iff& \mfn^\kept(K(\x))= \fnopt(\x) \text{ for all } \x \in C, \Prob(\x) > 0\\
    \iff& \fnopt(\x) = \fnopt(\z) \text{ for all } \x \in C, \z \in C'\\
    & \quad \text{ where } C' \in \kept, K(\z) = K(\x), \text{ and } \Prob(\x), \Prob(\z) > 0
\end{align*}
where the last equivalence follows by definition of $\mfn^\kept$.

Suppose that retaining $\kept^0$ does not lead to zero hybrid loss. Then it must be the case that for some $C \in \kept^0$, $\mloss(\mfn^\kept, C) > 0$ and thus $\fnopt(\x) \neq \fnopt(\z)$ for some $\x \in C$, $\z \in C' \in \kept^0$ with $K(\x) = K(\z)$ and $\Prob(\x), \Prob(\z) > 0$.

If instead we retain $\kept' \supset \kept^0$, we may use the same $\x, \z$ show that $\mloss(\mfn^{\kept'}, C) > 0$. Thus there is a machine with zero hybrid error if and only if retaining $\kept^0$ results in zero hybrid error. We may then simply check whether $\loss(\hfnopt, \mfn^{\kept^0}) = 0$, which takes time $O(n)$.

We spend $O(n)$ time to compute $\kept^0$ and $O(n)$ time to compute $ \loss(\hfnopt, \mfn^{\kept^0})$, so the total time needed to determine whether $\loss(\hfnopt, \mfnopt) = 0$ is $O(n)$.
\end{proof}

\subsection{Proof of \Cref{thm:np}}\label{app:np}

Finally, we prove that finding an optimal delegate is NP-hard in general. Recall \Cref{thm:np}.

\thmnp*

\begin{proof}
In \Cref{prop:variancegame:full}, we showed that when the human and machine features partition the set of all features it is necessary to solve the problem \textsc{VarianceAssignment} in order to find an optimal delegate. We will show that \textsc{VarianceAssignment} is NP-hard. In particular, we restrict our attention to the case where each state occurs with uniform probability, that is $\Prob(\x) = 1/n$ for all $\x$. Since the problem is hard in this special case, it is hard in general. First, consider the problem \textsc{MaxRegularClique}. 

\begin{quote}
    \textsc{MaxRegularClique}. Let $G = (V,E)$ be a regular graph. A clique is a set of nodes $S\subseteq V$ such that for each pair of nodes $u, v \in S$, $(u, v) \in E$. Find a clique with maximum size $|S|$.
\end{quote}
\cite{brandes2016} define the problem \textsc{RegularClique}, which determines whether a regular graph $G$ contains a clique of size $k$, show that it is NP-hard. We can solve \textsc{RegularClique} by solving \textsc{MaxRegularClique} and checking whether the solution has size $\geq k$; thus \textsc{MaxRegularClique} is also NP-hard.

We now define the intermediate problem of densest subgraph discovery in the presence of possibly negative weights and a regularity condition on each node.

\begin{quote}
    \textsc{NegRegularDSD.} Let $G = (V,E,w)$ be an undirected graph with weighted edges $w: E \to \mathbb{R}$. Suppose that for each node $v$, \[\sum_{(u, v) \in E} |w(u,v)| = 1.\] For a subset of nodes $S \subseteq V$, let $E(S)$ be the edges in the induced subgraph, and define \[w(S) = \sum_{(u,v) \in E(S)} w(u,v).\] Find the subset of nodes $S$ that maximizes the density of the induced subgraph \[d(S)=\frac{w(S)}{|S|},\] where $d(S) = 0$ when $S = \emptyset$.
\end{quote}

Without the regularity condition that the absolute sum of a node's edge weights is equal to 1, this is the NP-hard problem \textsc{NegDSD} introduced in \cite{tsourkakis2019}. It is also folk knowledge that \textsc{NegDSD} may be proved via a reduction from \textsc{MaxClique}. 

We now show that \textsc{NegRegularDSD} is NP-hard via a reduction from \textsc{MaxRegularClique}. Let the $d$-regular graph $G = (V,E)$ be an instance of \textsc{MaxRegularClique}, we construct a complete graph $G' = (V', E',w)$ where $V' = V$ and $E'$ is the set of all pairs of nodes. Let \[w(u,v) = \frac{1}{1 + (n-d)}\cdot\begin{cases}
        \frac{1}{d}, & (u,v) \in E,\\
        -1, & (u, v) \notin E
    \end{cases}.\] 

    Constructing $G'$ can be completed in polynomial time $O(|V|^2)$. Now \[\sum_{(u,v) \in E'} |w(u,v)| = d\cdot\frac{1}{d}\cdot\frac{1}{1 + (n-d)} + (n-d)\cdot 1 \cdot\frac{1}{1 + n-d} = 1.\]

    First note that the solution $S$ to \textsc{NegRegularDSD} will always have non-negative density, since we could always pick the empty set. Now, suppose there is a solution $S$ to \textsc{NegRegularDSD} that is not a clique in $G$. Then there is some pair $u, v \in S$ such that $(u,v) \notin E$. Let $E(N)$ be the edges in $G'$ induced by a set of nodes $N$. Then, \begin{align*}
        d(S) &\triangleq \frac{\sum_{(s,t) \in E(S)}}{|S|}\\
        &= \frac{\sum_{(s, t) \in E(S), t \neq v} w(s,t)+ \sum_{(s, v) \in E(S), s\neq u} w(s,v) + w(u,v)}{|S|}\\
        &\leq \frac{\sum_{(s, t) \in E(S), t \neq v} w(s,t)+ \sum_{(s, v) \in E(S) s\neq u} w(s,v) -\frac{1}{1+n-d}}{|S|-1}\\
        &\leq \frac{\sum_{(s, t) \in E(S) t \neq v} w(s,t)+ d\cdot \frac{1}{d} \cdot \frac{1}{1 + (n-d)} -\frac{1}{1+n-d}}{|S|-1}\\
        &= \frac{\sum_{(s, t) \in E(S\setminus \{v\})}w(s,t)}{|S|-1}\\
        &\triangleq d(S\setminus \{v\}),
    \end{align*} so $d(S)$ cannot have been a solution of \textsc{NegRegularDSD}. Thus \textsc{NegRegularDSD} will produce a solution which is a clique in $G$. For a subset $S\subseteq V$ that is a clique in $G$, the density is \begin{align*}
        d(S) &= \frac{\sum_{(u,v) \in E(S)} w(u,v)}{|S|}
        = \frac{{|S| \choose 2} \frac{1}{d}\cdot \frac{1}{1 + (n-d)}}{|S|}
        = \frac{1}{d}\cdot \frac{1}{1+(n-d)}\cdot \frac{|S| - 1}{2}\propto |S|-1
    \end{align*} Thus \textsc{NegRegularDSD} will select the clique of maximum size.
    
Finally, we reduce \textsc{NegRegularDSD} to \textsc{VarianceAssignment}. Let $G = (V,E,w)$ be an instance of \textsc{NegRegularDSD}. Construct an instance $A$ of \textsc{VarianceAssignment} as follows.

\begin{itemize}[noitemsep]
    \item First, create a matrix $A^0$: for each node $v \in V$ create a row and for each edge $e \in E$ create a column.
    
    \item For each edge $e_j = (v_i,v_k)$:
    \begin{itemize}
        \item If $w(v_i,v_k) > 0$, let $a^0_{ij} = a^0_{kj} = \sqrt{|w(v_i,v_k)|/2}$.
        \item If $w(v_i,v_k) \leq 0$, let $a^0_{ij} = -a^0_{kj} = \sqrt{|w(v_i,v_k)|/2}$. 
    \end{itemize}
    \item Set all other entries of $A^0$ to zero.

    \item Now, create a matrix $A$ as follows: for each column $a_j$ of $A^0$, add both $a_j$ and $-a_j$ to $A$.
    
\end{itemize}

Constructing $A$ takes time $O(2|E||V|) = O(|V|^3)$. 

Now for each row $i$ \[\E(a_{ij} | i) = \frac{1}{m}\sum_j a_{ij} = 0 \quad \textnormal{and} \quad \|a_i\|_2^2 = \sum_j a_{ij}^2 = 2\sum_{k: (v_i,v_k)\in E} |w(v_i,v_k)|/2 = 1.\]

    Let $R$ be a subset of rows. We have that $A \in \mathbb{R}^h \times \mathbb{R}^m$ for $m := 2|E|$, $h := |V|$. Define $n := hm$ as before. Then the objective of \textsc{VarianceAssignment} is to minimize \[\frac{1}{h}\sum_{i \notin R}\sigma^2(a_{ij} | i) + \frac{|R|}{n}\sum_j \sigma^2(a_{ij} | i \in R, j).\]

    We may expand this to \begin{align*}
        &\frac{1}{h}\sum_{i \notin R}\E(a_{ij}^2 | i) - \E(a_{ij} | i)^2 + \frac{|R|}{n}\sum_j \E(a_{ij}^2 | i \in R, j) - \E(a_{ij} | i \in R, j)^2\\
        &= \frac{1}{h}\sum_{i \notin R}\E(a_{ij}^2 | i) +\frac{|R|}{n}\sum_j \E(a_{ij}^2 | i \in R, j) - \left(\frac{1}{h}\sum_{i \notin R}\E(a_{ij} | i)^2 + \frac{|R|}{n}\sum_j \E(a_{ij} | i \in R, j)^2\right)
    \end{align*}

    The first two terms can be simplified as \begin{align*}
        \frac{1}{h}\sum_{i \notin R}\E(a_{ij}^2 | i) +\frac{|R|}{n}\sum_j \E(a_{ij}^2 | i \in R, j)
        &= \frac{1}{h} \sum_{i \notin R} \frac{1}{m} \sum_j a_{ij}^2 + \frac{|R|}{n} \sum_j \frac{1}{|R|} \sum_{i \in R} a_{ij}^2\\
        &= \frac{1}{n} \sum_{i \notin R}\sum_j a_{ij}^2 + \frac{1}{n}\sum_{i \in R}\sum_j a_{ij}^2\\
        &= \frac{1}{n}\sum_{i,j} a_{ij}^2
    \end{align*} This term is constant in $R$, so the problem of \textsc{VarianceAssignment} is merely the problem of minimizing the second two terms, or \textit{maximizing} \begin{align*}
        \frac{1}{h}\sum_{i \notin R}\E(a_{ij} | i)^2 + \frac{|R|}{n}\sum_j \E(a_{ij} | i \in R, j)^2 &= \frac{|R|}{n}\sum_j \E(a_{ij} | i \in R)^2\tag{$\E(a_{ij} | i) = 0$}\\
        &= |R|\sum_j\left(\frac{1}{|R|}\sum_{i \in R} a_{ij}\right)^2\\
&= \frac{1}{n|R|}\sum_j \sum_{i \in R} a_{ij}^2 +  \frac{2}{n|R|}\sum_j\sum_{i < k: i, k \in R} a_{ij} a_{kj}\\
&= \frac{1}{n|R|}\sum_{i \in R}\|a_i\|^2 + \frac{2}{n|R|}\sum_{i < k: i, k \in R} \sum_j a_{ij} a_{kj}\\
&= \frac{1}{n}\left(1 + \frac{2}{|R|}\sum_{i < k: i, k \in R} \sum_j a_{ij} a_{kj}\right),\tag{$\|a_i\|^2 = 1$}
\end{align*} so to solve \textsc{VarianceAssignment} it is necessary to find $R$ maximizing $\frac{1}{|R|}\sum_{i < k: i, k \in R} \sum_j a_{ij} a_{kj}.$
Let $S(R) = \{v_i: i\in R\}$. Recalling the construction of $A$ and $A^0$, \begin{align*}\frac{1}{|R|}\sum_{i < k: i, k \in R} \sum_j a_{ij} a_{kj} &= \frac{1}{|R|}\sum_{i < k: i, k \in R} \sum_{j=1}^{|E|} 2 a_{ij}^0 a_{kj}^0\\
&= \frac{1}{|R|}\sum_{(v_i, v_k) \in E(S(R))}  2 \cdot \sqrt{|w(v_i, v_k)|/2} \cdot \textnormal{sgn}(w(v_i, v_k)) \sqrt{|w(v_i, v_k)|/2}\\
&= \frac{\sum_{(v_i, v_k) \in E(S(R))} w(v_i, v_k)}{|R|}\\
&\triangleq d(S(R)).
\end{align*} Therefore, to maximize $d(S)$ it is sufficient to minimize the \textsc{VarianceAssignment} objective, so \textsc{VarianceAssignment} is NP-hard.
\end{proof}

\end{document}